\documentclass[conference]{IEEEtran}
\usepackage{cite}
\usepackage{amsmath,amssymb,amsthm}
\usepackage{algorithmic}
\usepackage{graphicx}
\usepackage{textcomp}
\usepackage{xcolor}
\usepackage{circledsteps}
\usepackage{tikz}
\usepackage{url}
\usepackage{mathabx}
\usepackage{tabularx}
\usepackage[capitalize]{cleveref}
\usepackage{csquotes}
\usepackage{tlatex}
\usepackage{pict2e}
 % remove proof environment of amsthm
 % remove qed command of amsthm
\usepackage{pf2}
\def\BibTeX{{\rm B\kern-.05em{\sc i\kern-.025em b}\kern-.08em
    T\kern-.1667em\lower.7ex\hbox{E}\kern-.125emX}}
\usepackage{hyphenat}
\hyphenation{Ideal-Channel HTLC-User Payment-Channel-User Multi-Hop-Mock LedgerTime Mapped-Ledger-Time}

\newcommand{\partitle}[1]{}
\newcommand{\sstep}[2]{\langle #1, #2 \rangle}

\newcommand{\arstep}[2]{#1 \rightarrow #2}
\newcommand{\astep}[3]{#1 \xrightarrow{#3} #2}

\newtheorem{assumption}{Assumption}

% from https://tex.stackexchange.com/a/509207
\makeatletter
\newcommand{\bigcomp}{%
  \DOTSB
  \mathop{\vphantom{\sum}\mathpalette\bigcomp@\relax}%
  \slimits@
}
\newcommand{\bigcomp@}[2]{%
  \begingroup\m@th
  \sbox\z@{$#1\sum$}%
  \setlength{\unitlength}{0.9\dimexpr\ht\z@+\dp\z@}%
  \vcenter{\hbox{%
    \begin{picture}(1,1)
    \bigcomp@linethickness{#1}
    \put(0.5,0.5){\circle{1}}
    \end{picture}%
  }}%
  \endgroup
}
\newcommand{\bigcomp@linethickness}[1]{%
  \linethickness{%
      \ifx#1\displaystyle 2\fontdimen8\textfont\else
      \ifx#1\textstyle 1.65\fontdimen8\textfont\else
      \ifx#1\scriptstyle 1.65\fontdimen8\scriptfont\else
      1.65\fontdimen8\scriptscriptfont\fi\fi\fi 3
  }%
}
\makeatother
% end from https://tex.stackexchange.com/a/509207

\pagestyle{plain} % this adds page numbers

% shade comments in tlatex
\usepackage{color}
\definecolor{boxshade}{gray}{.9}
\setboolean{shading}{true}

\newtheorem{theorem}{Theorem}
\newtheorem{lemma}[theorem]{Lemma}
\newtheorem{definition}[theorem]{Definition}

\begin{document}

\title{Model Checking the Security of the Lightning Network
}

\author{\IEEEauthorblockN{Matthias Grundmann, Hannes Hartenstein}
    \IEEEauthorblockA{
    \textit{KASTEL Security Research Labs}\\
    \textit{Karlsruhe Institute of Technology (KIT)}\\ 
    Karlsruhe, Germany}
}

\maketitle

\thispagestyle{plain} % this adds page numbers

\begin{abstract}
Payment channel networks are an approach to improve the scalability of blockchain-based cryptocurrencies.
The Lightning Network is a payment channel network built for Bitcoin that is already used in practice.
Because the Lightning Network is used for transfer of financial value, its security in the presence of adversarial participants should be verified.
The Lightning protocol's complexity makes it hard to assess whether the protocol is secure.
To enable computer-aided security verification of Lightning, we formalize the protocol in TLA\textsuperscript{+} and formally specify the security property that honest users are guaranteed to retrieve their correct balance.
While model checking provides a fully automated verification of the security property, the state space of the protocol's specification is so large that model checking becomes unfeasible.
We make model checking the Lightning Network possible using two refinement steps that we verify using proofs.
In a first step, we prove that the model of time used in the protocol can be abstracted using ideas from the research of timed automata.
In a second step, we prove that it suffices to model check the protocol for single payment channels and the protocol for multi-hop payments separately.
These refinements reduce the state space sufficiently to allow for model checking Lightning with models with payments over up to four hops and two concurrent payments.
These results indicate that the current specification of Lightning is secure.
\end{abstract}

\section{Introduction}
\label{sec-introduction}

Blockchain-based cryptocurrencies do not scale well with respect to their transaction throughput.
One approach to improve said scalability are Payment Channel Networks -- a second layer on top of a blockchain that processes payments without writing a transaction for each payment to the blockchain.
A \emph{payment channel} between two users is opened by publishing one transaction on the underlying blockchain.
Once a payment channel is open, it allows for performing an unlimited number of payments between its two users without writing to the blockchain.
Finally, a payment channel is closed by publishing a second transaction on the blockchain.
In a payment channel \emph{network}, the participating users are connected by payment channels and can perform multi-hop payments using a path between a payment's sender and recipient over a set of payment channels.
The first proposed payment channel network is the Lightning Network \cite{poon_bitcoin_2016} built on Bitcoin \cite{nakamoto_bitcoin:_2008} as the underlying layer. The Lightning Network is already used in practice and experiences a rising adoption \cite{barbaravicius_year-over-year_2024}.

Our goal is to verify that Lightning, the Lightning Network's protocol, is secure, i.e., an honest user finally retrieves on the blockchain the user's correct balance in the payment channel even if other users do not cooperate or are actively malicious.
A critical part in a payment channel protocol is the closing of a payment channel:
The balance that a user finally retrieves is determined by the transactions that are written to the blockchain during the closing of the payment channel.
A user can only be guaranteed to retrieve the user's balance from the payment channel if the user is able to close the payment channel without cooperation of the other party.
Lightning ensures that a user can close a payment channel independently of the other party by asserting that a user always has a valid closing transaction that could be published on the blockchain.
After a payment has been performed in a payment channel, previously valid closing transactions become outdated but these transactions could still be published on the blockchain.
To disincentivize a dishonest user Alice from publishing an outdated closing transaction, the other user, Bob, receives revocation secrets when the closing transaction is outdated.
These revocation secrets enable Bob to punish Alice by retrieving not only his assets but also the assets of Alice if Alice would publish an outdated closing transaction.

The revocation mechanism as well as Lightning's reliance on time and the number of involved parties make it difficult to assess whether Lightning actually fulfills the security property.
Such an assessment might be facilitated by computer-aided methods.
In particular, model checking can automatically verify that a protocol fulfills a property and provide a counterexample if the checked property does not hold for the protocol under investigation.
Lightning is defined by an official specification \cite{web-lightning-bolts} that describes all aspects of the protocol and partially the intuition behind the protocol.
The specification is not directly usable for a security analysis because the specification contains many implementation details and is not formalized.
To enable the use of model checking for Lightning, we contribute a \emph{specification of Lightning} in the formal language TLA\textsuperscript{+} \cite{lamport_temporal_1994,lamport_specifying_2002} that formalizes all protocol steps and messages that users send during opening, updating, and closing a payment channel as well as for multi-hop payments.
In particular, the specification models the revocation mechanism including how users exchange revocation secrets, build closing transactions, and find and react to outdated closing transactions.
We chose TLA\textsuperscript{+} because, as a general purpose language, TLA\textsuperscript{+} allows for specifying arbitrary protocols
and because there are tools supporting different verification methods, e.g. random state space exploration (simulation), explicit-state model checking \cite{web-tlc}, symbolic model checking \cite{web-apalache}, and theorem proving \cite{web-tlaps}.
We also contribute a \emph{specification of the security property} of a payment channel network by defining a secure payment system in TLA\textsuperscript{+}.
The security property formally defines the intuitive property that all honest users have received their correct balances when the protocol terminates.
Our security model allows parties of the protocol to become adversarial which means in our model that they omit sending messages or publishing transactions that are required to be sent by the protocol or that they publish transactions on the blockchain other than specified by the protocol.
Limitations of our adversary model are that an adversarial user does not send arbitrary messages and that adversarial users do not exchange information.

However, the state space of the specification of Lightning is so large that model checking is unfeasible.
We make model checking the Lightning Network possible using two refinement steps that we verify by proofs.
In a first step, we prove that the model of time used in the protocol can be abstracted using ideas from the research of timed automata \cite{alur_automata_1990}.
In a second step, we prove that it suffices to model check the protocol for single payment channels and the protocol for multi-hop payments separately.
These refinements reduce the state space sufficiently to allow for model checking Lightning with the explicit-state model checker TLC \cite{web-tlc}.
We use TLC to fully explore the state space of models with payments over up to four hops and with two concurrent payments.
To check the models also for larger scenarios as well as the whole  stepwise refinement, we use simulation which is a lightweight alternative to model checking in which not the complete state space but only random behaviors are explored.
While our approach does not give comprehensive formal correctness guarantees, it gives confidence that the current specification of Lightning is secure.
We leave writing a complete formal proof and verifying all refinements using a theorem prover for future work.

We describe Lightning in more detail and give an introduction to TLA\textsuperscript{+} in \cref{sec-fundamentals}.
Related work follows in \cref{sec-related-work}.
Our approaches for the specification of Lightning and for the specification of the security property are presented in in \cref{sec-lightning} and \cref{sec-security-property}.
In \cref{sec-proof-sketch}, we explain our approach for verifying that Lightning fulfills the security property and sketch the ideas behind each abstraction step.
We present the results of model checking in \cref{sec-model-checking-results} and discuss limitations of our approach and ideas for future work in \cref{sec-discussion}.
\Cref{sec-conclusion} concludes the paper.

\section{Fundamentals}
\label{sec-fundamentals}

In this section, we introduce Lightning and TLA\textsuperscript{+}.
We do not present all the details of Lightning that are part of our TLA\textsuperscript{+} formalization, but aim to give an overview of the main ideas of Lightning that contribute to the protocol's complexity.

\subsection{Payment Channels: Single-Hop Payments}

A payment channel is a protocol between two users that enables these two users to deposit coins into the payment channel during opening, to perform payments between the two users by updating the payment channel, and to retrieve their final funds by closing the payment channel.
The protocol has to guarantee that its users can close the channel at every point in the execution of the protocol to retrieve their current balance, independently of the cooperation of the other user. Even with an actively malicious channel partner, the protocol has to ensure that an honest user cannot lose funds as long as the user actively monitors the underlying blockchain and reacts to malicious closing attempts.

A payment channel is opened in Lightning by creating a funding transaction.\footnote{See BOLT 2 and BOLT 3 \cite{web-lightning-bolts}.}
The funding transaction has an input spending an output from the funding user (funder)\footnote{Until recently, Lightning supported only single-funded channels, i.e. only one user deposits coins into the channel during opening. Our work, therefore, considers only single-funded channels (see \cref{sec-discussion}).}
and the funding transaction has a multi-signature output that is spendable only by the two users in the channel together.
Just publishing the funding transaction on the blockchain would create a dependence of the funder on the other user for spending the funding transaction's output as the funding transaction's output can only be spent by the two users together.
To prevent such a dependence, an initial commitment transaction that spends the funding transaction's output is created by the two users and the non-funding user sends their signature for the initial commitment transaction to the funder who only publishes the funding transaction after receiving this signature.
Commitment transactions are the `main vehicle' of Lightning. They are cryptographically signed by both parties and encode the state of the payment channel and can be executed on the underlying blockchain.
A commitment transaction has at least two outputs: For each user, there is one output that has an amount that corresponds to the user's current balance and is redeemable only by this user.
In the initial commitment transaction, all funds are spendable by the funder.\footnote{This is a simplification; Lightning's specification allows the funder to send a small amount to the non-funding user already in the initial commitment transaction (see \cite[\texttt{push\_msat}]{web-lightning-bolts}).}

For a payment from one user to the other user in the channel, an HTLC (Hash Timelocked Contract) is added to the channel.
An HTLC is a contract that encodes the agreement that the recipient receives a specified amount if the recipient proves knowledge of a preimage to a specified hash before a specified time has passed.
The process to make a payment is as follows:
The recipient of the payment draws a random value $x$ and calculates the hash value $y = H(x)$ using a cryptographic hash function $H$.
The recipient sends $y$ to the sender of the payment. 
The sender of the payment adds an HTLC to the channel by including into a new commitment transaction a dedicated output which the receiver of the payment can spend by providing a preimage of $y$.
The amount of coins that are part of the HTLC are deducted from the payment's sender's output in the new commitment transaction.
After the HTLC is committed to the payment channel, the recipient of the payment fulfills the HTLC by sending the preimage $x$ to the sender of the payment.
Then, the channel is updated by creating a new commitment transaction without the HTLC output to remove the HTLC and, in the new commitment transaction, the HTLC's amount is added to the recipient's balance.
If the recipient does not fulfill the HTLC before the timelock, the HTLC is also removed but the HTLC's amount is added back to sender's balance.

To update the channel for adding or removing an HTLC, the sender of the payment creates a new commitment transaction and sends a signature for this new commitment transaction to the payment's recipient.
Now, the recipient has two valid commitment transactions, i.e., the current and the new commitment transaction, both signed by the payment's sender.
As a malicious user might publish an outdated commitment transaction, outdated commitment transactions need to be revocable.
As a signature to a commitment transaction cannot be undone, Lightning uses an approach that relies on incentives: A user can be punished for publishing an outdated commitment transaction.
For each commitment transaction, there exists a revocation key pair.
With knowledge of the private revocation key, one user can spend all outputs of the commitment transaction that the user's counterpart in the channel has published.
During an update of a channel, both users send each other their signature for the new commitment transaction and reply by sending the private revocation key for the then outdated commitment transaction.
As the users do not have the private revocation key for the current commitment transaction of their counterpart, they cannot punish each other for correct behavior like publishing the current commitment transaction.
For the security of the protocol, it is crucial that each user has the private revocation keys for revoking outdated commitment transactions and that the other user in the channel does not have the private revocation key for the latest commitment transaction.

\subsection{Payment Channel Networks: Multi-Hop Payments}

If two users do not have a common payment channel but they are connected over a path of payment channels of other users, they can make multi-hop payments between each other.
Intermediate users forward the payment over their channels and might receive a small fee for their service.
To prevent intermediaries from stealing or loosing coins, it must be guaranteed for a multi-hop payment that each intermediary receives an incoming payment on one channel if and only if the intermediary forwards the payment on another channel.
Also, the sender should send the payment to an intermediary if and only if the recipient receives the payment from an intermediary.
Lightning uses HTLCs for multi-hop payments to achieve these security properties.
As in a single-hop payment, the recipient of a payment draws a random value $x$, calculates the hash value $y = H(x)$, and sends $y$ to the sender of the payment.
The sender of the payment creates an HTLC with the first intermediary using $y$ as the hash condition for the HTLC.
The intermediary creates an HTLC with the next hop and each intermediary repeats this process until the last intermediary creates an HTLC with the recipient of the payment.
The recipient fulfills the HTLC by sending $x$ to the last intermediary.
By fulfilling the HTLC, the payment's recipient receives the payment's amount from the last intermediary.
Again, each intermediary forwards the secret value $x$ fulfilling the HTLCs along the route until the sender receives $x$ and pays the first intermediary.
The timelocks of the HTLCs are chosen in a descending order from the sender to the recipient, so that each intermediary has enough time to fulfill the incoming HTLC from the previous hop if the next hop fulfills the outgoing HTLC.

\subsection{TLA\textsuperscript{+}}

We specify Lightning and the security property in TLA\textsuperscript{+}.
The Temporal Logic of Actions (TLA) \cite{lamport_temporal_1994} is a temporal logic to reason about properties of a system.
The language TLA\textsuperscript{+} is based on TLA and used to formalize the behavior of a system.
Using the explicit-state model checker TLC \cite{web-tlc}, the symbolic model checker Apalache \cite{web-apalache}, or a theorem prover (see TLAPS \cite{web-tlaps}), invariants and properties can be verified.
TLA\textsuperscript{+} has been used repeatedly to reason about the properties of systems and protocols (see \cite{merz_formal_2019,bogli_systematic_2025-1}).
We chose TLA\textsuperscript{+} because, as a general purpose language, TLA\textsuperscript{+} allows for specifying arbitrary protocols although abstractions are required for modeling cryptographic primitives (see \cref{sec-lightning}) and because there are tools supporting different verification methods.

In TLA\textsuperscript{+}, the state of a system is described by a set of variables~$v$.
Formally, a state is an assignment of values to variables.
The state space
of a system is the set of all possible states.
A behavior is a sequence of states. A system is described by defining the set of valid behaviors of the system.
A step $\arstep{s}{t}$ is a pair of successive states $s$ and $t$ in a behavior.
An \emph{action} is a function that takes a step and assigns a boolean value to it.
If an action $A$ assigns \textsc{true} to a step $\arstep{s}{t}$, we say that the step $\arstep{s}{t}$ is an $A$ step or a step of $A$ and use the notation $\astep{s}{t}{A}$.
An action is enabled in a state $s$ if there exists an $A$ step starting at state $s$.

The definition of a system in TLA\textsuperscript{+} is the set of all valid behaviors of the system.
A system is described inductively by a set of initial states and an action that determines valid steps of the system.
The set of initial states is defined as the set of states for which a formula with the name $Init$ is valid.
The formula $Init$ has at least one conjunct for each variable of the state that describes the valid values of the variable in the initial state.
The action with the name $Next$ determines which steps are allowed for the system to change its state.
By starting in an initial state and performing steps allowed by the action $Next$, the behaviors of the system can be generated.
A system with variables $v$ is represented as a formula $Spec = Init \land \Box [ Next ]_{v}$ where $\Box$ is the \textit{always} operator of temporal logic and $[ Next ]_{v}$ means $Next$ or a stuttering step in which all variables $v$ are unchanged. 
An additional conjunct may be a fairness condition $WF_v(A)$ that asserts that an $A$ step is taken if the action $A$ is enabled continuously.
The $Next$ action is typically a disjunct of multiple subactions that define different ways for the system's state to be updated.
An action $A$ is described as a conjunction of multiple conjuncts that describe the state in which the action $A$ is enabled
and the new state that is reached by an $A$ step.
Primed variables (e.g., $v'$) are used to describe the values of the variables in the new state
and unprimed variables (e.g., $v$) describe the values of the variables in the current state.
The subactions of $Next$ can be grouped into modules. In the TLA\textsuperscript{+} formalization of Lightning, we use modules to structure the specification by grouping related functionality in a module.
Each module can be instantiated multiple times, parameterized with different sets of variables. We use such instantiations in the formalization to instantiate a module once for each user and channel.

The variables of a system can be internal or external.
Internal variables cannot be seen by an external observer.
From the outside, what a system does is described by the external variables.
In TLA\textsuperscript{+}, it can be expressed that one system implements another system.
A specification $S_1$ implements (or, equally, refines) specification $S_2$, if all external variables of specification $S_1$ are also external variables of specification $S_2$ and the way how the external variables of specification $S_1$ are changed in a behavior of specification $S_1$ also fulfills specification $S_2$.
Internally, the two specifications might work very differently and have different sets of internal variables.
If specification $S_1$ implements $S_2$, we use the notation $S_1 \Rightarrow S_2$.

To show that specification $S_1$ implements $S_2$, we specify a mapping from the state space of specification $S_1$ to the state space of $S_2$.
With a slight formal incorrectness, we also say that behaviors are mapped by the mapping whereby we mean that a behavior $\sigma_1$ of specification $S_1$ is mapped to a behavior $\sigma_2$ of specification $S_2$ by applying the mapping to all states in $\sigma_1$.
The mapping is a refinement mapping if it maps every behavior $\sigma_1$ of specification $S_1$ to a behavior $\sigma_2$ of specification $S_2$ so that the values of the external variables in both behaviors are equal and $\sigma_2$ is a valid behavior of specification $S_2$.
If there exists a refinement mapping from specification $S_1$ to specification $S_2$, then specification $S_1$ implements specification $S_2$ \cite{abadi_existence_1991}.

\subsection{Explicit Real-Time Specifications}
\label{sec_fundamentals_explicit_real_time_specifications}

Lightning depends on time to decide whether an HTLC's timelock has passed or not.
Thus, a specification of Lightning needs to be a real-time specification that models time.
While there are languages especially for modeling real-time systems (e.g., \textsc{Kronos} \cite{bozga_kronos_1998}, % first mention of Kronos but without name seems to be \cite{nicollin_compiling_1992}
\textsc{Uppaal} \cite{larsen_uppaal_1997}), we use TLA\textsuperscript{+}, a general purpose language.
Real-time systems can be modeled in TLA\textsuperscript{+} using explicit real-time specifications \cite{lamport_real-time_2005} that we define as follows.
An \textit{explicit real-time specification} has a set of variables for clocks that model the value of time.
There must be at least one clock; however, there might also be multiple clocks.
Because time is defined in Lightning by the height of the blockchain, we restrict all clocks to have discrete values.
Progress of time is modeled by an \textit{AdvanceTime} action (also called \textit{Tick} in the literature) that advances each clock by the same non-negative integer value and leaves all other variables unchanged.
An explicit real-time specification is similar to the model of a timed automaton\footnote{One difference is that timed automata are typically defined with real numbered values for clocks while the time in Lightning is the height of the blockchain which is always an integer.} \cite{alur_automata_1990}.
In \cref{sec-proof-improved_model_of_time}, we apply research results from the area of timed automata to the Lightning specification.

\section{Related Work}
\label{sec-related-work}

Aspects of Bitcoin and Lightning were formally analyzed in previous work.
Andrychowicz et al. \cite{andrychowicz_modeling_2014} modeled Bitcoin contracts as timed automata and verified them using the \textsc{Uppaal} model checker \cite{larsen_uppaal_1997}.
Setzer \cite{setzer_modelling_2018} modeled Bitcoin transactions in Agda \cite{bove_brief_2009}.
Boyd et al. \cite{boyd_blockchain_2020-1} created a model of a blockchain in Tamarin and analyzed Hash Timelocked Contracts, a primitive that is used by Lightning.
Hüttel and Staroveški \cite{huttel_secrecy_2020,huttel_key_2022} formalized four subprotocols of Lightning and analyzed these protocols using ProVerif for secrecy and authenticity properties.
These works on Lightning
are complementary to the problem definition in \cref{sec-introduction} as they show lower level properties of subprotocols but not the security of the combined protocol.
Rain et al. \cite{rain_towards_2023,brugger_checkmate_2023} formalized two subprotocols of Lightning and analyzed these protocols mechanically for game-theoretic security.
Their work is complementary to the problem definition above as their formalization of the protocol assumes that an honest party actually can punish a dishonest party.
This assumption is a property that we aim to prove.

The security of Lightning was analyzed before by Kiayias and Thyfronitis Litos \cite{kiayias_composable_2020}.
Compared to our formalization, the protocol definition of \cite{kiayias_composable_2020} considers more details about the cryptographic aspects.
They specified an ideal functionality of Lightning which is a detailed description of how the parties in a payment channel network behave.
Compared to our definition of the security property, the complexity and the length of this ideal functionality make it hard to assess whether the ideal functionality corresponds to a user's intuitive expectation of security.
They used the UC framework \cite{canetti_universally_2001} to prove that Lightning securely implements this ideal functionality even in the presence of arbitrary behaving adversaries.
While working on our TLA\textsuperscript{+} formalization of Lightning, we found two subtle flaws in the formalization of \cite{kiayias_composable_2020} of Lightning that render the formalized protocol insecure.
However, we believe that these flaws can be corrected and that Lightning fulfills the ideal functionality formalized in \cite{kiayias_composable_2020}.
The first flaw concerns an incomplete description of how a user reacts to maliciously published outdated transactions.
The second flaw is more subtle and concerns how the data in an input is linked to the spending methods of an output that is spent by this input.
A detailed description of the flaws can be found in \cref{sec-appendix-composable}.
While we found the first flaw by comparison of our formalization to the formalization in the paper, we found the second flaw only by model checking when we had a similar flaw in a draft of our formalization.
While the specific flaws can be fixed with low effort, it is difficult and tedious to manually find such flaws in a proof.
Using our approach of model checking instead, such issues can be revealed automatically.

Weintraub et al. \cite{weintraub_payout_2024-1} analyzed the messages exchanged during a single-hop transaction in Lightning.
They inferred properties from the official Lightning specification and found by model checking an ambiguity in the official specification.
This shows the need for an unambiguous formal specification of Lightning.
Further, Weintraub et al. present an attack that relies on honest parties not following the Lightning protocol.
Specifically, they assume that parties consider a payment processed before the payment is actually processed and that, in case of a network partition, the channel parties agree out-of-band on the outcome of a payment instead of closing the channel on the blockchain after a reasonable timeout.
In this work, we assume that honest parties follow the protocol.
We specify explicitly at what point a payment can securely be considered processed (see \cref{sec-lightning}) and check that from this point on an honest party is guaranteed to receive the payment.
Moreover, our work is a step towards a formal and unambiguous official specification of Lightning.

Recently, Fabiański et al. \cite{fabianski_formally_2025} formally verified a simplified variant of Lightning in Why3.
They also formalized the informal security property that honest users do not loose money. In contrast to our approach of defining the security property by defining the behavior of a secure system, they use a game-based definition which is more complex.
While we assume an adversary with limited capabilities, they formally verified that the formalized protocol is secure even in the presence of arbitrary behavior.
Their work shows that a formal verification of even a variant of Lightning considering only single payment channels without HTLCs is a challenging effort.
Their work as well as our work could be used as starting point for a formal verification of a more complete model of the Lightning protocol.

\section{Formalization of Lightning}
\label{sec-lightning}

To be able to use model checking for Lightning, we need to formalize it first, because there exists only an informal official specification.
In this section, we explain important aspects of our TLA\textsuperscript{+} formalization of Lightning to show the assumptions and the abstraction layer of the formalization.
The complete TLA\textsuperscript{+} formalization is available on Github\footnote {\url{https://github.com/kit-dsn/lightning-tla}}.
A detailed presentation of the official specification and our formalization can be found in \cref{sec-appendix-formalization}.

The TLA\textsuperscript{+} formalization specifies a system with an arbitrary number of users.
The behavior of a user is specified as in the official specification \cite{web-lightning-bolts} with some simplifications that we detail below.
The key task of Lightning is to ensure that each user can spend the right transaction output at the right time.
Therefore, our model of Lightning focuses on the protocol logic in which users exchange data to build transactions, publish transactions, and observe transactions on the blockchain.
To keep the complexity at a manageable level, we do not model fees and we abstract cryptographic primitives like signatures and hash functions.

To ensure that our TLA\textsuperscript{+} formalization of Lightning captures the behavior of Lightning as closely as possible, the TLA\textsuperscript{+} formalization follows the structure of the official specification.
We make use of the same identifiers for messages as in the official specification, and the states of HTLCs in the formalization can be mapped to those used in Core Lightning\footnote{\url{https://github.com/ElementsProject/lightning/blob/master/common/htlc_state.h}}, an implementation of Lightning.

Lightning uses primitives such as signatures and hash functions that are not directly available in TLA\textsuperscript{+}.
For the formalization, we make the perfect cryptography assumption that the adversary cannot break cryptographic primitives, and we use a symbolic representation of cryptographic keys and signatures as done in previous work (e.g., \cite{andrychowicz_modeling_2014,boyd_blockchain_2020}).
We abstract these primitives by focusing on their relevant properties that are used in Lightning.
For example, Lightning is based on the assumption that a hash function is deterministic and easy to evaluate given the preimage but cannot be reverted given a hash value and that two different inputs to a hash function result in two different outputs.
In the TLA\textsuperscript{+} formalization, the preimages that are used for multi-hop payments are not randomly generated but are deterministically assigned based on the id of the associated payment.
We model the hash value of a preimage to be equal to the preimage itself and distinguish hashes and preimages by the names of the variables in which a preimage or hash value is stored.
In Bitcoin, transaction identifiers are defined as a hash over the transaction.
In the formalization, we model transaction identifiers by drawing a new unique value for each transaction when the transaction is created and including that identifier in the transaction.

The TLA\textsuperscript{+} formalization contains a model of the blockchain and all transaction types used in Lightning defining the conditions how each transaction output can be spent.
We model publishing a transaction on the blockchain by a single step that happens instantly, i.e., we assume that users have blockchain connectivity and we make the simplifying assumption that each transaction to be published is included in the next block being created.
For the communication between users, we model that messages are delivered reliably and in order but can be arbitrarily delayed.

In Lightning, the height of the blockchain is used as logical time that is relevant for the timeout of HTLCs.
We refer to the height of the blockchain as time, which is formalized as a variable that is advanced by arbitrary integer steps.
Thus, the specification is an explicit real-time specification (see \cref{sec_fundamentals_explicit_real_time_specifications}).
Some actions in the protocol are urgent (see \cite{bornot_modeling_1998}) meaning that they need to happen before a certain point in time, e.g., a user has to fulfill a HTLC before the HTLC's timeout.
We model this by letting each user specify deadlines and not letting time advance beyond a deadline until a step is taken that removes the deadline.

As required by the security model of Lightning, the formalization models that users can be adversarial.
In principle, there are four types of adversarial behavior:
(1)~Omitting sending a message to another user.
(2)~Omitting publishing a transaction on the blockchain.
(3)~Sending a message to another user with arbitrary content. 
(4)~Publishing transactions on the blockchain other than specified by the protocol.
The adversary model for the TLA\textsuperscript{+} formalization allows the adversary to omit sending messages~(1) or publishing transactions~(2).
Also, the adversary is allowed to create and publish certain transactions other than specified by the protocol~(4).
Transactions published by an adversary are only relevant if they spend an output of a transaction that is related to the payment channel.
In the formalization, an action models that the adversary publishes transactions in two ways:
First, by finding all outputs that are spendable for the adversary and publishing a new transaction that spends the output to an output that is owned by the adversary.
Second, by signing and publishing a transaction that the adversary has already received the other user's signature for (e.g., an outdated commitment transaction).
We do not model that the adversary sends messages with arbitrary content~(3) because this would significantly increase the specification's complexity.
In practice, the effect of the adversary sending messages with arbitrary content is limited because users verify every message they receive.
Messages that have an invalid payload or that are received at an invalid point in the protocol execution are ignored.
To check the verification of messages, we explicitly model the verification of every received message and the message's payload, e.g., the verification of signatures and preimages.

We model that any user in the specification can become adversarial.
However, a limitation of our adversary model is that we do not allow information exchange between adversarial users which would model a single adversarial entity controlling multiple users.
This restriction simplifies the verification of the specification and we consider it future work to extend the specification with a broader adversary model.

\section{Security Property}
\label{sec-security-property}

Our goal is to model check the security of Lightning.
Our notion of security is captured by the following informal definition. We define a user as being honest if the user behaves as required by the specification of Lightning.
\begin{definition}[informal security]
\label{def-informal-security}
An honest user will finally get paid out at least the user's correct balance.
\end{definition}
To make this definition precise, we formalize the security property.
One approach to define security are low-level invariants and properties such as `a user's balance is always positive' or `if a user has received the preimage for an outgoing HTLC, the associated incoming HTLC of the user will be fulfilled'.
While formalizing the specification of Lightning, we regularly used such properties to check the draft of the specification for flaws.
However, it is hard to tell for such properties whether they are actually required for the protocol to be secure.
Also, it is hard to tell whether these low-level properties imply what a user intuitively expects as a security property.

The informal definition implicitly concerns four variables:
1. Whether a user is honest. The security property only applies to a user who follows the protocol.
2. A user's balance in a channel which defines the correct balance that a user expects to have. This balance is affected by the payments that are sent and received.
3. A user's view on the status of the user's payments, i.e., whether a payment has been sent or received.
4. A user's balance on the blockchain which are the funds that a user can directly access. The amount that a user is finally paid out is added to this balance.
To formalize the informal definition, we use TLA\textsuperscript{+} to define how these four variables are allowed to change.

\begin{figure}
\small
\begin{tlatex}
\@x{}\moduleLeftDash\@xx{ {\MODULE}
 IdealPaymentNetwork}\moduleRightDash\@xx{}%
  \@x{ {\VARIABLES} BlockchainBalances ,\, ChannelBalances ,\,}%
  \@x{\@s{16.4} Payments ,\, Honest}%
  \@x{ {\CONSTANTS} UserIds ,\, InitialPayments ,\, Numbers}%
  \@pvspace{8.0pt}%
  \@x{ IdealUser ( user ) \.{\defeq} {\INSTANCE} IdealUser {\WITH}}%
  \@x{\@s{16.4} UserId \.{\leftarrow} user ,\,}%
  \@x{\@s{16.4} ChannelBalance \.{\leftarrow} ChannelBalances [ user ] ,\,}%
   \@x{\@s{16.4} BlockchainBalance \.{\leftarrow} BlockchainBalances [ user ]
   ,\,}%
  \@x{\@s{16.4} Payments \.{\leftarrow} Payments [ user ] ,\,}%
  \@x{\@s{16.4} Honest \.{\leftarrow} Honest [ user ]}%
  \@pvspace{8.0pt}%
  \@x{ IdealPayments \.{\defeq} {\INSTANCE} IdealPayments}%
\@pvspace{8.0pt}%
\@x{ Spec \.{\defeq}}%
 \@x{\@s{16.4} \.{\land} \A\, user \.{\in} UserIds \.{:} IdealUser ( user )
 {\bang} Spec}%
\@x{\@s{16.4} \.{\land} IdealPayments {\bang} Spec}%
\@x{}\bottombar\@xx{}%
\end{tlatex}
    \caption{Formal definition of the security property as an idealized payment network.
    The module IdealPaymentNetwork specifies that each user behaves as specified by the module IdealUser (see \cref{fig-tla-idealized-payment-network-user}) and that the users' views on which payments have been processed are consistent as specified in the module IdealPayments (see \cref{fig-tla-idealized-payment-network-payments}).}
    \label{fig-tla-idealized-payment-network}
\end{figure}

\begin{figure}
\small
\begin{tla}
----------------------------- MODULE IdealUser -----------------------------
EXTENDS Integers, SumAmounts
VARIABLES BlockchainBalance, ChannelBalance,
          Payments, Honest
CONSTANTS UserIds, UserId, InitialPayments, Numbers
ASSUME Numbers \subseteq Int

Init ==
    /\ BlockchainBalance \in Numbers
    /\ ChannelBalance = 0
    /\ Payments = {pmt \in InitialPayments :
                    \/ pmt.sender = UserId
                    \/ pmt.receiver = UserId}
    /\ Payments \in SUBSET [amount: Numbers, id: Numbers,
             sender: UserIds, receiver: UserIds,
             state: {"NEW", "ABORTED", "PROCESSED"} ]
    /\ Honest \in {TRUE, FALSE}

Deposit ==
    /\ \E amount \in 1..BlockchainBalance :
        /\ BlockchainBalance' = BlockchainBalance - amount
        /\ ChannelBalance' = ChannelBalance + amount
        /\ ChannelBalance' \in Numbers
    /\ UNCHANGED <<Payments, Honest>>
    
Pay ==
    /\ \E P \in SUBSET {pmt \in Payments : pmt.state = "NEW"} :
          \E r \in [P -> {"ABORTED", "PROCESSED"}] :
            /\ Payments' = (Payments \ P)
                      \cup {[p EXCEPT !.state = r[p]] : p \in P}
            /\ LET PP == {p \in P : r[p] = "PROCESSED"}
                   rBal == SumAmounts({p \in PP :
                              p.receiver = UserId})
                   sBal == SumAmounts({p \in PP :
                              p.sender = UserId})
               IN   /\ ChannelBalance - sBal >= 0
                    /\ ChannelBalance' = ChannelBalance
                                          + rBal - sBal
                    /\ ChannelBalance' >= 0
    /\ UNCHANGED <<Honest, BlockchainBalance>>

Withdraw ==
    /\ BlockchainBalance' \in Numbers
    /\ BlockchainBalance' >= BlockchainBalance
    /\ \E amount \in 0..ChannelBalance :
        /\ ChannelBalance' = ChannelBalance - amount
        /\ Honest => BlockchainBalance' >=
                        BlockchainBalance + amount
    /\ UNCHANGED <<Payments, Honest>>
    
Next == Deposit \/ Pay \/ Withdraw
vars == <<BlockchainBalance, ChannelBalance,
              Payments, Honest>>
Spec ==
    /\ Init
    /\ [][Next]_vars
    /\ WF_vars(ChannelBalance > 0 /\ Honest /\ Withdraw)
=============================================================================
\end{tla}
\begin{tlatex}
\@x{}\moduleLeftDash\@xx{ {\MODULE} IdealUser}\moduleRightDash\@xx{}%
\@x{ {\EXTENDS} Integers ,\, SumAmounts}%
\@x{ {\VARIABLES} BlockchainBalance ,\, ChannelBalance ,\,}%
\@x{\@s{50.74} Payments ,\, Honest}%
\@x{ {\CONSTANTS} UserIds ,\, UserId ,\, InitialPayments ,\, Numbers}%
\@x{ {\ASSUME} Numbers \.{\subseteq} Int}%
\@pvspace{8.0pt}%
\@x{ Init \.{\defeq}}%
\@x{\@s{16.4} \.{\land} BlockchainBalance \.{\in} Numbers}%
\@x{\@s{16.4} \.{\land} ChannelBalance \.{=} 0}%
\@x{\@s{16.4} \.{\land} Payments \.{=} \{ pmt \.{\in} InitialPayments \.{:}}%
\@x{\@s{93.87} \.{\lor}\@s{1.23} pmt . sender \.{=} UserId}%
\@x{\@s{93.87} \.{\lor}\@s{1.23} pmt . receiver \.{=} UserId \}}%
 \@x{\@s{16.4} \.{\land} Payments \.{\in} {\SUBSET} [ amount \.{:} Numbers ,\,
 id \.{:} Numbers ,\,}%
\@x{\@s{53.22} sender \.{:} UserIds ,\, receiver \.{:} UserIds ,\,}%
\@x{\@s{53.22} state \.{:} \{\@w{NEW} ,\,\@w{ABORTED} ,\,\@w{PROCESSED} \} ]}%
\@x{\@s{16.4} \.{\land} Honest \.{\in} \{ {\TRUE} ,\, {\FALSE} \}}%
\@pvspace{8.0pt}%
\@x{ Deposit \.{\defeq}}%
 \@x{\@s{16.4} \.{\land} \E\, amount \.{\in} 1 \.{\dotdot} BlockchainBalance
 \.{:}}%
 \@x{\@s{32.72} \.{\land} BlockchainBalance \.{'} \.{=} BlockchainBalance
 \.{-} amount}%
 \@x{\@s{32.72} \.{\land} ChannelBalance \.{'} \.{=} ChannelBalance \.{+}
 amount}%
\@x{\@s{32.72} \.{\land} ChannelBalance \.{'} \.{\in} Numbers}%
\@x{\@s{16.4} \.{\land} {\UNCHANGED} {\langle} Payments ,\, Honest {\rangle}}%
\@pvspace{8.0pt}%
\@x{ Pay \.{\defeq}}%
 \@x{\@s{17.63} \.{\land} \E\, P \.{\in} {\SUBSET} \{ pmt \.{\in} Payments
 \.{:} pmt . state \.{=}\@w{NEW} \} \.{:}}%
 \@x{\@s{36.52} \E\, r \.{\in} [ P \.{\rightarrow} \{\@w{ABORTED}
 ,\,\@w{PROCESSED} \} ] \.{:}}%
 \@x{\@s{44.72} \.{\land} Payments \.{'} \.{=} ( Payments \.{\,\backslash\,} P
 )}%
 \@x{\@s{85.64} \.{\cup} \{ [ p {\EXCEPT} {\bang} . state \.{=} r [ p ] ]
 \.{:} p \.{\in} P \}}%
 \@x{\@s{44.72} \.{\land} \.{\LET} PP \.{\defeq} \{ p \.{\in} P \.{:} r [ p
 ]\@s{20.70} \.{=}\@w{PROCESSED} \}}%
\@x{\@s{75.58} rBal \.{\defeq} SumAmounts ( \{ p \.{\in} PP \.{:}}%
\@x{\@s{126.72} p . receiver\@s{18.06} \.{=} UserId \} )}%
\@x{\@s{75.58} sBal\@s{0.12} \.{\defeq} SumAmounts ( \{ p \.{\in} PP \.{:}}%
\@x{\@s{126.72} p . sender \.{=} UserId \} )}%
 \@x{\@s{56.94} \.{\IN}\@s{8.2} \.{\land} ChannelBalance \.{-} sBal \.{\geq}
 0}%
\@x{\@s{83.78} \.{\land} ChannelBalance \.{'} \.{=} ChannelBalance}%
\@x{\@s{186.79} \.{+} rBal \.{-} sBal}%
\@x{\@s{83.78} \.{\land} ChannelBalance \.{'} \.{\geq} 0}%
 \@x{\@s{17.63} \.{\land} {\UNCHANGED} {\langle} Honest ,\, BlockchainBalance
 {\rangle}}%
\@pvspace{8.0pt}%
\@x{ Withdraw \.{\defeq}}%
\@x{\@s{16.4} \.{\land} BlockchainBalance \.{'} \.{\in} Numbers}%
\@x{\@s{16.4} \.{\land} BlockchainBalance \.{'} \.{\geq} BlockchainBalance}%
 \@x{\@s{16.4} \.{\land} \E\, amount \.{\in} 0 \.{\dotdot} ChannelBalance
 \.{:}}%
 \@x{\@s{32.72} \.{\land} ChannelBalance \.{'} \.{=} ChannelBalance \.{-}
 amount}%
 \@x{\@s{32.72} \.{\land} Honest \.{\implies} BlockchainBalance \.{'}
 \.{\geq}}%
\@x{\@s{115.58} BlockchainBalance \.{+} amount}%
\@x{\@s{16.4} \.{\land} {\UNCHANGED} {\langle} Payments ,\, Honest {\rangle}}%
\@pvspace{8.0pt}%
\@x{ Next \.{\defeq} Deposit \.{\lor} Pay \.{\lor} Withdraw}%
 \@x{ vars\@s{2.10} \.{\defeq} {\langle} BlockchainBalance ,\, ChannelBalance
 ,\,}%
\@x{\@s{60.39} Payments ,\, Honest {\rangle}}%
\@x{ Spec\@s{1.46} \.{\defeq}}%
\@x{\@s{16.4} \.{\land} Init}%
\@x{\@s{16.4} \.{\land} {\Box} [ Next ]_{ vars}}%
 \@x{\@s{16.4} \.{\land} {\WF}_{ vars} ( ChannelBalance \.{>} 0 \.{\land}
 Honest \.{\land} Withdraw )}%
\@x{}\bottombar\@xx{}%
\end{tlatex}
    \caption{
    Part of the security property that defines how a user's variables are allowed to change.
    The action $\mathrm{Deposit}$ specifies how a user deposits funds into the channel.
    $\mathrm{Pay}$ specifies that payments are either processed or aborted and the channel balance is updated accordingly.
    $\mathrm{Withdraw}$ specifies that the channel balance is withdrawn to the blockchain.
    The fairness condition ensures that a user will withdraw if the user's channel balance is positive.
    }
    \label{fig-tla-idealized-payment-network-user}
\end{figure}

\begin{figure}
\small
\begin{tla}
--------------------------- MODULE IdealPayments ---------------------------

EXTENDS Integers
VARIABLE Payments
CONSTANTS UserIds, Numbers

Init ==
    /\ \A user \in UserIds :
        Payments[user] \in SUBSET [amount: Numbers,
             sender: UserIds, receiver: UserIds, id: Numbers,
             state: {"NEW", "ABORTED", "PROCESSED"}]
Pay ==
    /\ \A user \in UserIds :
        \/ UNCHANGED Payments[user]
        \/ \E P \in SUBSET {p \in Payments[user] :
                                      p.state = "NEW"} : 
            /\ \E r \in [P -> {"ABORTED", "PROCESSED"}] :
                /\ \A p \in P :
                    (r[p] = "PROCESSED" /\ p.sender = user)
                        => \E rp \in Payments'[p.receiver] :
                                /\ rp.id = p.id
                                /\ rp.state = "PROCESSED"
                /\ Payments[user]' = (Payments[user] \ P)
                        \cup {[p EXCEPT !.state = r[p]] : p \in P}

Spec == Init /\ [][Pay]_Payments
=============================================================================
\end{tla}
\begin{tlatex}
\@x{}\moduleLeftDash\@xx{ {\MODULE} IdealPayments}\moduleRightDash\@xx{}%
\@pvspace{8.0pt}%
\@x{ {\EXTENDS} Integers}%
\@x{ {\VARIABLE} Payments}%
\@x{ {\CONSTANTS} UserIds ,\, Numbers}%
\@pvspace{8.0pt}%
\@x{ Init \.{\defeq}}%
\@x{\@s{16.4} \.{\land} \A\, user \.{\in} UserIds \.{:}}%
 \@x{\@s{32.72} Payments [ user ] \.{\in} {\SUBSET} [ amount \.{:} Numbers
 ,\,}%
 \@x{\@s{53.22} sender \.{:} UserIds ,\, receiver \.{:} UserIds ,\, id \.{:}
 Numbers ,\,}%
\@x{\@s{53.22} state \.{:} \{\@w{NEW} ,\,\@w{ABORTED} ,\,\@w{PROCESSED} \} ]}%
\@x{ Pay \.{\defeq}}%
\@x{\@s{17.63} \.{\land} \A\, user \.{\in} UserIds \.{:}}%
\@x{\@s{33.95} \.{\lor} {\UNCHANGED} Payments [ user ]}%
 \@x{\@s{33.95} \.{\lor} \E\, P \.{\in} {\SUBSET} \{ p \.{\in} Payments [ user
 ] \.{:}}%
\@x{\@s{149.13} p . state \.{=}\@w{NEW} \} \.{:}}%
 \@x{\@s{50.27} \.{\land} \E\, r \.{\in} [ P \.{\rightarrow} \{\@w{ABORTED}
 ,\,\@w{PROCESSED} \} ] \.{:}}%
\@x{\@s{66.59} \.{\land} \A\, p \.{\in} P \.{:}}%
 \@x{\@s{82.91} ( r [ p ] \.{=}\@w{PROCESSED} \.{\land} p . sender \.{=} user
 )}%
 \@x{\@s{101.28} \.{\implies} \E\, rp \.{\in} Payments \.{'} [ p . receiver ]
 \.{:}}%
\@x{\@s{143.36} \.{\land} rp . id \.{=} p . id}%
\@x{\@s{143.36} \.{\land} rp . state \.{=}\@w{PROCESSED}}%
 \@x{\@s{66.59} \.{\land} Payments [ user ] \.{'} \.{=} ( Payments [ user ]
 \.{\,\backslash\,} P )}%
 \@x{\@s{99.31} \.{\cup} \{ [ p {\EXCEPT} {\bang} . state \.{=} r [ p ] ]
 \.{:} p \.{\in} P \}}%
\@pvspace{8.0pt}%
\@x{ Spec \.{\defeq} Init \.{\land} {\Box} [ Pay ]_{ Payments}}%
\@x{}\bottombar\@xx{}%
\end{tlatex}
    \caption{Part of the security property that defines that the sender of a payment may only see the payment as processed if the receiver of the payment sees the payment as processed.}
    \label{fig-tla-idealized-payment-network-payments}
\end{figure}

We formalize the security property of Lightning by defining the behavior of a secure payment system.
The full specification of the security property is shown in \cref{fig-tla-idealized-payment-network,fig-tla-idealized-payment-network-user,fig-tla-idealized-payment-network-payments}.
The security property has four variables $\mathrm{Honest}$, $\mathrm{ChannelBalances}$, $\mathrm{Payments}$, $\mathrm{BlockchainBalances}$ matching the variables described above.
The security property describes which changes of these variables are allowed.
The definition of the security property is divided into three modules.
The module IdealUser (see \cref{fig-tla-idealized-payment-network-user}) defines four actions $\mathrm{Deposit}$, $\mathrm{Pay}$, and $\mathrm{Withdraw}$ that describe how the variables of single user can change.
The module IdealPayments (see \cref{fig-tla-idealized-payment-network-payments}) ensures that the users' views on which payments have been processed are consistent.
The module IdealPaymentNetwork (see \cref{fig-tla-idealized-payment-network}) connects the two other modules by defining that for all users in the payment network the specification of the module IdealUser must hold and that the specification of the module IdealPayments must hold.

Initially, the variables are initialized (see \cref{fig-tla-idealized-payment-network-user}) so that each user has a channel balance of zero, an integer balance on the blockchain, and a set of payments that the user wants to send or receive.
The variable $\mathrm{Honest}$ specifies for each user whether the user is honest or dishonest.

The action $\mathrm{Deposit}$ describes a deposit by defining that a user's blockchain balance may decrease by a certain amount and the user's channel balance increases by the same amount.
The action $\mathrm{Withdraw}$ describes a full withdraw by defining that a user's channel balance may be set to 0 and, for an honest user, the user's blockchain balance must increase by at least the user's channel balance. For dishonest users, it is only guaranteed that a dishonest user's blockchain balance does not decrease.
The action $\mathrm{Pay}$ of the module IdealUser describes the execution of a set of payments by defining that the sending users' channel balances are decreased by the amounts of sent payments and the receiving user's channel balances are increased by the respective amounts. Payments can be aborted keeping channel balances unchanged.
Intuitively, one expects from a secure payment network that the sender of a payment sees the payment as sent (and the payment's balance deducted from the user's channel balance) only if the receiver of the payment sees the payment as received.
This condition is enforced by the action $\mathrm{Pay}$ in the module IdealPayments which defines that a payment can only be sent if the receiver of the payment has already received the payment or receives the payment in the same step.
Further, the module IdealUser defines a fairness condition ensuring that the system does not terminate before all users have been paid out.

In the next section, we describe how we verify that Lightning implements the security property as defined above.
In our formalization of Lightning, the $\mathrm{BlockchainBalances}$ variable is refined as the sum of unspent transaction outputs on the blockchain that a user can exclusively spend.
In the view of each user, the state of a payment changes from \textsc{new} to \textsc{processed} when the corresponding HTLC is fulfilled.
Because we verify that using Lightning with this definition of a processed payment fulfills the security property, we also clarify how Lightning can securely be used:
Honest users who sell a product in exchange for a payment can securely deliver the product once they have fulfilled the corresponding incoming HTLC and do not need to wait for the fulfilled HTLC to be removed from the channel.

Because our specification of the protocol allows for adversarial behavior, the result that the protocol specification implements the secure payment system means that no modeled adversarial behavior can break the security property, i.e., the countermeasures implemented in the protocol are sufficient.

\section{Model Checking the Security of Lightning}
\label{sec-proof-sketch}

\begin{figure*}
\begin{tikzpicture}
\node[inner sep=0pt] () at (0,0)
    {\includegraphics[width=\linewidth]{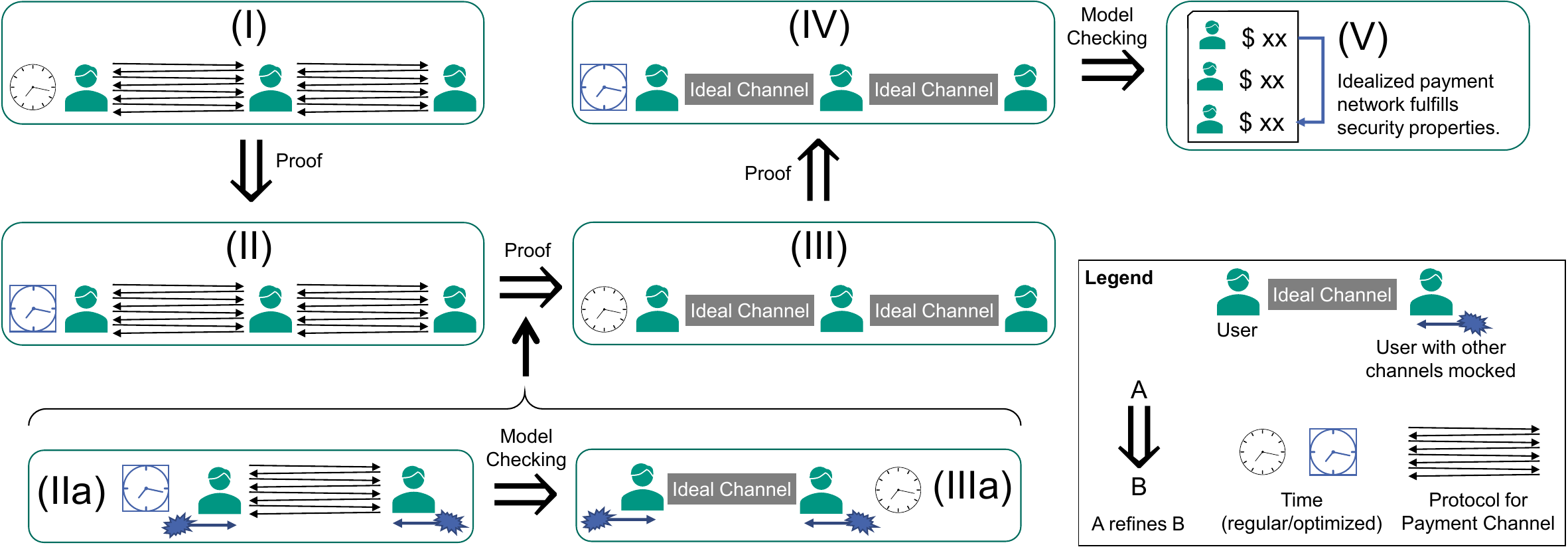}};
\node[shape=circle,draw,inner sep=2pt] (1) at (-6.8,1.2) {1};
\node[shape=circle,draw,inner sep=2pt] (2) at (-3,.7) {2};
\node[shape=circle,draw,inner sep=2pt] (2) at (-3,-3.2) {2a};
\node[shape=circle,draw,inner sep=2pt] (3) at (1,1.2) {3};
\node[shape=circle,draw,inner sep=2pt] (4) at (3.8,1.8) {4};
\end{tikzpicture}
\caption{Structure of the stepwise refinement to show that the Lightning protocol $(I)$ refines the security property $(V)$.
Specification $(I)$ is presented in \cref{sec-lightning}, specification $(V)$ in \cref{sec-security-property}, and the intermediate specifications in \cref{sec-proof-sketch}.
The correctness of each step is either proven or model checked.
Additionally, we check the refinement steps as well as the whole stepwise refinement using simulation, i.e., non-exhaustive model checking (see \cref{sec-model-checking-results}).
}
\label{fig-img-proof-sketch}
\end{figure*}

In this section, we give an overview of how we make it possible to model check that the TLA\textsuperscript{+} specification of Lightning (see \cref{sec-lightning}) fulfills the security property (see \cref{sec-security-property}).
To this end, we construct a stepwise refinement from the specification of Lightning to the security property.
The structure of this stepwise refinement is graphically shown in \cref{fig-img-proof-sketch}.

Our goal is to model check that the TLA\textsuperscript{+} formalization of Lightning~fulfills the security property, i.e., that specification $(I)$ implements specification $(V)$.
The idea of the stepwise refinement is to use multiple abstractions to show that, transitively, the specification of Lightning implements the security property. 
The structure of our approach as well as of this section is as follows.
Focusing on the first abstraction \Circled{1} in \cref{sec-proof-improved_model_of_time}, we prove that specification $(I)$ implements specification $(II)$ in which the progress of time is modeled more efficiently.
In \cref{sec-proof-abstract-protocol-steps}, we abstract from the details of the payment channel protocol by defining specification $(III)$ in which channels are updated in idealized steps (see \Circled{2}).
We prove the abstraction \Circled{2} by showing that each individual channel in specification $(II)$ refines a channel in specification $(III)$.
These individual channels are modeled in specifications $(IIa)$ and $(IIIa)$ and we check the refinement \Circled{2a} between these specifications by model checking.
In \cref{sec-proof-refine-sec-property}, after another abstraction \Circled{3} to optimize the modeling of time, we use model checking to check (see \Circled{4}) that specification $(IV)$ implements the security property defined in specification $(V)$.
In this section, we present the ideas behind the proofs and briefly sketch the proofs. The full proofs can be found in \cref{sec-appendix-proofs,sec-appendix-generalized-time-skip,sec-appendix-time-skip-proof,sec-appendix-time-skip-application-I-II,sec-appendix-time-skip-III-IV}.

\subsection{Improved Model of Time}
\label{sec-proof-improved_model_of_time}
The protocol as specified in specification $(I)$ is too complex even for model checking because of the large number of possible states of the protocol.
One reason for the huge state space is the modeling of time.
In Lightning, time is defined by the height of the blockchain.
There are many states that only differ by their value of time and are equivalent in the sense that they have the same futures.
For example, consider a state~$s$ in which the value of time is $1$ and there exists a single HTLC with a timelock of $10$.
Consider a state $s'$ that is equal to state $s$ except that the value of time in state $s'$ is $2$.
Further, assume that the only conditions in the specification that depend on the value of time are conditions that verify whether an HTLC's timelock has passed.
Then, the same future states can be reached from states $s$ and $s'$ because, for a condition whether the HTLC's timelock has passed, it does not matter whether the value of time is $1$ or $2$.
Consequently, it suffices to consider only one of the states $s$ and $s'$ during model checking.
This equivalence of the states $s$ and $s'$ is captured by the following definition of bisimilarity \cite{baier_principles_2008} for explicit real-time specifications as defined in \cref{sec_fundamentals_explicit_real_time_specifications}.

\begin{definition}[Bisimilarity]
A \emph{bisimulation} is a binary relation % more precisely: a homogeneous relation
$\mathcal{R}$ over states so that for all $(s, s') \in \mathcal{R}$ it holds that:
\begin{enumerate}
  \item If $\astep{s}{t}{A}$ for some state $t$ and an action $A$, then $\astep{s'}{t'}{A}$ for some state $t'$ and $(t, t') \in \mathcal{R}$.
  \item If $\astep{s'}{t'}{A}$ for some state $t'$ and an action $A$, then $\astep{s}{t}{A}$ for some state $t$ and $(t, t') \in \mathcal{R}$.
\end{enumerate}
If $(s, s') \in \mathcal{R}$, we say that the states $s$ and $s'$ are \emph{bisimilar}.
\end{definition}

Intuitively, two states $s$ and $s'$ are bisimilar if for every step that is possible from state $s$ there exists a matching step of the same action starting at state $s'$ and the reached states are bisimilar.
As we defined the \textit{AdvanceTime} action of an explicit real-time specification so that it increases the clocks by any non-negative integer value, an AdvanceTime step from state $s$ to state $t$ might be matched by an AdvanceTime step from state $s'$ to state $t'$ that increases time by a different value.
In the area of timed automata \cite{alur_automata_1990}, this notion of bisimilarity defined above is usually referred to as untimed bisimilarity \cite{alur_observational_1994} or time-abstracting bisimilarity \cite{tripakis_analysis_2001}.

Prior work (e.g. \cite{alur_automata_1990,tripakis_analysis_2001}) has proposed to improve model checking of timed automata by exploiting this type of bisimilarity.
The idea behind the optimization is to group all states that are bisimilar in an equivalence class, referred to as a \textit{zone}.
A zone graph is constructed by connecting zone $z_1$ to zone $z_2$ if zone $z_1$ contains a state from which a step to a state of zone $z_2$ exists.
During model checking, it suffices to explore the zone graph instead of the possibly much larger original state graph.
In practice, the zone graph can be explored by considering only one representative state of each zone during model checking.
E.g., we define $AdvanceTime$ for exploring the zone graph to advance to the next zone by advancing the value of time to the lowest time of the states in the next zone.

\begin{figure}
\includegraphics[width=\linewidth]{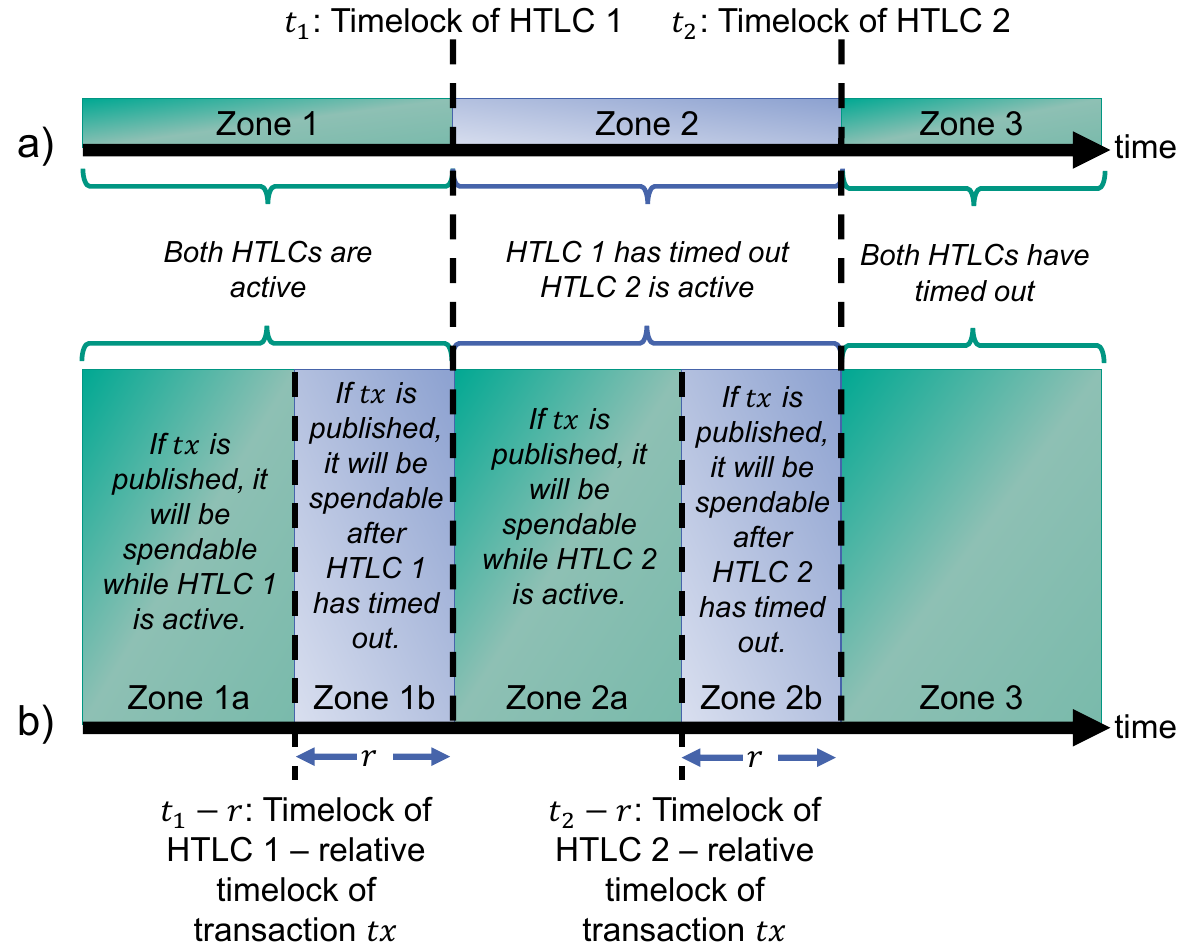}
\caption{
Zones in which bisimilar states are grouped in a scenario a) with two HTLCs with timelocks $t_1$ and $t_2$ and b) with two HTLCs and an unpublished transaction $tx$ with an output that has a relative timelock $r$.}
\label{fig-zones}
\end{figure}

In a simplified specification of the Lightning protocol without on-chain transactions, the only time-dependent conditions would check whether the timelock of an HTLC has passed.
Thus, states with time values on the same side of all HTLCs' timelocks would be in the same zone.
In a scenario with two HTLCs, states that are equal except for the value of time could be grouped into three zones depending on whether a state's value of time is below, between, or higher than both HTLCs' timelocks (see \cref{fig-zones}a).
In Lightning however, transactions can be published with outputs carrying a relative timelock that allows the output to be spent only a certain time after the transaction has been published.
Imagine a scenario with two HTLCs with the timelocks $t_1$ and $t_2$ and one unpublished transaction $tx$ that has a relative timelock $r$.
A state in this scenario will be in one of five zones (see \cref{fig-zones}b) because zones have to be split to distinguish different futures:
If the value of time in the state is below $t_1$ then it has to be distinguished whether the time is below $t_1 - r$ or above $t_1 - r$ because, if the transaction $tx$ is published before $t_1 - r$, there will be a future state in which the transaction $tx$ is spendable before time $t_1$.
However, if the transaction $tx$ is published after $t_1 - r$ there will be a future state in which the HTLC with the timelock $t_1$ has timed out and the transaction $tx$ cannot be spent yet.
This example shows that, to distinguish in which order the relative timelocks of transaction outputs and the absolute timelocks of HTLCs are reached, a higher number of zones is required.
The number of zones in the specification of Lightning is even higher than indicated in the simplified example above because, if multiple transactions with relative timelocked outputs can be published, the order of how these transaction outputs become spendable needs to be encoded in the zone graph, too.
We developed a refinement mapping that maps bisimilar states to the representatives of their respective zones.
Using this mapping, the state space is reduced. However, we found this reduction still not to be sufficient to allow for efficient model checking.

A source of complexity for the time-optimized specification is that the zone graph as described above encodes information about the order of how future timelocks are reached.
However, Lightning does not depend on the order of how relative timelocked outputs become spendable.
Therefore, we can further simplify the zone graph used by the time-optimized specification in the following way:
The subsequent zones of each zone are the zones in which the time has reached the next HTLC timelock or the age of a transaction has reached the relative timelock of one of the transaction's outputs.
For better comprehensibility of the specification, we model time using multiple clocks in the specification:
One variable represents the absolute value of time and, for each published transaction, a new clock is created that models the age of a transaction.
When a transaction is published, a new transaction clock is started that runs with the same speed as the clock for the absolute value of time.
Consequently, the $AdvanceTime$ action is defined in the time-optimized specification $(II)$ as follows:
Time advances to the next point in time at which a new zone for the time clock starts (i.e., HTLC timelock) or the transaction age clock of a transaction (or multiple transactions) advances to the next point in time at which a new zone for this transaction age clock starts (i.e., the relative timelock of one of the transaction's outputs).

With this change in the time-optimized specification, the state space is reduced to a greater extent than with approaches building the zone graph as the bisimulation quotient automaton (see \cite{tripakis_analysis_2001,baier_principles_2008}).
These approaches from prior work have the property that the zone graph is bisimilar to the original specification which means that for every behavior in the original specification there exists a behavior in the zone graph \emph{and vice versa}.
With our simplification of the zone graph, the time-optimized specification becomes more abstract and allows for behaviors that are not possible in the original specification.
For example, a relative timelocked output of a transaction $tx_2$ might be spendable before a relative timelocked output of a transaction $tx_1$ although transaction $tx_1$ was published before transaction $tx_2$.
For the stepwise refinement, we only need that the original specification implements the time-optimized specification or, in other words, that the original specification is similar to the time-optimized specification, i.e., that for every behavior in the original specification there exists a behavior in the time-optimized specification.

We prove that the original specification $(I)$ refines the time-optimized specification $(II)$ by defining a refinement mapping \Circled{1} from specification $(I)$ to specification $(II)$ and proving the correctness of this refinement mapping.
The proof can be found in the appendix. In \cref{sec-appendix-generalized-time-skip} we introduce notation and generalized approach for time optimization of explicit real-time specifications in TLA\textsuperscript{+}. We prove the generalized time optimization in \cref{sec-appendix-time-skip-proof}.
In \cref{sec-appendix-time-skip-application-I-II} we prove that the optimization can be applied to specification $(I)$.
The refinement mapping \Circled{1} maps a state $s$ to a state $s_\mathrm{R}$ that is the zone representative of the respective zone by setting each clock in state $s_\mathrm{R}$ to the lowest value that the clock can have in the respective zone.
In the example of \cref{fig-zones}b, a state in Zone 2a would be mapped to a state where the value of time is set to $t_1$.
The idea of the proof is to show that each step of specification $(I)$ starting in state $s$ is mapped to a step of specification $(II)$ starting in state $s_\mathrm{R}$.
An $AdvanceTime$ step starting in state $s$ is either a step within the same zone or a step that advances to a new zone.
A step that stays in the same zone is mapped to a stuttering step in specification $(II)$.
A step that advances to a new zone is mapped to a step in specification $(II)$ from $s_\mathrm{R}$ to a next zone representative.
Further, we prove for each non-$AdvanceTime$ action in the specification that all steps that are possible from state $s$ are possible  from all states that are in the same zone as state $s$.

\subsection{Abstraction of Protocol Steps in Payment Channels}
\label{sec-proof-abstract-protocol-steps}

Having applied the time optimization, the state space of specification $(II)$ is still too large to be explored by model checking.
Therefore, we further reduce the state space by consolidating effects of multiple protocol steps in idealized steps.
We implement this approach in specification $(III)$.
We prove that specification $(II)$ implements specification $(III)$ if the protocol for a single payment channel implements the specification of an idealized channel.
Thus, this abstraction step separates model checking that multi-hop payments using idealized channels refine the security property from model checking that the protocol for a single payment channel implements the specification of an idealized channel.

\partitle{Idealized Channel Specification}
The idealized channel specification omits some protocol details like messages that are exchanged between the parties, blockchain transactions, etc.
Instead, the idealized channel specification specifies mainly how the states of HTLCs are updated as these are required for modeling multi-hop payments.
For example, when opening a payment channel, the protocol specifies how messages are exchanged between the two parties; however,  in the idealized channel specification, opening is modelled as a single step.
Specifying the idealized channel represents a challenge:
On the one hand, the specification must allow all behaviors that are possible in the protocol. This includes behaviors in which one or both users are dishonest and, for example, an HTLC is timed out although the HTLC has previously been fulfilled.
On the other hand, while the specification may principally allow behaviors that are not possible in the protocol, the specification must not allow behaviors that are not possible in the protocol if these behaviors violate the security property.

\begin{figure}
\includegraphics[width=\linewidth]{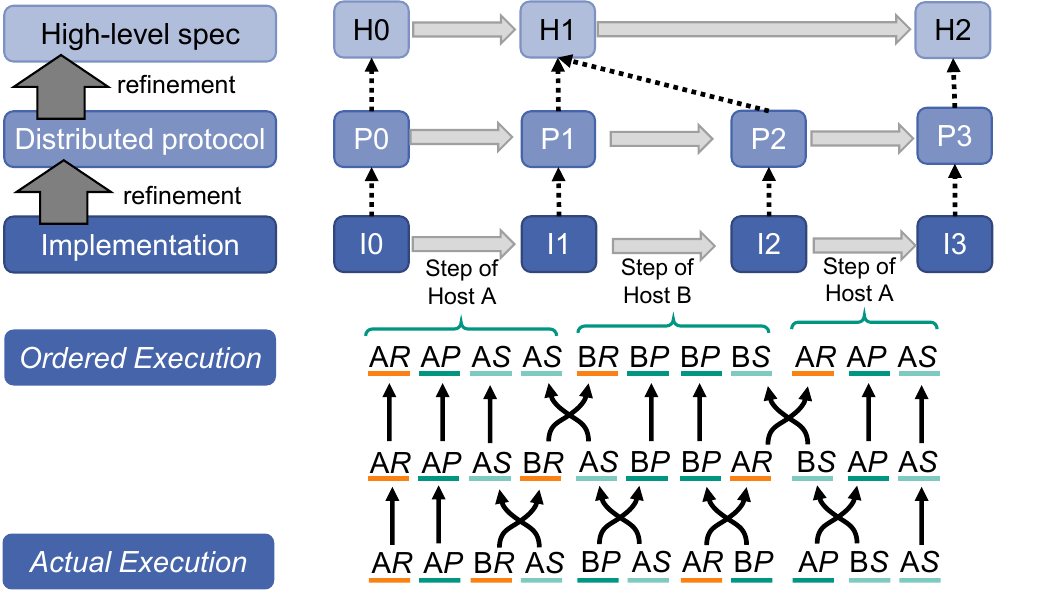}
\caption{
Overview of the IronFleet methodology.
The upper half shows the refinement steps from an implementation to a high-level specification.
Each state (I0 to I3) in a behavior of the implementation is mapped to a state of the distributed protocol (P0 to P3) which is mapped to a state of the high-level specification (H0 to H2).
The lower half shows how the low-level steps that each implementation step consists of are reordered.
Each low-level step is abbreviated where the first letter indicates the host (A or B) and the second letter what the step does.
An implementation step consists of low-level steps that receive messages (R), perform processing (P), or send a message (S).
While the steps of multiple hosts can be interleaved in an actual execution, IronFleet ensures that the steps can be reordered to an ordered execution in which contiguous low-level steps match implementation steps.
Figure from \cite[Figures 3 and 7]{hawblitzel_ironfleet_2015} (redacted).
}
\label{fig-ironfleet}
\end{figure}

\partitle{IronFleet}
A related idea has been used by Hawblitzel et al. \cite{hawblitzel_ironfleet_2015,hawblitzel_ironfleet_2017} for the IronFleet methodology to verify distributed systems.
In the following, we present ideas used by IronFleet and explain how our approach relates to these ideas.
Given an implementation for a distributed system written in Dafny \cite{leino_dafny_2010} and a high-level specification, IronFleet is a method to automatically create a machine-checked proof that the implementation meets the high-level specification.
IronFleet separates a distributed system into three layers: The high-level specification, a distributed protocol specification, and the implementation (see \cref{fig-ironfleet}).
IronFleet proves that the implementation refines the high-level specification by proving that the implementation refines the distributed protocol specification and that the distributed protocol specification refines the high-level specification.
The high-level specification specifies a centralized state machine that describes the expected externally visible behavior of a distributed system.
The distributed protocol specification defines a distributed state machine that runs on each host of the distributed system.
To keep the distributed protocol specification simple, the specification is abstract, e.g., it uses unbounded integers and unbounded sequences, and it is assumed that
each step is an atomic step in which messages are read from the network, the state is updated and messages are sent to the network.
The implementation layer is defined by imperative code that runs on each host.

\begin{figure*}
\includegraphics[width=\linewidth]{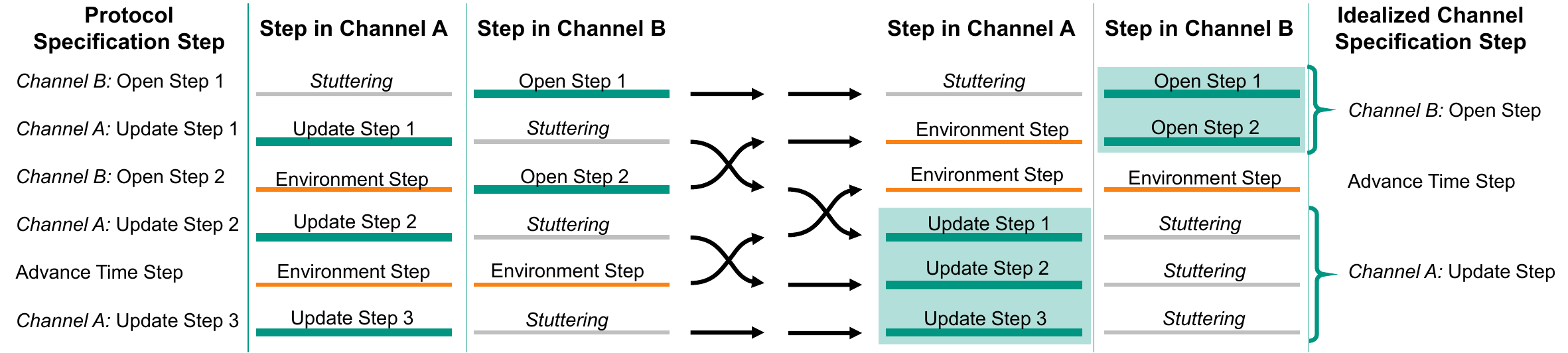}
\caption{
In the protocol specification $(II)$, steps for opening and updating two channels might be interleaved (left column). The refinement mapping \Circled{2} from specification $(II)$ to $(III)$ abstracts such interleaved steps to idealized steps (right column). Thereby, the specification's complexity is significantly reduced.
The second and third columns show how each protocol step is categorized for both channels as either a step in this channel (green), an environment step (orange) that is a step of time or another channel but has an effect on the channel, or a stuttering step (gray) that has no effect on the channel. Following the reordering rules, these steps can be reordered so that in each channel multiple steps are contiguous. These contiguous steps are replaced by an abstract step of an action of the idealized channel specification $(III)$ (right column).
Note that a reordering is shown also in \cref{fig-ironfleet} but the representation here is rotated.
}
\label{fig-interleavings}
\end{figure*}

To prove that the implementation of the distributed system refines the distributed protocol specification, Hawblitzel et al. first prove that the implementation for a single host refines the state machine specified by the distributed protocol specification.
Because the distributed protocol specification requires atomic steps where a single atomic step consists of receiving a message, updating the host's state, and sending a message, Hawblitzel et al. assume for this proof that each host step in the implementation is also such an atomic step.
In an actual execution, such implementation steps are not guaranteed to be atomic and the low-level steps of implementation steps of different hosts might be interleaved.
IronFleet uses an informal reduction argument to demonstrate that, for every actual execution, there exists an equivalent execution in which the low-level steps of each host's implementation step are contiguous as in the atomic step.
The reduction argument of Hawblitzel et al. is based on the insight that a reordering of events is valid if after reordering `(1) each host receives the same packets in the same order, (2) packet send ordering is preserved, (3) packets are never received before they are sent, and (4) the ordering of operations on any individual host is preserved' \cite[Section 3.6]{hawblitzel_ironfleet_2015}.
An example for such a reordering is shown in \cref{fig-ironfleet}.
In the following, we refer to these four rules as IronFleet reordering rules.
Using the proof that the implementation for a single host refines the state machine specified by the distributed protocol specification, Hawblitzel et al. prove that a distributed system composed of multiple host implementations refines the distributed protocol of multiple hosts \cite[Section 3.5]{hawblitzel_ironfleet_2015}.

\partitle{Relation of our approach to IronFleet}
The general approach of IronFleet is similar to our approach as our stepwise refinement from the security property $(V)$ to a detailed protocol specification $(I)$ corresponds to the layered proof of Ironfleet from a high-level specification to an implementation. The distributed protocol layer in IronFleet corresponds to our intermediate specification $(III)$ in which we abstract from protocol details.
In a distributed system as considered by IronFleet, each implementation host step consists of multiple lower-level steps that are independent of other hosts and hosts communicate by exchanging messages over the network.
In Lightning, payment channels are updated by multiple protocol steps that are independent of steps in other payment channels.
Although it might seem counter-intuitive at first sight, we identify payment channels (and not users) with hosts in the IronFleet methodology.
A payment channel is a two party protocol that does not `communicate' with other channels by exchanging messages but two payment channels affect each other by accessing the same shared variables if there is a user who participates in both payment channels.
Therefore, we identify shared variables of users who participate in multiple payment channels with the network in the IronFleet methodology.

We explain how we use the reduction and refinement ideas to abstract specification $(II)$ at an example:
\Cref{fig-interleavings} shows a simplified excerpt from a behavior in which a channel $B$ is opened, time is advanced, and a channel $A$ is updated.
In the idealized channel specification $(III)$, these steps are only three steps (see right side of \cref{fig-interleavings}).
In the protocol, the abstract steps for opening and updating a channel take multiple protocol steps which might be interleaved in an actual execution (see left side of \cref{fig-interleavings}).
We developed a refinement mapping \Circled{2} that maps protocol states to states of the idealized channel specification by implicitly applying two transformations:
Steps are reordered to get contiguous steps in a channel and the resulting batches of contiguous steps are replaced by idealized channel steps.
We check the correctness of the refinement mapping using model checking (see \cref{sec-model-checking-results}).

In the following, we explain the ideas behind the construction of the refinement mapping \Circled{2} that maps the protocol to idealized channels.
By definition of the protocol specification, each step in the protocol is a step in one of the payment channels or a step that advances time.
In the second and third column, \cref{fig-interleavings} shows how each protocol step is seen by each of the two channels in the example.
A protocol step of channel A is seen by channel A as the respective protocol step.
Depending on whether this protocol step has an effect on other channels, the step might be seen by another channel B as either a stuttering step (i.e., a step that leaves all variables of the channel unchanged) or as step that changes some variables of channel B.
We refer to the later steps as environment steps because, from the perspective of channel B, the variables of the channel B are changed not by the channel itself but by the channel's environment.
A step that advances time is an environment step in all channels.
For the reordering of steps, our refinement mapping considers the IronFleet reordering rules. However, as these rules are formulated for a message passing model, they need to be adjusted for our shared variable model.
The first rule `each host receives the same packets in the same order' can be rephrased as `all environment steps are in the same order'.
The third rule `packets are never received before they are sent' is not necessary because each step that changes a variable used by multiple channels is simultaneously both a sending step from the perspective of the channel performing the step and a receiving step from the perspective of an inactive channel whose variables are changed by the step.
The second rule `packet send ordering is preserved' and the fourth rule `the ordering of operations on any individual host is preserved' can be rephrased as `all protocol steps of a channel are in the same order'.
Consequences of these reordering rules are that, from the perspective of one channel, a stuttering step can be swapped with any other step and that a protocol step and an environment step can be swapped unless the protocol step affects another channel, i.e., the step is an environment step in another channel.
The reordering strategy used by our refinement mapping is that environment steps are swapped to happen before protocol steps until a protocol step is reached that ends a batch of protocol steps that is replaced by a single idealized channel step.

To verify the correctness of the refinement mapping, we use model checking and a proof (see \cref{sec-appendix-proofs}).
Comparable to the proof of the IronFleet methodology \cite{hawblitzel_ironfleet_2015}, we first check for a single payment channel that the protocol specification implements the idealized channel specification with a single channel.
We define specifications $(IIa)$ and $(IIIa)$ that describe the behavior of only a single payment channel but include a module that mocks the channel's environment, i.e. the module specifies the effects of environment steps that other channels can have on the specified channel.
Due to the environment mocking module, specifications $(IIa)$ and $(IIIa)$ describe all the steps the specified payment channel can take when the payment channel is part of a system with other payment channels.
We prove that the environment module mocks all possible steps by other channels in \cref{sec-appendix-proof-II-III}.
We specified a complex refinement mapping \Circled{2a} from specification $(IIa)$ to specification $(IIIa)$ for the reordering and abstraction of protocol steps to idealized channel steps.
We checked the refinement mapping \Circled{2a} using model checking (see \cref{sec-model-checking-results}).
We prove the refinement mapping \Circled{2} in the following way:
We define a refinement mapping \Circled{2} from specification $(II)$ to specification $(III)$ that uses the refinement mapping \Circled{2a} to map each individual channel from the protocol to an idealized channel.
We prove that by mapping each state of specification $(II)$ with the refinement mapping \Circled{2} to a state of specification $(III)$, each step of specification $(II)$ is mapped to a state of specification $(III)$.

\subsection{Refinement of Security Property}
\label{sec-proof-refine-sec-property}
To facilitate the refinement mapping \Circled{2} from specification $(II)$ to specification $(III)$, specification $(III)$ is defined as a real-time specification in which time can advance by arbitrary integer numbers instead of using the improved model of time.
To allow for an efficient model checking, we apply the same optimization for time as used above by defining specification $(IV)$ where bisimilar states are grouped in a zone (see \cref{sec-proof-improved_model_of_time}).
By a proof (see \cref{sec-appendix-time-skip-III-IV}) analogously to the proof of \cref{sec-proof-improved_model_of_time}, specification $(III)$ implements specification $(IV)$ and, by transitivity, specification $(I)$ implements specification $(IV)$.

Finally, we can show that specification $(IV)$ using idealized channels implements the idealized functionality defined in specification $(V)$.
We check this refinement by defining a refinement mapping \Circled{4} and checking the refinement mapping using model checking (see \cref{sec-model-checking-results}).
The definition of the refinement mapping \Circled{4} is straightforward because all variables of specification $(V)$  are also variables of specification $(IV)$.

In the following section, we present our results of model checking.
While the refinements introduced above significantly reduce the state space to explore, still only small finite models can be checked completely.
To the extent that we could verify the refinements by model checking, we conclude from the refinement steps described above that specification $(I)$ implements specification $(V)$, i.e. the specification of Lightning is an implementation of an idealized functionality for a payment network and fulfills the security property.

\section{Results of Model Checking}
\label{sec-model-checking-results}

We check the refinement mappings \Circled{2a} and \Circled{4} using model checking.
Additionally, we check our manual proof steps \Circled{1}, \Circled{2} and \Circled{3} and the whole stepwise refinement by simulation.

\setlength{\extrarowheight}{2pt}
\begin{table}
    \caption{Model checking of refinement mapping \Circled{2a} from specification $(IIa)$ to specification $(IIIa)$}
    \label{tab_model_checking_IIa_IIIa}
    \begin{tabularx}{\linewidth}{lXrr}
        \hline
        ID & Model & \# States & Runtime \\
        \hline
        C1 & Payment from user A to user B & $\sim$ $10^5$ & $\sim$ 3\,min \\
        C2 & Payment from user A over B to C & $\sim$ $10^5$ & $\sim$ 11\,min \\
        C3 & Payment from user C over A and B to D & $\sim$ $10^5$ & $\sim$ 11\,min \\
        C4 & Two payments: Payment from user A to B and payment from user B over A to C & $\sim$ $10^7$ & $\sim$ 9\,h \\ 
        C5 & Two concurrent payments from user A to B & $\sim$ $10^9$ & $\sim$ 1\,mo \\ 
        \hline
    \end{tabularx}

\end{table}

To model check the refinement mapping \Circled{2a} from specification $(IIa)$ to specification $(IIIa)$, we use the explicit state model checker TLC that explores all reachable states, calculates the refinement mapping on these states and verifies that the mapped states and steps fulfill specification $(IIIa)$.
Specification $(IIa)$ models two users and a payment channel and is parameterized by the information about the context of this payment channel, i.e., the other users in the payment channel network, and the payments to be processed.
As there are infinitely many possible ways to parameterize specification $(IIa)$, we only check a small selection of models that we deem representative.
We model check five different models that are listed in \cref{tab_model_checking_IIa_IIIa}.
To give an impression, the table also shows the magnitude of the number of distinct states that were explored and the time used by TLC (measured on a server with 40 physical cores).
In each model, the execution starts with two users (A and B) prepared to open a payment channel and TLC explores all possible behaviors of the two users to open the channel, communicate with mocked users where applicable, process payments, and close the channel. Each checked behavior ends with the channel being closed and the two users having their funds paid out on the blockchain.
The simplest model listed in \cref{tab_model_checking_IIa_IIIa} is a payment from user A who funded to channel to the other user.
Models C2 and C3 are models in which the channel between users A and B is an intermediate hop on a payment that includes mocked users.
Model C4 models two payments: A payment from user A to user B and a payment that user B sends to user C over user A as an intermediate.
There are many more states to explore in model C4 as for the previous models as the two payments can partially interleave: User B has received the payment from A to B by fulfilling the corresponding HTLC. While the fulfilled HTLC is removed, User B can already send the payment to user C.
Model C5 models two payment from user A to user B which is an even larger model as the two payments can interleave from the beginning.
By taking about a month to model check, this model is at the limits of what we can model check in reasonable time.
The main factor for this long runtime are states in which both users are dishonest. Starting from such states, many more behaviors are possible than from states in which at least one user is honest.

\begin{table}
    \caption{Model checking of refinement mapping \Circled{4} from specification $(IV)$ to secure payment system $(V)$}
    \label{tab_model_checking_IV_V}
    \begin{tabularx}{\linewidth}{Xrr}
        \hline
        Model & \# States & Runtime \\
        \hline
        Payment from user A over B to user C & $\sim$ $10^5$ & $\sim$ 1\,min \\
        Two payments: Payment from user A over B to C and payment from user C over B to A & $\sim$ $10^7$ & $\sim$ 45\,min \\
        Two concurrent payments: Payment from user A over B to C and payment from user A to  B & $\sim$ $10^7$ & $\sim$ 1\,h \\
        Three payments: Payment from user A over B to C, payment from user B to A, and payment from user B to C & $\sim$ $10^8$ & $\sim$ 13\,h \\
        Payment from user A over B and C to user D & $\sim$ $10^8$ & $\sim$ 12\,h \\
        \hline
    \end{tabularx}
\end{table}

The models that we model check to verify the refinement mapping \Circled{4} from specification $(IV)$ to specification $(V)$ are listed in \cref{tab_model_checking_IV_V}.
In all models except the last one, we model three users (A, B, and C) and two payment channels: one channel between user A and user B and the other between user B and user C.
In the last model, we model four users (A to D) and three payment channels so that a payment from user A to user D is possible.
In all these models, we model check actual multi-hop payments and thereby verify that Lightning implements specification $(V)$, the idealized functionality of a payment network.

Using TLC to explore the whole state space of larger models than the models we have just described becomes impractical.
However, we can partially verify larger models by using TLC's simulation mode in which the model checker starts in an initial state and chooses each next state randomly.
Recent work has shown that using simulation as a `lightweight' verification where more rigorous methods are not practical can be successful at finding critical flaws \cite{howard_smart_2024}.
Using simulation, we verify the abstractions \Circled{2a} and \Circled{4} for larger models.
We use also simulation to verify our structured proofs for the abstractions \Circled{1}, \Circled{2}, and \Circled{4} as well as to verify the whole proof that specification $(I)$ refines the secure payment system in specification $(V)$.

\section{Discussion}
\label{sec-discussion}

In this section, we review our approach and discuss limitations and future work.

\subsection{Choice of Formalization Language and Tools}

Protocol verifiers such as Tamarin \cite{meier_tamarin_2013} and ProVerif \cite{blanchet_modeling_2016} have successfully been used for unbounded verification of a number of security protocols \cite{basin_tamarin_2022}.
These verifiers could also be used to model aspects of Lightning and reason about the protocol's properties without the limitations of finite model checking.
These tools support modeling cryptographic primitives and allow for stronger adversary models.
However, it is challenging to model natural numbers with addition and subtraction of variables (see \cite[page 35]{the_tamarin_team_tamarin-prover_2024} and \cite[page 18]{blanchet_security_2022}) which is required for modeling blockchain transactions with amounts as in our TLA\textsuperscript{+} specification.
The generality of TLA\textsuperscript{+} allowed us to model all relevant aspects of Lightning.
However, we had to abstract cryptographic primitives (see \cref{sec-lightning}).

A benefit of TLA\textsuperscript{+} is that TLA\textsuperscript{+} does not restrict the way properties are proven, whether manually, using an explicit-state \cite{web-tlc} or a symbolic model checker \cite{web-apalache}, or a tool for automated reasoning \cite{web-tlaps}.
We used the explicit-state model checker TLC.
The automated model checking process and the generation of counterexamples facilitated our process of defining the intermediate specification $(III)$ and the complex refinement mapping from specification $(II)$ to specification $(III)$.
Because we used model checking, we could only verify the security for a number of four users.
Theoretically, there might be attacks that only apply when there are more than four users; these attacks would not be discovered by our approach.
A further consequence of the choice for model checking was that we had to restrict the adversary model to restrict the messages that an adversary can send.
However, the TLA\textsuperscript{+} specification could also be verified using unbounded verification with a theorem prover \cite{web-tlaps}.
Currently, writing such a proof seems to be too effortful. However, it will be facilitated by future advancements in assistance and automation for theorem proving.

\subsection{Limitations and Future Work}
While our work is a step towards a complete formal verification of Lightning, there are limitations not just with respect to the verification methodology as discussed above but also with respect to the model of the protocol.
To formalize Lightning, we left out aspects that are not required for security. E.g., the fees that the sender of a payment pays to intermediate hops, key rotation and onion-routing for increased privacy, and route finding for multi-hop payments.
Also, the model of the blockchain could be augmented by considering reorganizations and delays for transaction inclusion.
Further, we assume that all parties participating in the protocol are known from the beginning and that channels are opened and then closed. We do not model that channels are reopened and that new parties join the network.

Our adversary model restricts the capabilities of an adversary by disallowing the sending of messages with arbitrary content and the exchange of information between adversarial users.
While we had to make these restrictions to keep the specification's state space at a manageable level, follow up work could find optimizations that allow for making the adversary stronger.

While this work was in progress, the official Lightning specification was extended to allow for dual-funding of channels, i.e. both parties may deposit coins into a channel during opening (see \cite[Channel Establishment v2]{web-lightning-bolt2}).
The method we presented in this work can be used to formalize and analyze the security of this advancement as well.

The property that we model checked in this paper focuses on security.
Future work could also include other properties in the idealized functionality and adapt the proof.
For example, it could be shown that, assuming honest and cooperating users and timely delivery of messages, payments are guaranteed to succeed.

\subsection{Known Attacks on Lightning}

While we prove that Lightning is secure, prior work has identified several attacks on the assumptions of Lightning and properties that are not included in our security definition.
Several works discuss griefing \cite{perez-sola_lockdown_2020,mizrahi_congestion_2021-1,rohrer_discharged_2019-1,lu_general_2022} and other denial-of-service attacks \cite{tochner_route_2020} in which no funds are stolen but the regular operation is disturbed.
In the wormhole attack \cite{malavolta_anonymous_2019,tikhomirov_quantitative_2020-2}, an attacker reroutes a payment and receives the fees intended for other intermediate hops, however, the actual amount of the payment is unconcerned.
Extending the TLA\textsuperscript{+} formalization of Lightning with fees would allow for modeling the wormhole attack.
Because fees would also needed to be considered in the security property, this change would make the security property more complicated and, thus, we decided to leave modeling of fees out of scope for this paper.
Further, there are attacks on privacy \cite{herrera-joancomarti_difficulty_2019-2,van_dam_improvements_2020,rohrer_counting_2020-1,kappos_empirical_2021-1,romiti_cross-layer_2021-1,kumble_how_2021-1,biryukov_analysis_2022,ndolo_not_2024} which is a property that is not included in the security definition used in this work.
Other works \cite{harris_flood_2020-2,riard_time-dilation_2020,nadahalli_timelocked_2021,sguanci_mass_2023} discuss the violation of the assumption that users can timely publish a transaction on the blockchain.
In practice, security flaws are also based on implementations not following the specification, e.g., by missing verification checks \cite{web-cve-2019-email}. 

\subsection{Evaluation of Protocol Modifications}

Besides proving that the formalization of Lightning fulfills the security property, the TLA\textsuperscript{+} formalization of Lightning can also be used to test proposed modifications of Lightning.
To quickly find flaws, it suffices often to model check only a subset of the specification (e.g., only a single channel) and verify just lower-level invariants and temporal properties.
While such an approach cannot prove that a modification of Lightning is secure, it can accelerate protocol development by providing a short feedback loop to developers.
To evaluate this idea, we introduced flaws by adapting the formalization of Lightning and verified that the introduced flaws are detected by model checking.
As an example, we tested whether the, so called, second-stage transactions for HTLCs can be removed by including the conditions of the outputs of HTLC second-stage transactions directly in a commitment transaction's outputs. Verification with the model checker showed within a few minutes that this makes the protocol insecure and, thus, Lightning cannot be simplified in this way.

\subsection{Connecting the Specification to an Implementation}

There exist multiple implementations of Lightning that are actively used.
While the TLA\textsuperscript{+} specification of Lightning is not an executable implementation, it can be used to validate the correctness of existing implementations.
Cirstea et al. \cite{cirstea_validating_2025} have recently shown a lightweight way to connect implementations in imperative languages to a TLA\textsuperscript{+} specification.
Their approach is to collect traces of program executions and to use TLC to check these traces against traces described by the corresponding TLA\textsuperscript{+} specification.
Transferring their approach to Lightning, is an opportunity for follow up work.

\section{Conclusion}
\label{sec-conclusion}

We have formalized Lightning and a secure payment system that captures the security property of Lightning in TLA\textsuperscript{+}.
Using stepwise refinement, we were able to model check small models which showed that Lightning implements the secure secure payment system.
Therefore, the formalization can serve as a starting point for future work towards a formally verified reference implementation of Lightning in TLA\textsuperscript{+}.
Furthermore, the approach could also be used to enable specifications of other protocols to be model checked.
In particular, the abstraction of the model of time can be generalized as well as the general approach of separating model checking of local behavior and behavior in a network.
Thus, our approach can be a valuable tool to analyze new versions of Lightning or similar protocols.
We hope that this approach contributes to making future protocols for payment channel networks and related protocols more secure.

\bibliographystyle{IEEEtran}
\bibliography{library,websites}

\appendix

\subsection{On the Formalization of \cite{kiayias_composable_2020}}
\label{sec-appendix-composable}

While working on the formalization of Lightning in TLA\textsuperscript{+}, we found the following two flaws in the formalization of \cite{kiayias_composable_2020}.
While these flaws render the formalized protocol insecure, they are easy to fix and it seems that the security proof could work for the corrected protocol.
The following references to figures and page numbers refer to the paper's version on ePrint \cite[version 20220217:205237]{kiayias_composable_2019-1}.

The first flaw concerns the punishment of the publication of an outdated commitment transaction for which the protocol is specified in Fig. 37, lines 21-25 (page 64).
A problem arises, for example, in the following situation:
Before the current time, user Alice has sent an outgoing HTLC to user Bob. The HTLC was committed and has been fulfilled. Now, the HTLC's absolute timelock has passed.
Now, Alice has an outdated commitment transaction that commits the HTLC and Alice has Bob's signature on the HTLC timeout transaction corresponding to that HTLC.
Alice is malicious and publishes this outdated commitment transaction together with the HTLC timeout transaction which is valid because the HTLC's absolute timelock has passed.
Bob runs the protocol specified in Fig. 37 and arrives at line 22.
In line 22, a revocation transaction is created whose inputs spend all outputs of the outdated commitment transaction.
In the situation described, such a revocation transaction cannot be valid because the HTLC output in the outdated commitment transaction is already spent.
Instead of an input referencing the outdated commitment transaction's HTLC output, the revocation transaction must have an input that references the output of the HTLC timeout transaction.
While the protocol as formalized in Fig. 37 is incorrect, the security proof on page 90 does not mention the case that a second-stage (timeout or success) HTLC transaction might have been published for an outdated commitment transaction and, thus, the protocol seems to be correct.
While the protocol can be corrected by adding a specification of how such cases are handled, it is difficult to detect such flaws by inspecting the proof manually.

For a scenario that shows the impact of the second flaw, assume that in the payment channel between users Alice and Bob there is currently an unfulfilled HTLC for a payment from Alice to Bob.
The HTLC's absolute timelock passes and the HTLC times out.
Bob unilaterally closes the payment channel by publishing the latest commitment transaction.
The commitment transaction contains an output for the HTLC with the spending method $pt_{\mathrm{rev},n+1} \lor (pt_{\mathrm{htlc}, n+1}, \mathrm{\mathtt{CltvExpiry}~absolute}) \lor (pt_{\mathrm{htlc}, n+1} \land ph_{\mathrm{htlc}, n+1}, \mathrm{on~preimage~of~} h)$ (see Fig. 40, line 8) where $pt$ are public keys for which Alice has the private key, $ph$ are public keys for which Bob has the private key, and $\mathtt{CltvExpiry}$ is the HTLC's absolute timelock.
Now, Alice could spend the output of the commitment transaction corresponding to the HTLC by creating a transaction with an input that uses the disjunct $(pt_{\mathrm{htlc}, n+1}, \mathrm{\mathtt{CltvExpiry}~absolute})$ because the absolute timelock has passed.
Bob holds the HTLC success transaction that was signed by Alice with the private key corresponding to $pt_{\mathrm{htlc}, n+1}$ (Fig. 43, line 13).
If Bob has the preimage for the HTLC, Bob can add the preimage to the HTLC success transaction and can spend the HTLC's output in the commitment transaction using the disjunct $(pt_{\mathrm{htlc}, n+1} \land ph_{\mathrm{htlc}, n+1}, \mathrm{on~preimage~of~} h)$ of the spending method.
However, the HTLC success transaction is also valid without the preimage as it fulfills the conditions of the disjunct $(pt_{\mathrm{htlc}, n+1}, \mathrm{\mathtt{CltvExpiry}~absolute})$ because the HTLC's absolute timelock has passed.
Because Bob published his latest commitment transaction, Alice cannot revoke the transaction and this would result in Bob receiving the amount of the HTLC without releasing (or even without having) the preimage.
One way to correct this problem is to use the possibility that the transaction model of the paper \cite[Section 12]{kiayias_composable_2019-1} allows an output to specify a list of spending conditions and an input spending this output to reference a specific spending condition.
The correction would be to transform the disjunction in Fig. 40, line 8 into a list of spending methods and add the corresponding indices to the inputs in Fig. 43, line 13.
Another way is taken by the Lightning Network's specification which uses in the output's spending method for a timeout the operator \texttt{CHECKLOCKTIMEVERIFY} that verifies that a spending transaction has a certain timelock set (\texttt{locktime}). As Bob's HTLC success transaction has the \texttt{locktime} set to 0, the success transaction cannot fulfill this spending method.
We found this flaw by model checking when we had a similar flaw in a draft of our formalization. We fixed the flaw in our formalization by modeling the \texttt{locktime} field for transactions and adding a validity condition modeling the operator \texttt{CHECKLOCKTIMEVERIFY}.

\subsection{Lightning and Formalization}
\label{sec-appendix-formalization}

In this section, we present Lightning according to the official specification \cite{web-lightning-bolts} and describe what aspects we consider for the formalization and which aspects we leave out of scope.
We also explain how we model the protocol in TLA\textsuperscript{+}.
We start by explaining how Lightning implements multi-hop payments in \cref{sec-lightning-multi-hop-payments}.
Later, we explain how payment channels are opened, updated, and closed.
We give an overview of the TLA\textsuperscript{+} formalization of Lightning in \cref{sec-lightning-formalization-overview}.
We present the basic ideas behind the protocol's formalization in TLA\textsuperscript{+} at the example of multi-hop payments in \cref{sec-lightning-formalization-multi-hop-payments}.
This includes an explanation how sending and receiving messages and the state of an HTLC is modeled.
In this section, we also discuss how random values, hash functions, transactions, and signatures are modeled.
Because TLA\textsuperscript{+} does not offer a native way to model cryptographic primitives such as hash functions and signatures, we model those primitives in an abstract way.
In the real world, Lightning relies on security properties of these primitives, e.g., that it is practically impossible to find a preimage to a given hash value or that it is practically impossible to create a valid signature for a public key without knowing the associated private key.
In the formalization, we assume that these security properties of the primitives hold, e.g. that an adversary is not capable of reverting a cryptographic hash function.
Therefore, we can model the primitives via abstractions.

\subsubsection{Multi-Hop Payments}
\label{sec-lightning-multi-hop-payments}

This section explains how Lightning implements multi-hop payments.
If a user wants to receive a payment, the user creates an invoice and sends it to the user who wants to send the payment.
The invoice (see \cref{tab_fields_invoice}) contains a payment hash for which the receiver knows the preimage and a payment secret.
Further, the invoice contains additional data to describe the purpose of the payment for example.
This additional data is not included in the TLA\textsuperscript{+} formalization because it is not required for the security of the protocol.
The model of an invoice used in the TLA\textsuperscript{+} formalization contains just the hash and the payment secret.

After the payment's sender has received the invoice, the sender extracts the payment hash and the payment secret.
The payment's sender chooses a route through the Lightning Network, i.e. a list of users that are connected through payment channels and whose last hop is the payment's receiver. For this paper, we leave route finding out of scope and assume that the sender receives the route as external input.

The payment's sender sends an `update\_add\_htlc' message (see \cref{tab_fields_update_add_htlc}) to the first hop on the path.
The `update\_add\_htlc' message contains an id for the HTLC and the HTLC's amount, hash, and timelock. Further, the message contains an `onion\_routing\_packet' that contains the information who the next hop is and an encrypted package for the next hop with the information what the hop after the next hop is and so on.
The next hop receives the `update\_add\_htlc' message and verifies that the amount and timelock are in expected ranges and that the onion\_routing\_packet can be decrypted.
If a verification fails, the channel is closed. The principle that the channel is closed when a verification fails, is used throughout Lightning.
If the hop receiving the `update\_add\_htlc' message is not the payment's receiver but an intermediate hop: When the incoming HTLC for the payment is `irrevocably committed', the intermediate hop sends an `update\_add\_htlc' message to the next hop on the route.
An HTLC is irrevocably committed for a user A if user A has received the other user's signature on the commitment transaction containing the HTLC and has received the secrets required for revocation for all transactions that do not contain the HTLC and were signed by user A. \label{explanation-irrevocably-committed}
The forwarding of the `update\_add\_htlc' message is repeated until the payment's receiver receives the `update\_add\_htlc' message.
If the hop receiving the `update\_add\_htlc' message is the payment's receiver, the receiver sends the payment's preimage in an `update\_fulfill\_htlc' message (see \cref{tab_fields_update_fulfill_htlc}) to the previous hop.
In the TLA\textsuperscript{+} formalization, the `update\_fulfill\_htlc' message contains only the preimage because the associated HTLC can be found by hashing the preimage and looking up the HTLC by its hash value.
If the hop receiving the `update\_fulfill\_htlc' message is not the payment's sender, the hop forwards the payment's preimage in an `update\_fulfill\_htlc' message to the previous hop.
The forwarding of the `update\_fulfill\_htlc' message is repeated until the payment's sender receives the `update\_fulfill\_htlc' message.
If the hop receiving the `update\_fulfill\_htlc' message is the payment's sender, the payment has been successfully performed.

If an incoming HTLC that is committed times out, a user sends an `update\_fail\_htlc' message (see \cref{tab_fields_update_fail_htlc}) to the previous hop.
While the message contains fields for the reason of failure in Lightning, the TLA\textsuperscript{+} formalization contains only the id of the failed HTLC because learning the reason for failure is relevant for debugging but not for the protocol's functionality or security.
If the hop receiving the `update\_fail\_htlc' message is not the payment's sender: When the removal of the hop's outgoing HTLC is irrecoverably committed or the appropriate HTLC timeout transaction is confirmed on-chain, the hop sends an `update\_fail\_htlc' message to the previous hop.
The forwarding of the `update\_fail\_htlc' message is repeated until the payment's sender receives the `update\_fail\_htlc' message.
If the hop receiving the `update\_fail\_htlc' message is the payment's sender: When the removal of the hop's outgoing HTLC is irrecoverably committed or the appropriate HTLC timeout transaction is confirmed on-chain, the payment is finally cancelled.

\begin{table}
    \caption{Fields of a Lightning `invoice'.}
    \label{tab_fields_invoice}
    \begin{tabularx}{\linewidth}{lXX}
        \hline
        Variable & Description & Formalization \\
        \hline
        timestamp & Unix Timestamp & Not required. \\
        p & Payment hash & `hash' field of invoice \\
        s & Payment secret & `paymentSecret' field of invoice \\
        d & Description of purpose of payment & Not required. \\
        m & Additional metadata & Not required. \\
        n & Public key of the payment's receiver. & Not required. \\
        h & Description of purpose of payment & Not required. \\
        x & Expiry time & Not required. \\
        c & Minimal delta between HTLC timelocks for last HTLC & Not required because formalized as a constant default value. \\
        f & Fallback on-chain Bitcoin address & Not required. \\
        r & Routing information & Not required. \\
        9 & Feature bits & Not required. \\
        signature & Signature of above fields & Not included. We instead assume that the receiver can verify the integrity of the invoice. \\
        \hline
    \end{tabularx}
\end{table}

\begin{table}
    \caption{Fields of `update\_add\_htlc' message.}
    \label{tab_fields_update_add_htlc}
    \begin{tabularx}{\linewidth}{lXX}
        \hline
        Variable & Description & Formalization \\
        \hline
        channel\_id & ID of the channel derived from funding transaction & Not required. \\
        id & ID of the HTLC & `id' field of HTLC \\
        amount\_msat & Amount of HTLC in millisatoshi & `amount' field of HTLC \\
        payment\_hash & Hash of HTLC & `hash' field of HTLC \\
        cltv\_expiry & Timelock of HTLC & `absTimelock' field of HTLC \\
        onion\_routing\_packet & Data for forwarding to next hop including encrypted payload for next hop. & `dataForNextHop' field of HTLC \\
        \hline
    \end{tabularx}
\end{table}

\begin{table}
    \caption{Fields of `update\_fulfill\_htlc' message.}
    \label{tab_fields_update_fulfill_htlc}
    \begin{tabularx}{\linewidth}{lXX}
        \hline
        Variable & Description & Formalization \\
        \hline
        channel\_id & ID of the channel derived from funding transaction & Not required. \\
        id & ID of the HTLC & Not required. \\
        payment\_preimage & Preimage for HTLC & `preimage' field \\
        \hline
    \end{tabularx}
\end{table}

\begin{table}
    \caption{Fields of `update\_fail\_htlc' message.}
    \label{tab_fields_update_fail_htlc}
    \begin{tabularx}{\linewidth}{lXX}
        \hline
        Variable & Description & Formalization \\
        \hline
        channel\_id & ID of the channel derived from funding transaction & Not required. \\
        id & ID of the HTLC & `id' field of HTLC \\
        len & Length of the reason field & Not required. \\
        reason & Reason why HTLC failed & Not required because the only modeled reason is timeout. \\
        \hline
    \end{tabularx}
\end{table}

\subsubsection{Formalization Overview}
\label{sec-lightning-formalization-overview}

We formalize Lightning in TLA\textsuperscript{+}.
The formalization describes all possible actions how a user of the payment channel initiates transactions or reacts to messages or events.
In its structure, the formalization of the protocol specification follows the informal specification of the Lightning Network \cite{web-lightning-bolts}.
The formalization abstracts, however, multiple implementation details and parts that are not part of the main functionality such as fees and error messages. % also (optional) anchor outputs and trimmed dust outputs, but a reader might not know them at this point

We formalize Lightning in an event-based specification.
In this manner, the protocol can be implemented and this approach allows for validating that the protocol is secure for every possible order of events.

The TLA\textsuperscript{+} specification of the protocol consists of three modules:
Two modules concern the specification of actions that a user performs for the execution of the payment channel protocol:
The module HTLCUser specifies the actions concerning HTLCs for multi-hop payments, e.g., sending an invoice, creating an HTLC, fulfilling an HTLC.
The module PaymentChannelUser specifies how the payment channel is created, how the payment channel is updated when a new HTLC is added or a fulfilled HTLC is persisted, how the payment channel is closed, how an adversarial user can cheat by publishing transactions on the blockchain, and how an honest user punishes a cheating user.
For example, the module PaymentChannelUser includes actions for creating and sending a signature of a new commitment transaction to the other user, processing messages from the other user, or publishing a commitment transaction on the blockchain to close the channel.
The third module is LedgerTime, the clock that increases the current time. Time is measured in Lightning by the block count of the Bitcoin blockchain. Thus, it is represented as an integer number and increased in integer steps.
The specification puts these three modules together by instantiating the LedgerTime module, the HTLCUser module, and the PaymentChannelUser module.
While the action of the LedgerTime module is a global action to advance time, the actions of the HTLCUser module and the PaymentChannelUser modules are parameterized by a user and, if applicable, a channel and the other user in the channel.
Formally, the TLA\textsuperscript{+} specification is defined by a set of initial states and a $Next$ action that describes possible steps that can lead from one state to a new state.

The TLA\textsuperscript{+} formalization models a system that is comprised of the users of a payment channel network and the payment channels between them.
A state of the modeled system is defined by the variables that are shown in \cref{tab_variables_spec_I}.
Some of these variables describe aspects of the system in general and some model variables of a specific user of which some are for a specific payment channel of that user.
The meaning and use of these variables are explained in the following sections.

The TLA\textsuperscript{+} formalization expects as external input the three constants NameForUserID, a sequence of modeled users, $u$InitialPayments, a set of payments to be sent by user $u$, and $u$InitialBalance, the initial balance of user $u$.
Each record in $u$InitialPayments describes a payment by an id, an amount, a point in time until the payment should be processed, and a path from the sender to the recipient.
In the initial state, the variable $u$NewPayments is initialized to the value of $u$InitialPayments for a user $u$ and to each payment record fields are added for storing values later, e.g., for the hash associated with the payment or a boolean value whether an invoice for this payment has been requested.

\begin{table}
    \caption{Variables that are part of the state of the TLA\textsuperscript{+} formalization.}
    \label{tab_variables_spec_I}
    \begin{tabularx}{\linewidth}{lX}
        \hline
        Variable & Description \\
        \hline
        \multicolumn{2}{l}{Global variables} \\
        \hline
        LedgerTime & Models the current time as an integer value representing the current height of the blockchain. \\
        Messages & Set of messages of which each user can process messages that are sent to the user. \\ 
        LedgerTx & Models the blockchain as the set of all published transactions. \\
        TxAge & Models the age of published transactions, i.e. how many blocks have been created since the transaction was published. \\
        \hline
        \multicolumn{2}{l}{Variables for user $u$:} \\
        \hline
        $u$Payments & Set of payments of user $u$ for which user $u$ is either sender or recipient. Each payment has a status that indicates whether the payment is new, processed, or aborted.\\
        $u$ExtBalance & External balance of user $u$. External refers to the balance on the blockchain, i.e. outside the payment channel network. \\
        $u$ChannelBalance & Balance of user $u$ inside the payment channel network. This equals the sum of the user's balance in all payment channels. \\
        $u$Honest & Specifies whether the user $u$ is honest. The value is initially set to either true or false and does not change. \\
        $u$NewPayments & Set of records for payments that the user $u$ wants to send. The module HTLCUser processes these payments by creating HTLC records that are added to variable Vars. \\
        $u$PreimageInventory & Set of the preimages that the user $u$ knows. \\
        $u$LatePreimages & Preimages that have been received after the associated HTLC has timed out. \\
        $u$PaymentSecretForPreimage & Function that maps for each incoming payment for which an invoice has been sent the preimage to the payment secret. \\
        \hline
        \multicolumn{2}{l}{Variables for channel $c$:} \\
        \hline
        $c$Messages & Sequence of messages sent between two users of a channel. Each user can process the first message that is sent to the user. \\
        $c$UsedTransactionIds & Helper variable that stores all ids that have already been used for creating transactions. This set helps to ensure that each new transaction id is unique. \\
        \hline
        \multicolumn{2}{l}{Variables for user $u$ of each channel $c$:} \\
        \hline
        $c,u$Balance & Integer that indicates the current balance of the user $u$ in channel $c$. \\
        $c,u$State & Protocol state the user $u$ is in in the channel $c$. This implies what messages the user $u$ expects to receive next. \\
        $c,u$Vars & Record with variables of the user $u$ in the channel $c$. Most importantly Vars contains the fields `IncomingHTLCs' and `OutgoingHTLCs' that store sets that contain a record for each incoming resp. outgoing HTLC. \\
        $c,u$DetailVars & Contains variables that are only used inside the module PaymentChannelUser. \\
        $c,u$Inventory & Contains keys and a set of transactions that the user $u$ can sign and, if the transaction is valid after signing, publish on the blockchain. \\
        \hline
    \end{tabularx}
\end{table}

\subsubsection{Formalization of Multi-Hop Payments}
\label{sec-lightning-formalization-multi-hop-payments}

The actions of a user as described in \cref{sec-lightning-multi-hop-payments} are specified as actions of the HTLCUser module.
The actions in the HTLCUser module are the actions of one specific user $u$ for a specific channel $c$ and are parameterized by the id of channel $c$, the id of user $u$, and the id of the other user in channel $c$.

The action `RequestInvoice' of the module HTLCUser is enabled if there is a payment for the user $u$ in $u$NewPayments and no invoice has been requested for this payment.
The action chooses such a payment, sends a message of type `RequestInvoice' to the payment's recipient and updates the payment to store that an invoice has been requested for this payment.
\emph{Sending of a message} is modeled by appending a record to a global variable `Messages'.
The `Messages' variable contains a list of all messages that are in transit.
The appended message has a field that describes the message's type, a field that states the message's sender, and a field that states the message's recipient and additional fields for payloads.
The payload for a `RequestInvoice' message is the payment id for which an invoice is requested.

The action `GenerateAndSendPaymentHash' of the module HTLCUser models the reception of a `RequestInvoice' message and the reply of sending an invoice to the sender of the `RequestInvoice' message.
The \emph{reception of a message} is modeled by a condition that inspects the records in `Messages' that have a recipient that equals the user $u$.
If the first message that is sent to the user $u$ has the type `RequestInvoice', then the action `GenerateAndSendPaymentHash' is enabled.
The action has to draw random values for the preimage and the payment secret.
TLA\textsuperscript{+} does not offer a native way to model randomly drawn values.
In Lightning, the values for the preimage and the payment secret should be unpredictable and hard to guess for an adversary.
In the TLA\textsuperscript{+} formalization, it suffices to model the adversary in a way that the adversary cannot guess the preimage and the payment secret.
Random values drawn in Lightning for the preimage and the payment secret are most likely unique, i.e. drawing these values a second time randomly will most likely lead to different values.
This uniqueness property of drawing random values is modeled in the following way in the TLA\textsuperscript{+} formalization:
A random preimage and payment secret are deterministically derived from the payment's id for which the preimage is used by adding constants to the payment's id.
Because the ids of payments are unique, this approach ensures that the preimages and payment secrets are unique as well.
In Lightning, the receiver of a payment sends as part of the invoice the hash of the preimage to the payment's sender.
However, in TLA\textsuperscript{+} there is no native way to calculate the hash value of a variable.
The reason why Lightning uses a hash function is that it should be simple to check whether a preimage and a hash correspond to each other but it should be impossible to calculate the preimage given its hash value.
As we can restrict the adversary in the specification to not calculate preimages from hashes, we use a simple approach to model a \emph{hash function}: We use the Identity function as hash function, i.e. the hash value of a preimage equals the preimage.
With this approach, it is trivial to check whether a hash corresponds to a preimage.
The distinction whether a given value is a preimage or a hash, depends on the variable that a value is stored in. The formalization is carefully written so that a value of a hash is never written to a variable that contains a preimage. However, writing a hash value given the knowledge of a preimage is possible as this models the execution of the hash function.
The action `GenerateAndSendPaymentHash' sends a message modeling the invoice that contains the hash value and the payment secret to the sender.
Further, the preimage and the payment secret are stored in the variables $u$PreimageInventory and $u$PaymentSecretForPreimage respectively.

The request for an invoice can also be ignored without further changes.
This is modeled by the action `IgnoreInvoiceRequest'.

The reception of the invoice by the payment's sender is modeled by the action `ReceivePaymentHash'.
The payment's sender stores the hash value and the payment secret in the record that describes the respective payment and calculates the onion package that is sent with the payment along the route.
The onion package contains for each hop a value that determines the next hop, the absolute timelock for the HTLC to use with the next hop and an onion package for the next hop.
For the last hop, the onion package contains the payment secret and the payment's amount.

If the payment channel $c$ is open and ready to operate, the payment channel $c$'s state stored in the variable $c,u$State of user $u$ is `rev-keys-exchanged'.
The action `AddAndSendOutgoingHTLC' sends an `update\_add\_htlc' message if the channel is ready and there is a payment that fulfills the following conditions:
The action is enabled for user $u$ if the user $u$ is the payment's sender or if the payment's incoming HTLC has been irrevocably committed (see \cref{explanation-irrevocably-committed}).
The hash for the payment must be known and the payment's timelock must be in the future.
The next hop for this payment must be the other user of the payment channel $c$.
An HTLC with the same hash must not already exist.
The current balance of user $u$ in the channel $c$ must be at least the payment's amount.
The effects of the action are that the payment is removed from the $u$NewPayments variable, the HTLC is added to the $c,u$Vars variable, and the `update\_add\_htlc' message is sent.
This message is sent with relation to channel $c$.
We use the $c$Messages variable to model the messages exchanged between two users of a channel $c$.
Thus, the `update\_add\_htlc' message is stored in $c$Messages.

The reception of the `update\_add\_htlc' message is modeled by the action `ReceiveUpdateAddHTLC'.
This action creates a new payment record in the user's $u$NewPayments variable if the payment is to be forwarded and, otherwise, verifies that the payment secret is correct and that the HTLC has the correct amount.

If an incoming HTLC is irrevocably committed and the user $u$ is the payment's receiver, the user $u$ can fulfill the incoming HTLC.
Whether an HTLC is irrevocably committed, is encoded in the state of the HTLC:
The \emph{state of an HTLC} is initially set to \textsc{new} when the HTLC is created in the `AddAndSendOutgoingHTLC' action or the `ReceiveUpdateAddHTLC' action (see \cref{fig-HTLC-graph}).
Once the HTLC is irrevocably committed, an action of the PaymentChannelUser module sets the HTLC's state to \textsc{committed}.
If an HTLC is fulfilled, the HTLC's state is advanced to \textsc{fulfilled}. Once the HTLC is irrevocably removed, an action of the PaymentChannelUser module sets the HTLC's state to \textsc{persisted}.
If an HTLC is failed, the HTLC's state is updated to \textsc{off-timedout}. Once the HTLC is irrevocably removed, an action of the PaymentChannelUser module sets the HTLC's state to \textsc{timedout}.
To fulfill the HTLC, a user sends an `update\_fulfill\_htlc' message to the other user in the payment channel.
The sending of the `update\_fulfill\_htlc' message is modeled by the `SendHTLCPreimage' action.
The formalization models that an HTLC can be fulfilled even after the HTLC's timeout during a grace period of a fixed length.
The length of the grace period, i.e., the number of blocks to wait after an HTLC's timeout, is defined by the constant $G$ which we set to 3 by default.
Adding a grace period is suggested by the official Lightning specification but not required.
We included the grace period in the formalization because the grace period creates an interesting situation that is relevant for the protocol's security because during the grace period the HTLC can be timed out as well as fulfilled.
If an HTLC is fulfilled after its timeout, the hash of the HTLC is stored in the field `FulfilledAfterTimeoutHTLCs' of the variable $c,u$Vars.

\begin{figure}
\includegraphics[width=\linewidth]{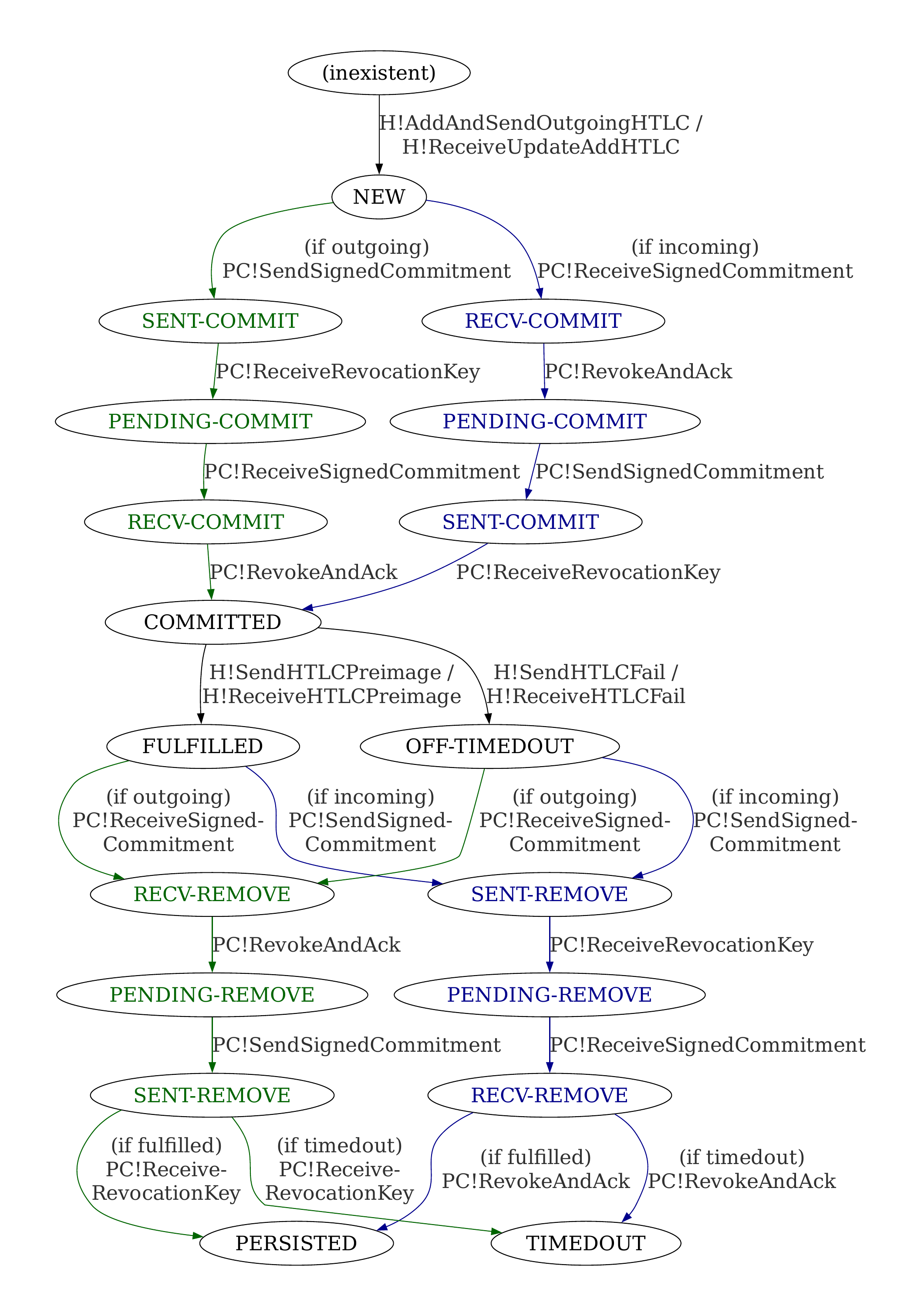}
\caption{Flow chart of HTLC states. The actions with the prefix H! are actions of the HTLCUser module (see \cref{sec-lightning-multi-hop-payments,sec-lightning-formalization-multi-hop-payments}); those prefixed with PC! are actions of the PaymentChannelUser module (see \cref{sec-lightning-opening,sec-lightning-updating,sec-lightning-closing}). Outgoing HTLCs follow the path printed in green; incoming HTLCs follow the path printed in blue.}
\label{fig-HTLC-graph}
\end{figure}

The action `ReceiveHTLCPreimage' models the reception of an `update\_fulfill\_htlc' message.
If the preimage is received after the HTLC's timeout and the grace period have passed, the payment might still be successful but it can also be aborted because the preimage reached the user too late.
The fact that the HTLC's preimage was received late is stored by adding the preimage to the set stored in the variable $u$LatePreimages.

Failing an HTLC is modeled by the action `SendHTLCFail'.
This action sends a `update\_fail\_htlc' message and updates the failed HTLC's state to \textsc{off-timedout}.
The action `ReceiveHTLCFail' receives the `update\_fail\_htlc' message and updates the state of the receiver accordingly.

\subsubsection{Keys and Funding-, Commitment- and HTLC-Transactions}
\label{sec-lightning-keys-transactions}

In this section, we present the keys that are used in Lightning and what they are used for.
To reduce the risk of being tracked by third parties, each user in Lightning has a set of private and public keys of which each key is used for one specific purpose.
Further, keys are rotated with every commitment transaction to prevent leaking information to a third party that gets to know multiple commitment transactions.
As our analysis focuses on the security property that users finally receive their correct balance and not on privacy leaks, we model these keys by a single key pair per user.

\subsubsection{Keys and Key Derivation}

Lightning makes use of multiple keys to build transactions.
Here, we explain these keys as they are used in Lightning.
For the formalization, we use a simplified model that we will present below.
Each user has the following set of keys according to the Lightning specification.
From one user's perspective, the user's own keys are prefixed by `local' and the other user's keys are prefixed by `remote'.
\begin{itemize}
  \item funding\_pubkey: A user's public key that is used to lock the output of the funding transaction. The appropriate private keys of both users are required to create a signed commitment transaction.
  \item localpubkey (remotepubkey): Public key used to lock an output that is spendable by the user (other user) in the commitment transactions created by the other user (user).
  \item local\_delayedpubkey (remote\_delayedpubkey): Public key used to lock an output that is spendable by the user (other user) in the commitment transactions created by the user (other user).
  \item local\_htlcpubkey (remote\_htlcpubkey): Public key used to lock an output that is spendable by the user (other user) in HTLC transactions.
  \item revocationpubkey: Public key used to lock an output that is spendable when the transaction should be revoked. 
\end{itemize}
Most keys are rotated with every commitment transaction so that third parties (e.g., watchtowers) cannot link multiple commitment transactions to the same channel.
To simplify the key management, the different public keys are derived from a basepoint per type of key and a point per transaction (see \cref{fig-key-derivation}\,a).
For each of the keys prefixed by local or remote, there is a basepoint for which the secret is kept by the local party.
The keys (local/remote)pubkey, (local/remote)\_delayedpubkey, (local/remote)\_htlcpubkey are derived from the from the per\_commitment\_point and from the (local/remote)\_payment\_basepoint, (local/remote)\_htlc\_basepoint, (local/remote)\_delayed\_payment\_basepoint respectively.
The private keys are derived from appropriate basepoint secrets and the per\_commitment\_point (see \cref{fig-key-derivation}\,b).
The revocationpubkey is derived from the local user's revocation\_basepoint and from the remote user's per\_commitment\_point (see \cref{fig-key-derivation}\,c).
The associated private key, called revocationprivkey, can be derived from the local user's revocation\_basepoint\_secret and the remote user's per\_commitment\_secret. The idea behind this is that both users can derive the revocationpubkey but initially no user can derive the revocationprivkey. However, the per\_commitment\_secret can be shared so that the local user can derive the revocationprivkey to spend outputs of published outdated commitment transactions.
However, as all other keys that are derived using the per\_commitment\_secret depend on an additional basepoint secret, the local user cannot derive other private keys of the remote user.
The funding\_pubkey is a regular key pair.

\begin{figure}
\centering
\includegraphics[width=0.8\linewidth]{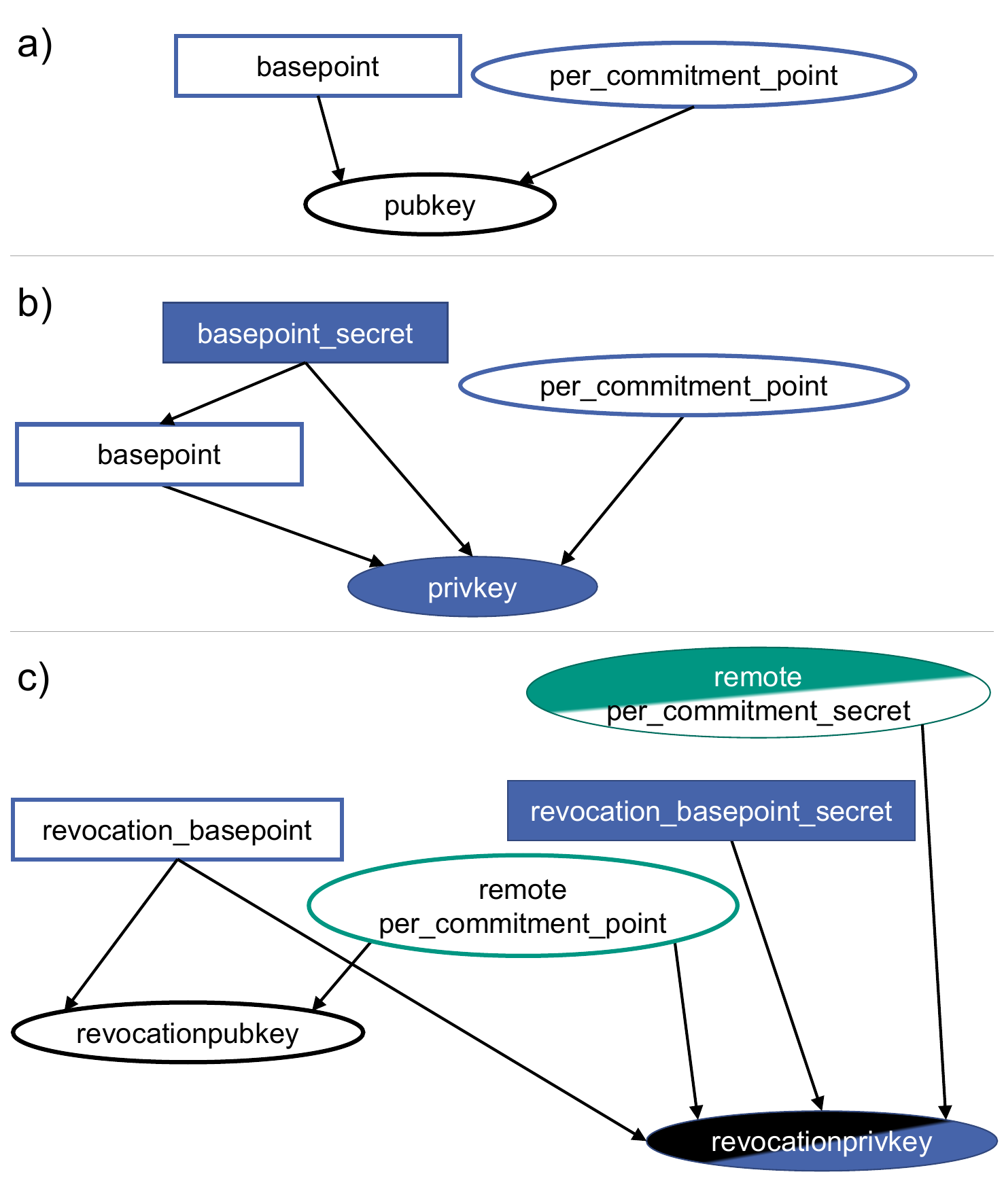}
\caption{Derivation of keys in Lightning. Each box shows a key or piece of information that is required to derive a key.
From the perspective of a user $u$, the boxes filled blue are pieces of information that only the user $u$ has.
The boxes filled green are pieces of information that only the other user has.
The boxes filled white are pieces of information that are shared between both users. The color of the boxes' borders indicates the user that generates a piece of information. A black border indicates that both users can derive a piece of information.
Arrows represent the relation `is required to derive'.
Rectangular boxes are pieces of information that are constant during the lifetime of the protocol.
Ellipses represent pieces of information that are rotated with every commitment transaction.
The remote per\_commitment\_secret and the revocationprivkey are special because the revocationprivkey cannot be generated by neither party until the remote per\_commitment\_secret is shared which enables the user $u$ to generate the revocationprivkey.
}
\label{fig-key-derivation}
\end{figure}

\subsubsection{Formalization of Keys}

To keep the specification simple, we model the funding\_pubkey and the various keys that are derived from a basepoint of a user and the user's per\_commitment\_point as the same key pair.
What is left are the revocation public keys that are derived from the revocation\_basepoint of one user and the \emph{other} user's per\_commitment\_point. These key pairs need to be rotated with every commitment transaction because the users exchange the secrets required to derive the private keys.
For modeling \emph{asymmetric key pairs}, we use a similar approach as for modeling preimages and hashes:
We use the same value for the private and the public key and distinguish between them by the name of the variable that assumes a value.
This user key which models all user specific key pairs used in Lightning is modeled as a symbolic value per user.
We model revocation keys by a record that contains a symbolic value that specifies the creator of the key and and a numerical index value that is incremented for each new commitment transaction which models the rotation of the per\_commitment\_point.
With this approach, the revocation keys can be rotated by modifying the index of the key but they are still regular key pairs.
We model the construction that the revocation keys are derived from secrets of both users by changing the conditions in transaction outputs where a revocation key is required:
A condition in the output of a transaction that requires a signature of the revocation key derived from user A's per\_commitment\_point and from user B's revocation\_basepoint is modeled as a condition that requires a signature of user A's revocation key and user B's user key.
This model leads to the following difference in transaction construction between Lightning and the formalization:
In Lightning, a transaction that can be published by user A is revocable with a signature that corresponds to the public revocation key derived from user B's revocation\_basepoint and the respective per\_commitment\_point of user A.
In the formalization, a transaction that can be published by user A is revocable with a signature that corresponds to the public revocation key of user A and the public user key of user B.
Both approaches have the same effect: Only user B can revoke a transaction published by user A and user B can revoke the transaction only after having received a secret from user A.
The message exchange differs in the following way:
In Lightning, user A sends the per\_commitment\_point to user B so that user B can derive the public revocation key.
In the formalization, user A sends the public revocation key to user B.
When revoking a transaction in Lightning, user A sends the A's per\_commitment\_secret to user B.
In the formalization, user A revokes a transaction by sending the private revocation key to user B.
\Cref{tab_keys_comparison} gives an overview of the keys used in Lightning and the corresponding formalization.

\begin{table}
    \caption{Comparison of keys in Lightning and the TLA\textsuperscript{+} formalization.}
    \label{tab_keys_comparison}
    \begin{tabularx}{\linewidth}{XX}
        \hline
        Lightning & Formalization \\
        \hline
        funding\_pubkey, localpubkey, local\_delayedpubkey, local\_htlcpubkey & User key modeled by one symbolic value per user. \\
        funding\_privkey, localprivkey, local\_delayedprivkey, local\_htlcprivkey & Modeled by the same symbolic value as the public key. \\
        payment\_basepoint, delayed\_payment\_basepoint, htlc\_basepoint and associated secrets & Not required because user key is a symbolic value instead being calculated. \\
        revocationpubkey & Modeled as record that contains a base and an index. The base specifies by a symbolic value the user that created the key. The index is a number that can be incremented to create a new key.  \\
        revocation\_basepoint, revocation\_basepoint\_secret & Corresponds to the base of the revocation key. \\
        per\_commitment\_point, per\_commitment\_secret & Corresponds to the index of the revocation key. \\
        \hline
    \end{tabularx}
\end{table}

\subsubsection{Transactions}

The funding transaction contains an output whose amount equals the capacity of the payment channel and which can be spent using signatures for both users' funding\_pubkey.

The commitment transaction has one input that references the funding transaction's output. Thus, the commitment transaction must be signed by signatures for both users' funding\_pubkey.
The outputs of a commitment transaction are as follows:
\begin{itemize}
  \item to\_local:
  The first output contains the funds of the party who can publish this commitment transaction. The output can be spent using a signature by local\_delayedprivkey after a timeout of length to\_self\_delay or it can be spent using revocationprivkey.
  \item to\_remote:
  The output that contains funds for the remote party can be spent using remoteprivkey.
  \item Outgoing HTLCs:
  For each outgoing HTLC, there exists an output that can be spent either using the revocationprivkey or using remote\_htlcprivkey and either using local\_htlcprivkey or by providing the HTLC's preimage.
  \item Incoming HTLCs: 
  For each incoming HTLC, there exists an output that can be spent either using revocationprivkey or using remote\_htlcprivkey and either the HTLC's preimage and local\_htlcprivkey or after the HTLC's timeout.
\end{itemize}

HTLC Transactions:
The HTLC success transaction has a locktime value of 0 which means that it is immediately valid.
The HTLC timeout transaction has a locktime value of cltv\_expiry (the HTLC's timeout), i.e. it is valid only after the point in time specified by the HTLC's timeout.
An HTLC's transaction input references the respective HTLC output of a commitment transaction. The HTLC transaction must be signed using remote\_htlcprivkey and local\_htlcprivkey.
For an HTLC success transaction, the input must contain the HTLC's preimage.
The output of both HTLC transactions is spendable either by the revocationprivkey or after to\_self\_delay using local\_delayedprivkey.

\subsubsection{Formalization of Transactions}

The formalization of transactions follows the UTXO (unspent transaction output) model of Bitcoin: A transaction is a record that contains a set of inputs, a set of outputs and additional data like the transaction's id, and an optional absolute timelock.
An output of a transaction is a record that contains an id, an amount, and a set of conditions of which one needs to be fulfilled to spend the output.
A condition consists of a type, a set of keys, an optional hash value, an absolute timelock and a relative timelock.
The condition can be one of the symbolic values SingleSignature, AllSignatures, SingleSigHashLock, AllSigHashLock.
SingleSignature specifies that this output can only be spent by a transaction signed using one of the keys stored in the condition. AllSignatures specifies that a spending transaction must be signed using all of the keys stored in the condition.
The types with the suffix `HashLock' specify the additional constraint that a spending transaction must provide a preimage to the hash value specified in the condition.
Additionally, a spending transaction must specify timelocks that have values set to a value that is greater than or equal to the values specified in the condition. The absolute timelock enforces that a spending transaction is only valid if the transaction is published at a point in time that is greater than or equal to the value of the absolute timelock. The relative timelock has the same effect but the transaction must be published at a point in time at which the age of the transaction that the output is contained in is greater than or equal to the relative timelock.
An input of a transaction consists of a reference to an output that is being spent by this input, a relative timelock, and witness data which contains of a set of signatures and an optional preimage.
Transactions in Bitcoin have ids which are calculated by hashing a transaction. For our model, we require the property that each transaction has a unique id.
We model transaction ids by choosing for each transaction an id as an arbitrary integer number that has not yet been used as a transaction id.
The variable $c$UsedTransactionIds stores a set of all transaction ids drawn to ensure that each new transaction id is unique.
To reduce the number of possible states of the formalization, we specify a unique range of integer numbers for each channel $c$.
In the constant AvailableTransactionIds, for each channel $c$ this set of numbers is stored from which transaction ids created by PaymentChannelUser for channel $c$ can be drawn.

Lightning requires users to exchange signatures on transactions.
Because TLA\textsuperscript{+} does not provide a native way to model signatures, we use an approach that meets the following requirements for signatures:
A signature must be bound to the transaction that is signed by the signature so that a user is able to check whether a signature was created for a specific transaction or not.
The validity of a signature can be checked using the corresponding public key.
However, a signature must only be creatable using the corresponding private key.
To model these requirements in a simple way, we model signatures of transactions by using the whole transaction with the signing key added to the inputs of the transaction.
This makes it simple to exchange signed transactions by exchanging the whole transaction and it is trivial to verify the validity of a signature by comparing the signature to the respective public key.
A caveat here is that when in Lightning only a signature is exchanged (e.g., in the `funding\_signed' and `commitment\_signed' messages), both users must be able to create the respective transaction by themselves.
To ensure that both users have all the information necessary to create the transactions in the formalization, we model the reception of a message that contains a signed transaction by comparing the received message to a record that contains the expected transaction built from locally available information.

We model the blockchain as a set of transactions. The variable LedgerTx is a set that contains the transactions that have been published on the blockchain.
For all transactions that are added to LedgerTx, an entry in TxAge is added that models the time since the transaction was included in the blockchain.
If a transaction is published, the transaction's entry in TxAge is initialized to 0.
When time is increased, all entries in TxAge are increased by the same time difference.

\subsubsection{Opening a Payment Channel}
\label{sec-lightning-opening}

To open a payment channel, the funder of the payment channel sends an open\_channel message to the user to whom the channel should be opened.
The open\_channel message contains several fields for the parameterization of the channel that we ignore for the TLA\textsuperscript{+} formalization (see \cref{tab_fields_open_channel}).
If the receiver wants the channel to be opened, the receiver replies with an `accept\_channel' message (see \cref{tab_fields_accept_channel}) in which the receiver sends their parameters for the channel to the funder. This includes the basepoints for the keys and the first commitment point to derive the public keys for the first commitment transaction.
With this information, the funder creates the funding transaction and the first commitment transaction.
The funder sends the `funding\_created' message (see \cref{tab_fields_funding_created}) that contains the funder's signature of the first commitment transaction and the id of the funding transaction which is required for creating commitment transactions.
The other user receives the `funding\_created' message, validates and stores the signature.
The other user creates the version of the first commitment transaction that the funder can publish and sends a signature for the first commitment transaction to the funder in the `funding\_signed' message (see \cref{tab_fields_funding_signed}).
Having verified and stored the signature of the first commitment transaction, the funder publishes the funding transaction.
After both users have noted that the funding transaction was published and confirmed on the blockchain, both users send each other a `channel\_ready' message which indicates that the channel can now be used to process payments.
The `channel\_ready' message (see \cref{tab_fields_channel_ready}) contains the second per\_commitment\_point that is required to derive the public keys required to create the second commitment transaction.

\begin{table}
    \caption{Fields of `open\_channel' message.}
    \label{tab_fields_open_channel}
    \scriptsize
    \begin{tabularx}{\linewidth}{XXX}
        \hline
        Variable & Description & Formalization \\
        \hline
        chain\_hash & Hash of the blockchain underlying the payment channel. & Not required because formalization considers only one blockchain. \\
        temporary\_channel\_id & Temporary id to describe the channel. & Not required because channel is uniquely identified by the pair of the channel parties. \\
        funding\_satoshis & Number of satoshis that the funder deposits into the channel. & Field `Capacity' \\
        push\_msat & Number of millisatoshi that the funder is giving to the other party. & Not included for simplicity. \\
        dust\_limit\_satoshis & No output is created with an amount less than this value. This prevents outputs from being created that cannot be spent economically because the amount of fees required to spend the output would be higher than the value of the output. & Not required because fees are not modeled. \\
        max\_htlc\_value\_in\_flight\_msat & Maximum amount of millisatoshi that can be part of HTLCs. & Not included for simplicity. \\
        channel\_reserve\_satoshis & Amount of satoshis that both users need to keep as their balance and cannot spend to disincentivize cheating attempts.  & Not required because the formalization does not consider the game theoretic aspect of whether a user might want to cheat but instead shows that cheating never succeeds. \\
        htlc\_minimum\_msat & Amount that any HTLC must at least have. & Not required for security properties. \\
        feerate\_per\_kw & Fee rate for commitment and HTLC transactions. & Not required as blockchain fees are not modeled. \\
        to\_self\_delay & Number of blocks that an output of the counterparty must be locked until the counterparty can spend it. & Modeled as constant `TO\_SELF\_DELAY'. \\
        max\_accepted\_htlcs & Maximum number of HTLCs that are accepted at the same time to ensure that messages do not grow too large. & Not required. \\
        funding\_pubkey & Public key used for locking the funding output that can only be spent using signatures of keys of both users. & Modeled as the user's public key. \\
        revocation\_basepoint & Point used to calculate revocation keys. & Modeled in form of a public/private revocation key pair. \\
        payment\_basepoint & Point used to calculate the payment keys. & Modeled as the user's public key. \\
        delayed\_payment\_basepoint & Point used to calculate the delayed payment keys. & Modeled as the user's public key. \\
        htlc\_basepoint & Point used to calculate the HTLC keys. & Modeled as the user's public key. \\
        first\_per\_commitment\_point & Point to calculate keys for the first commitment transaction. & Only revocation key is rotated. Modeled as key index variable as part of the key. \\
        channel\_flags & Flags, e.g. whether this channel is to announced publicly to the P2P network. & Not required. \\
        \hline
    \end{tabularx}
\end{table}

\begin{table}
    \caption{Fields of `accept\_channel' message. With the exception of minimum\_depth all fields are also described in \cref{tab_fields_open_channel}.}
    \label{tab_fields_accept_channel}
    \begin{tabularx}{\linewidth}{XXX}
        \hline
        Variable & Description & Formalization \\
        \hline
        temporary\_channel\_id & Temporary id to describe the channel. & Not required because channel is uniquely identified by the pair of the channel parties. \\ % also in open_channel
        dust\_limit\_satoshis & No output is created with an amount less than this value. This prevents outputs from being created that cannot be spent economically because the amount of fees required to spend the output would be higher than the value of the output. & Not required because fees are not modeled. \\ % also in open_channel
        max\_htlc\_value\_in\_flight\_msat & Maximum amount of millisatoshi that can be part of HTLCs. & Not included for simplicity. \\ % also in open_channel
        channel\_reserve\_satoshis & Amount of satoshis that both users need to keep as their balance and cannot spend to disincentivize cheating attempts.  & Not required because the formalization does not consider the game theoretic aspect of whether a user might want to cheat but instead shows that cheating never succeeds. \\ % also in open_channel
        htlc\_minimum\_msat & Amount that any HTLC must at least have. & Not required for security properties. \\ % also in open_channel
        minimum\_depth & Number of blocks created after a transaction has been published until the transaction is considered confirmed. & Not required as transactions are modeled as being finally confirmed immediately. \\
        to\_self\_delay & Number of blocks that an output of the counterparty must be locked until the counterparty can spend it. & Modeled as constant `TO\_SELF\_DELAY'. \\ % also in open_channel
        max\_accepted\_htlcs & Maximum number of HTLCs that are accepted at the same time to ensure that messages do not grow too large. & Not required because message sizes are theoretically not  limited. \\ % also in open_channel
        funding\_pubkey & Public key used for locking the funding output that can only be spent using signatures of keys of both users. & Modeled as the user's public key. \\ % also in open_channel
        revocation\_basepoint & Point used to calculate revocation keys. & Modeled in form of a public/private revocation key pair. \\ % also in open_channel
        payment\_basepoint & Point used to calculate the payment keys. & Modeled as the user's public key. \\ % also in open_channel
        delayed\_payment\_basepoint & Point used to calculate the delayed payment keys. & Modeled as the user's public key. \\ % also in open_channel
        htlc\_basepoint & Point used to calculate the HTLC keys. & Modeled as the user's public key. \\ % also in open_channel
        first\_per\_commitment\_point & Point to calculate keys for the first commitment transaction. & Only revocation key is rotated. Modeled as key index variable as part of the key. \\ % also in open_channel

        \hline
    \end{tabularx}
\end{table}

\begin{table}
    \caption{Fields of `funding\_created' message.}
    \label{tab_fields_funding_created}
    \begin{tabularx}{\linewidth}{XXX}
        \hline
        Variable & Description & Formalization \\
        \hline
        temporary\_channel\_id & Temporary id to describe the channel. & Not required because channel is uniquely identified by the pair of the channel parties. \\
        funding\_txid & ID of the funding transaction & . \\
        funding\_output\_index & Output ID for the funding output of the funding transaction. & Not required as funding transaction contains only one output. \\
        signature & Signature of the first commitment transaction & First commitment transaction signed by inserting private key. \\
        \hline
    \end{tabularx}
\end{table}

\begin{table}
    \caption{Fields of `funding\_signed' message.}
    \label{tab_fields_funding_signed}
    \begin{tabularx}{\linewidth}{XXX}
        \hline
        Variable & Description & Formalization \\
        \hline
        channel\_id & ID of the channel derived from funding transaction & Not required. \\
        signature & Signature of the first commitment transaction & First commitment transaction signed by inserting private key. \\
        \hline
    \end{tabularx}
\end{table}

\begin{table}
    \caption{Fields of `channel\_ready' message.}
    \label{tab_fields_channel_ready}
    \begin{tabularx}{\linewidth}{XXX}
        \hline
        Variable & Description & Formalization \\
        \hline
        channel\_id & ID of the channel derived from funding transaction & Not required. \\
        second\_per\_commitment\_point & Point to calculate keys for the second commitment transaction & Only revocation key is rotated. Modeled as key index variable as part of the key. \\
        \hline
    \end{tabularx}
\end{table}

\subsubsection{Formalization of Opening a Payment Channel}
\label{sec-lightning-formalization-opening}

The module PaymentChannelUser contains the actions to open, update, and close a payment channel.
As in the module HTLCUser, the actions of the module PaymentChannelUser are parameterized for a specific user $u$ and a payment channel $c$ of user $u$.
In this section, we use $u$ and $c$ to refer to the user and channel passed as parameters to an action.

To encode in the variables of a user $u$, in which state of the protocol the user is, the variable $c,u$State contains a string of characters that describes the state.
The state is initialized to `init' and after sending an `open\_channel' message, the state advances to `open-sent-open-channel' which means that the user $u$ expects to receive an `accept\_channel' message in the next step.
Sending and receiving of messages is modeled as described above in \cref{sec-lightning-formalization-multi-hop-payments}.
The `open\_channel' and `accept\_channel' messages contain many fields that are not relevant for the TLA\textsuperscript{+} formalization (see \cref{tab_fields_open_channel,tab_fields_accept_channel}).
On reason for fields not being  modeled is that they are used for features that are not included in the TLA\textsuperscript{+} formalization because they are optional features (e.g, `push\_msat').
Further, the TLA\textsuperscript{+} formalization does not model transaction fees for on-chain transactions. We leave this aspect out of scope to keep the model and security analysis focused.
However, modeling transaction fees could be included in the formalization as part of future work. 
The basepoints and first commitment point are modeled by sending the public key as explained in \cref{sec-lightning-keys-transactions}.

After having received an `accept\_channel' message, a funding user creates the funding transaction.
A transaction is created as a record consisting of a set of inputs, a set of outputs, an id and an absolute timelock (see \cref{sec-lightning-keys-transactions}).
The id of the funding transaction is chosen as a unique unpredictable identifier.
Once created, the funding transaction is stored in the field `transactions' of the variable `$c,u$Inventory'.
The sending of the signature of the first commitment transaction in the `funding\_created' message is modeled by the action `SendSignedFirstCommitTransaction'.
As explained in \cref{sec-lightning-keys-transactions}, sending a signature is modeled by sending the whole transaction with they funder's key added to the transaction's input.
In the same way, the non-funding user responds with the signed first commitment transaction for the funder in the `funding\_signed' message.
Once the funder has received the other user's signature on the first commitment transaction, the funder publishes the funding transaction by adding it to the set of published transactions in the variable `LedgerTx'.
Together with adding the transaction to LedgerTx, a new entry for the transaction is added to the variable TxAge which maps transaction ids to clocks.
A new clock for the published transaction is created and initialized to 0. This clock models the age of the transaction, i.e. how many blocks have been created since the transaction was published.
This information is relevant for spending conditions with timelocks.

After the users have noticed that the funding transaction has been published, both  users exchange new revocation public keys which models the exchange of per\_commitment\_points (see \cref{sec-lightning-keys-transactions}).
Because each user can send the new revocation public key to the other user after noticing that the funding transaction has been published, the order in which the `channel\_ready' messages are sent is not deterministic and a user might first send the `channel\_ready' message and then receive the other user's `channels\_ready' message or the other way around.
In the formalization, this has the effect that the actions `SendNewRevocationKey' and `ReceiveNewRevocationKey' are enabled in two different states (e.g., `open-funding-pub' and `open-new-key-received') and update the state accordingly (e.g., `open-new-key-sent' and `rev-keys-exchanged').
After the new revocation public keys are exchanged, the channel is ready to operate, i.e., to add HTLCs to the commitment transaction.

\subsubsection{Updating a Payment Channel}
\label{sec-lightning-updating}

After a user has sent at least one `update\_add\_htlc' message (see \cref{tab_fields_commitment_signed}) to inform the other user about an HTLC, the user sends a `commitment\_signed' message.
The `commitment\_signed' message contains a signature for the new commitment transaction in which all outgoing HTLCs for which an `update\_add\_htlc' message was sent are included and incoming HTLCs that have been fulfilled are not included.
The receiving user responds with a `revoke\_and\_ack' message (see \cref{tab_fields_revoke_and_ack}) to acknowledge the reception of the new commitment transaction signature and to revoke the old commitment transaction by sending the per\_commitment\_secret for the old commitment transaction.

\begin{table}
    \caption{Fields of `commitment\_signed' message.}
    \label{tab_fields_commitment_signed}
    \begin{tabularx}{\linewidth}{XXX}
        \hline
        Variable & Description & Formalization \\
        \hline
        channel\_id & ID of the channel derived from funding transaction & Not required. \\
        signature & Signature of the new commitment transaction & New commitment transaction signed by inserting private key. \\
        num\_htlcs & Number of HTLCs in the new commitment transaction & Not required because the field is redundant. \\
        num\_htlcs * signature & Signature of each HTLC transaction & HTLC transaction signed by inserting private key. \\
        \hline
    \end{tabularx}
\end{table}

\begin{table}
    \caption{Fields of `revoke\_and\_ack' message.}
    \label{tab_fields_revoke_and_ack}
    \begin{tabularx}{\linewidth}{XXX}
        \hline
        Variable & Description & Formalization \\
        \hline
        channel\_id & ID of the channel derived from funding transaction & Not required. \\
        per\_commitment\_secret & Secret for key derivation of keys for the previous commitment transaction. & Only revocation key is rotated. Modeled as sending the private revocation key. \\
        next\_per\_commitment\_point & Point to calculate keys for the new commitment transaction & Only revocation key is rotated. Modeled by sending public revocation key. \\
        \hline
    \end{tabularx}
\end{table}

\subsubsection{Formalization of Updating a Payment Channel}
\label{sec-lightning-formalization-updating}

When the user $u$ is in state `rev-keys-exchanged', the action `SendSignedCommitment' can be enabled if there is at least one HTLC to add or remove and the action `ReceiveSignedCommitment' can be enabled if there is a `commitment\_signed' message in the variable $c$Messages to be received by user $u$.
The actions find HTLCs to be updated by the states of the HTLCs and update the states of the HTLCs according to \cref{fig-HTLC-graph}.
The action `SendSignedCommitment' sends the complete new signed commitment transaction and signed HTLC transactions to model the sending of signatures (see \cref{sec-lightning-keys-transactions}).
If an outgoing HTLC has timed out, it will not be added by `SendSignedCommitment' and a commitment transaction that commits to an incoming HTLC that has timed out will not be accepted by `ReceiveSignedCommitment'. 

The formalization keeps track of the \emph{balance} that a user should have in the channel.
This value is stored in the variable `$c,u$Balance'.
During the opening of the payment channel, the value of $c,u$Balance is set to the channel's capacity for the funder of the channel and to 0 for the other user.
When a user commits to a new HTLC in the action `SendSignedCommitment', the action decrements the user's balance by the amount of the HTLC. 
The variable $c,u$Balance models for each user the balance that the user is guaranteed to receive as long as the user follows the protocol.
The amount of an HTLC is added to a user's balance when the user fulfills the HTLC by sending the preimage to the other user which is modeled by the `SendHTLCPreimage' action of the module HTLCUser.

Additionally, there is a variable $u$ChannelBalance which models the balance that a user has in all of the user's channels.
This variable is part of the security property and updated when a payment is processed.
The differences between the variables $c,u$Balance and $u$ChannelBalance is that $c,u$Balance represents only the balance that a user has in a channel that is not part of an HTLC while $u$ChannelBalance is the sum of all channels including the amount that is locked in HTLCs.

\subsubsection{Closing a Payment Channel}
\label{sec-lightning-closing}

There are two ways to close a payment channel:
The simplest way is to close the channel by publishing the latest commitment transaction on the blockchain.
Both parties can also create a dedicated closing transaction that cannot be revoked and, thus, does not require timeout and which is smaller than a commitment transaction and, thus, costs less fees to publish on the blockchain.
For the formalization, we chose to leave this type of closing out of scope because it is an optimization that is useful but not required for a functional and secure protocol.
Closing might, however, be dishonest if a party publishes a commitment transaction that is not the latest commitment transaction.

Each party has to watch the blockchain for published commitment transactions and react accordingly. % => NoteThat...
When a channel is closed with an outdated commitment transaction, the honest party has to spend the commitment transaction's revocation outputs. % => Punish
When a channel is closed with a latest commitment transaction that contains HTLCs, these HTLCs need to be resolved on the blockchain.
If an incoming HTLC can be fulfilled, an HTLC success transaction must be published and, if an outgoing HTLC is timed out, an HTLC timeout transaction must be published. % => RedeemHTLCAfterClose
HTLCs that are not part of the latest commitment transaction, are aborted if they have not been committed, timed out or persisted.

\subsubsection{Formalization of Closing a Payment Channel}
\label{sec-lightning-formalization-closing}

In the formalization, a payment channel can only be closed by publishing a commitment transaction on the blockchain.
This can either be done honestly modeled by the action `CloseChannel' or dishonestly modeled by the action `Cheat'.
We model that a user observes a commitment transaction on the blockchain using different actions depending on whether the other party closed honestly or dishonestly.
When the payment channel is closed honestly, the published commitment transaction defines which HTLCs are committed and which are not.
The action `NoteThatOtherPartyClosedHonestly' updates the states of the HTLCs accordingly:
The state of HTLCs that were in the process of being committed but are not committed is set to \textsc{aborted}.
The state of those HTLCs that are committed is set to \textsc{committed} and the state of HTLCs that were fulfilled and are not in the commitment transaction is set to \textsc{persisted}.
When the payment channel is closed dishonestly, the action `Punish' models that the user $u$ notices the outdated commitment transaction and publishes a transaction that uses the revocation keys to punish the cheating party.

A challenge for the formalization is that a channel might be closed while the channel is in the process of being updated.
A dishonest user might revoke a commitment transaction and publish the revoked commitment transaction.
If the honest user observes the published commitment transaction on the blockchain before the honest user receives the revocation key, the honest user treats the publication of the commitment transaction as an honest closing.
After having received the revocation key, the honest user has to react to the published commitment transaction as a cheating attempt.

\subsubsection{Formalization of Messages} % this overlaps with some parts above
\label{sec-lightning-messages-formalization}

The exchange of messages in a channel is modeled by letting the users write messages to a message queue per channel from which each user can read the user's first message.
A message is sent by an action that specifies that the channel's message queue is extended by the message that is sent.
A message contains a field for the recipient, the sender, the message's type, and the payload.
For each type of message, the formalization includes an action that expects such a message and reacts to the message.
An action of user $u$ that reads a message contains a condition that checks that the first message sent to user $u$ has the expected type.
The action updates the user's state as required and removes the message from the queue.
There are two types of variables used for message queues.
The global variable Messages is a set that contains all messages for requesting and sending an invoice.
There is variable $c$Messages for each channel $c$ that models a FIFO queue for messages exchanged between the two users of channel $c$.
Because of this approach to model messages, the messages in $c$Messages are delivered in order and the messages in the global Messages variable can be delivered in an arbitrary order.
Messages can be directly received or arbitrarily delayed because after a step in which a message was sent there can be no or an unlimited number of steps that advance time until a step is taken in which the message is received.

\subsubsection{Formalization of Time Flow}
\label{sec-lightning-time-formalization}

HTLCs as well as commitment transactions use timelocks to enforce that certain actions cannot be done before a certain point in time.
The values used in Lightning for timelocks are numbers that indicate a specific height of the Bitcoin blockchain.
Because the blockchain grows on average at a constant rate, the height of the blockchain represents a logical time that grows on average linear to the clock time.
Because the height of the blockchain represents a logical time, we refer to the height of the blockchain also simply as time.
We model the time, i.e., height of the blockchain, as an integer number that can increase in integer steps.
The current value of the time is stored in the variable LedgerTime.
A step that increases LedgerTime is modeled as an action of the LedgerTime module.
A behavior of the system described by the TLA\textsuperscript{+} specification is a sequence of steps of actions of the users (modules PaymentChannelUser and HTLCUser) and of steps that advance the time, i.e., increase the value of LedgerTime.
Depending on where in a behavior the steps that advance the time are taken between steps of actions of users, the actions of users are modeled to happen slowly or quickly relative to the growth of the blockchain.

In some situations, the protocol requires a user to take an action before the blockchain has reached a certain height.
An example for such a situation is that, if a user knows the preimage for an HTLC, the user must fulfill the HTLC before the HTLC times out.
Consider the following scenario: There is an HTLC from user A to user B with a timeout at time 10. User B knows the preimage and the current time is 8.
While the action for fulfilling the HTLC is enabled, the action for increasing the LedgerTime is also enabled.
It is acceptable that the value of the variable LedgerTime is increased to 9. However, if the value of the variable LedgerTime was increased to 10 before user B fulfills the HTLC, user B would not have followed the protocol which requires user B to fulfill the HTLC before the HTLC's timeout.
Therefore, to model honest behavior of user B, we need to model that user B performs an action before the variable LedgerTime reaches the value 10.
If the value of the variable LedgerTime is 9, the action that increases the value of LedgerTime should not be enabled until user B has fulfilled the HTLC.
To model this urgency requirement, we let the each module specify a set of points in time called \emph{TimeBounds} that are the heights of the blockchain at which the user needs to perform an action at the latest.
The time (resp. height of the blockchain) will not advance further than the minimal height specified by all TimeBounds.
For the previous example, the value of TimeBounds of the module PaymentChannelUser would include the HTLC's timelock - 1 for each incoming HTLC that the user has the preimage for and that is not fulfilled.
After an action that removes the condition for a time bound has been taken, the height of the blockchain can advance further.
We check that no execution of the protocol is stuck because of a time bound but no action is possible and the height of the blockchain cannot advance by using a liveness property that checks that the time finally reaches a specified maximal value.

\subsubsection{Liveness of Users}

The Lightning protocol requires that honest users perform certain steps if they can.
For example, a user must respond to a `commitment\_signed' message with a `revoke\_and\_ack' message.
In the TLA\textsuperscript{+} specification, we model these requirements using a weak fairness condition that specifies that, if for an honest user an action is continuously enabled, the user has to eventually take a step of this action.
In general, we assume that dishonest users take steps in which they read from the environment but not steps in which they actively change their environment, i.e. dishonest users retrieve messages and read the blockchain but are not required to send messages or publish transactions.
However, to a certain degree, we also need to assume liveness for dishonest users.
We make the following exceptions:
A first exception is that, to achieve progress during channel opening, we specify a weak fairness condition that dishonest users actively participate in the opening of the channel.
This simplifies the formalization as we can assume that channels are actually opened.
The requirement is not a practical limitation of the adversary as, until a channel has been opened, there is nothing to loose or gain.
If the protocol execution terminates before the funding transaction has been published, the execution had no effect on the blockchain and, thus, on the balances of the users.
A second exception is that we specify that, once the other user in the channel has terminated, even dishonest users must publish transactions if they can.
This exception simplifies the definition of idealized channels because each user ends in a state in which the user has spent all outputs on the blockchain that the user can spend.
This exception is also not a practical limitation of the adversary as the other user who has already terminated will not be able to profit from actions of the adversary anymore.
We make a third exception for one specific situation:
Assume that an HTLC is part of a commitment transaction that has been honestly published on-chain, i.e. the commitment transaction cannot be revoked.
The timeout of the HTLC has already passed.
The user for whom the HTLC is incoming is honest but does not have the preimage.
The user for whom the HTLC is outgoing is dishonest.
Now, the dishonest user could spend the HTLC output and the honest user would note that the HTLC has timedout on-chain.
If the dishonest user never spends the HTLC output, the honest user stays in a state in which the HTLC might be resolved in two different ways: Either the dishonest user spends the HTLC output on-chain or the honest user might receive the preimage for the HTLC and spend the HTLC output.
Because we specify that users terminate only if they know how all HTLCs have been resolved, this situation prevents honest users from terminating.
We decide this situation by specifying that, in this specific situation, a dishonest user publishes a transaction on-chain to timeout the HTLC and retrieve the HTLC's amount.
This exception is only required because of the specific termination condition in our formalization.
Because the issue is about an output that is expected to be spent by the dishonest user, in practice, this is not a real problem because an honest user retrieves the honest user's balance and it is acceptable if the HTLC is never finally timedout.
An alternative to this exception would be to change the protocol so that the honest user marks the HTLC as timedout although the HTLC has not been timedout on-chain.

\subsection{Generalized Time Skip Theorem}
\label{sec-appendix-generalized-time-skip}

Given a real-time specification $Spec_S$, we define a specification $Spec_{\hat{S}}$ that is implemented by $Spec_S$ and potentially has fewer states.
The two specifications differ only in how time is advanced.
While in a common real-time specification $Spec_S$ time is advanced by steps of one time unit, the definition of specification $Spec_{\hat{S}}$ allows time to only advance to points in time at which a new step becomes possible. We say that a step \emph{becomes possible} in a state with time $t$ if the step would not be possible at the directly preceding point in time $t-1$.
The optimized specification does not allow points in time at which every step that is possible could also have been taken at an earlier point in time.
For a specification in which at many points in time the same steps are possible, this optimization reduces the state space and, thus, reduces the time required for model checking.

In this section, we define a general real-time specification $Spec_S$ and the corresponding optimized specification $Spec_{\hat{S}}$. We formulate that the original real-time specification implements the optimized specification in \cref{theorem-general-time-skip} and  prove \cref{theorem-general-time-skip} in \cref{sec-appendix-time-skip-proof}.

\subsubsection{Real-Time Specification $Spec_S$}

The real-time specification $Spec_S$ to be optimized is defined as $Spec_S = Init_S \land \square [Next_S]_{v_S} \land Liveness_S$ with the set of variables $v_S$.
The specification $Spec_S$ must be an explicit-time real-time specification with a set of clocks $\mathcal{X}$ and an $AdvanceTime_S$ action.
We assume that the specification has the following properties:
The clocks $\mathcal{X}$ are modeled as variables that are not externally visible.
An action may not read a clock and write the clock's value to another variable.\footnote{We see no reason why this would be a practical restriction. To measure time differences, new clocks can be created.}
In all initial states, the clocks have the same value.
By the $AdvanceTime_S$ action, the values of all clocks $x \in \mathcal{X}$ are advanced by $1$. This assumption facilitates the proof, however, it is not a real restriction because a specification that allows only $+1$ advancements of time implements a specification that allows advancements by any natural number.
The $AdvanceTime_S$ action might be disabled if a time bound is reached.
Time bounds are defined for each clock $x \in \mathcal{X}$ by a mapping $B^x$ from states to a set of natural numbers.
Time may not advance from a state $s$ if any clock $x$ has a value that is equal to one of the time bounds in the set $B^x(s)$ for clock $x$ and state $s$.
The $Next_S$ action of the specification $S$ is a disjunct of an internal next action $NextI$ and the $AdvanceTime_S$ action: $Next_S = NextI_S \lor AdvanceTime_S$.
$Liveness_S$ is the conjunction of formulas of the form $WF_{v_S}(A)$ and $SF_{v_S}(A)$ for subactions $A$ of $Next_S$.
In this paper, we assume that all values of the clock $x$ are elements of the set $\mathbb{N}_0$.

\subsubsection{Optimized Specification $\hat{S}$}

We define the optimized specification $\hat{S}$ with the variables $v_{\hat{S}} = v_S$ as: $Spec_{\hat{S}} = Init_{\hat{S}} \land \square [Next_{\hat{S}}]_{vars_{\hat{S}}} \land Liveness_{\hat{S}}$ with $Init_{\hat{S}} = Init_S$ and $Next_{\hat{S}} = AdvanceTime_{\hat{S}} \lor NextI_{\hat{S}}$ and $NextI_{\hat{S}} = NextI_S$ and $Liveness_{\hat{S}} = Liveness_S$.
Thus, the only difference between specifications $S$ and $\hat{S}$ is how time is advanced by the $AdvanceTime$ actions.
To define $AdvanceTime_{\hat{S}}$, we introduce the following definitions.
We refer to the state space of the optimized specification $\hat{S}$ as $\hat{\Sigma}$.

We define a function $\hat{T}_d^x$ that translates a state $s$ to another time by returning a state $s'$ as a copy of state $s$ with the clock $x$ set to $d$, i.e. $s'.x = d$.

\begin{definition}[$\hat{T}_d^x$]
\begin{alignat*}{3}
\hat{T}_{d}^x: \hat{\Sigma} &\to \hat{\Sigma} \\
        s &\mapsto s' \text{ so that } &&(\A &&v \in v_S : s'.v = s.v) \\
        & &&\land &&s'.x = d
\end{alignat*}
\end{definition}

Using the definition of $\hat{T}_d^x$, we can express an assumption on $B^x(s)$.
We assume that $B^x(s)$ is independent of the value of the clock $x$. Formally:

\begin{assumption}
\label{assumption-b-independent}
\[
  \A x \in \mathcal{X}, d \in \mathbb{N}_0, s \in \hat{\Sigma} : B^x(s) = B^x(\hat{T}^x_d(s))
\]
\end{assumption}

We extend the definition of $\hat{T}_d^x$ from a function of states to a function $\hat{\mathcal{T}}^{x}_d$ on behaviors:

\begin{definition}[$\hat{\mathcal{T}}^{x}_d$]
\begin{align*}
\hat{\mathcal{T}}^{x}_{d}: \hat{\Sigma}^* &\to \hat{\Sigma}^* \\
        \langle \sigma_0, \sigma_1, \sigma_2, ... \rangle &\mapsto \langle \hat{T}_d^x(\sigma_0), \hat{T}_d^x(\sigma_1), \hat{T}_d^x(\sigma_2), ... \rangle
\end{align*}
\end{definition}

Define the set $N_{\hat{S}} \subseteq \hat{\Sigma}^*$ as the set of all behaviors that consist of $NextI_{\hat{S}}$ or stuttering steps.

\begin{definition}[$N_{\hat{S}}$]
\begin{align*}
N_{\hat{S}} = \{ \sigma = \langle \sigma_0, \sigma_1, ... \rangle \in \hat{\Sigma}^* :  \forall i \in \mathbb{N}_0 : \astep{\sigma_i}{\sigma_{i+1}}{[NextI_{\hat{S}}]_{v_{\hat{S}}}} \}
\end{align*}
\end{definition}

and accordingly for specification $S$

\begin{definition}[$N_{S}$]
\begin{align*}
N_{S} = \{ \sigma = \langle \sigma_0, \sigma_1, ... \rangle \in \Sigma^* :  \forall i \in \mathbb{N}_0 : \astep{\sigma_i}{\sigma_{i+1}}{[NextI_{S}]_{v_{S}}} \}
\end{align*}
\end{definition}

Using this definition of the set $N_{\hat{S}}$, we can express that a behavior $\sigma \in \hat{\Sigma}^*$ consists of $NextI_{\hat{S}}$ or stuttering steps by stating that $\sigma \in N_{\hat{S}}$.

We define a \emph{newly possible behavior} as a behavior that contains a step that was not possible if the behavior started at the directly preceding point in time.
Formally, a newly possible behavior is a behavior $\sigma \in N_{\hat{S}}$ so that there exists a clock $x \in \mathcal{X}$ so that, while the behavior starts at time $t = \sigma_0.x$, it holds that $\hat{\mathcal{T}}^{x}_{t-1}(\sigma) \notin N_{\hat{S}}$, i.e., if $\sigma$ is translated to the directly preceding point in time, then the resulting behavior $\hat{\mathcal{T}}^{x}_{t-1}(\sigma)$ contains a step that is not allowed by action $NextI_{\hat{S}}$.

A newly possible behavior \emph{for state $s$}, is a newly possible behavior $\sigma$ that starts in a state created by translating state $s$ to a time $t \in \mathbb{N}$ for a clock $x \in \mathcal{X}$, i.e., $\sigma_0 = \hat{T}^x_{t}(s)$.

We define the function $ETP^x$ (EnablingTimePoints) that maps a state $s$ to all points in time at which a newly possible behavior for state $s$ exists.

\begin{definition}[$ETP^x(s)$]
\begin{alignat*}{2}
ETP^x : \Sigma &\to \mathcal{P}(\mathbb{N}) \\
      s &\mapsto \{ t \in \mathbb{N} : \exists &&\sigma = \langle \sigma_0, \sigma_1, ... \rangle \in N_{S} : \\
      & && \sigma_0 = T^x_{t}(s) \land \mathcal{T}^{x}_{t-1}(\sigma) \notin N_{S} \}
\end{alignat*}
\end{definition}

We assume a function $relETP^x(s)$ that is defined in specification $Spec_S$ and maps a state to a set of natural numbers so that the following conditions are met:

\begin{assumption}
\label{assumptions-relETP}
\begin{alignat*}{2}
      \A x \in \mathcal{X}, s \in \Sigma : &\ ETP^x(s) \subseteq relETP^x(s) \\
      &\land \{ b +1 \mid b \in B^x(s) \} \subseteq relETP^x(s) \\
\end{alignat*}
\end{assumption}

For a state $s$ for each value $t$ in the set $ETP^x(s)$, there exists a $NextI_{\hat{S}}$ step in a behavior that starts at state $\hat{T}^x_{t}(s)$ (state $s$ where clock $x$ is translated to time $t$), so that the step is not possible when translating the behavior to time $t-1$.
The set $relETP^x(s)$ contains the points in time that are in $ETP^x(s)$ and may contain additional points.

Using the definition of $relETP^x$, the action $AdvanceTime_{\hat{S}}$ can be defined as an action that sets the new value of each clock $x \in \mathcal{X}$ to a point in time that is in the set $relETP^x$. All other variables are left unchanged:

\begin{definition}[$AdvanceTime_{\hat{S}}$]
\label{def_advance_time_hat}
For two states $s,t \in \hat{\Sigma}$, it holds that $\astep{s}{t}{AdvanceTime_{\hat{S}}}$ iff
\begin{align*}
t.&time > s.time \\
&\land \A v \in v_{\hat{S}} : t.v = s.v \\
&\land \A x \in \mathcal{X} : t.x = s.x \lor t.x \in relETP^x(s) \\
&\land \A x \in \mathcal{X} : \A b \in B^x(s) : s.x \leq b \Rightarrow t.x \leq b
\end{align*}
\end{definition}

With the definition of $AdvanceTime_{\hat{S}}$, the specification $Spec_{\hat{S}}$ is fully defined and we can state the theorem that
$Spec_S$ implements $Spec_{\hat{S}}$:

\begin{theorem}
\label{theorem-general-time-skip}
\[
Spec_S \Rightarrow Spec_{\hat{S}}
\]
\end{theorem}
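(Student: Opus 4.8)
The plan is to exhibit an explicit refinement mapping and invoke the refinement-mapping existence theorem \cite{abadi_existence_1991}: since $v_{\hat{S}} = v_S$ and the clocks are the only internal machinery that differs between the two specifications, I would define a state function $f : \Sigma \to \hat{\Sigma}$ that is the identity on every non-clock variable and that \emph{snaps} each clock to the representative of its zone, i.e.\ $f(s).x$ is the largest element of $relETP^x(s)$ (together with the common initial clock value) that is at most $s.x$. Because, by Assumption~\ref{assumption-b-independent} and the definition of $ETP^x$, the sets $B^x(s)$ and $relETP^x(s)$ do not depend on the value of clock $x$ itself, this snapping is well defined and satisfies $relETP^x(f(s)) = relETP^x(s)$. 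Applying $f$ pointwise to a behavior then turns every within-zone advance of time into a stuttering step, which is exactly what is needed to match the coarser $AdvanceTime_{\hat{S}}$ action.

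After defining $f$ I would discharge the three obligations of a refinement mapping. The initial condition is immediate: $Init_{\hat{S}} = Init_S$, all clocks agree in every initial state, and the common initial value is its own representative, so $f$ fixes initial states. For the step simulation I would show that every $[Next_S]_{v_S}$ step $\arstep{s}{t}$ is carried to a $[Next_{\hat{S}}]_{v_{\hat{S}}}$ step $\arstep{f(s)}{f(t)}$ by a case split on the two disjuncts of $Next_S$. For an $AdvanceTime_S$ step (every clock advanced by $1$) there are two subcases. If no clock crosses a point of its $relETP^x$, then $f(s) = f(t)$ and the step maps to a stuttering step, permitted by the brackets. Otherwise the crossing clocks advance to new representatives lying in $relETP^x(s)$, and I would check the four conjuncts of Definition~\ref{def_advance_time_hat} for $\astep{f(s)}{f(t)}{AdvanceTime_{\hat{S}}}$, the only nontrivial one being the time-bound conjunct, which holds because $AdvanceTime_S$ was enabled at $s$ (so no clock had yet reached a bound) and $\{ b + 1 \mid b \in B^x(s) \} \subseteq relETP^x(s)$ by Assumption~\ref{assumptions-relETP}, so a representative can never jump strictly past a time bound.

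The crux is the other disjunct, the $NextI_S = NextI_{\hat{S}}$ case, where I must establish a \emph{zone-invariance lemma}: a $NextI_S$ step is enabled at $s$ and produces $t$ if and only if the same step is enabled at $f(s)$ and produces $f(t)$. The intuition is that $NextI_S$ reads clocks only through threshold comparisons (the assumption that no action reads a clock and stores its value elsewhere rules out any finer dependence), and every threshold at which the enabled $NextI_S$ behaviors change is by definition an element of $ETP^x(s) \subseteq relETP^x(s)$; since $f(s).x$ lies in the same zone as $s.x$, it agrees with $s.x$ on all such comparisons, so exactly the same $NextI_S$ branch is enabled and taken. The delicate point I expect to fight with is transporting the representatives across this step: a $NextI_S$ step may reset clocks (to the base value, which is its own representative) and may change non-clock variables and hence the zone structure, and one must verify that for every clock left untouched the representative computed in $t$ coincides with the one computed in $s$, so that $\arstep{f(s)}{f(t)}$ genuinely leaves those clocks fixed as a $NextI_S$ step must. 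This is precisely where the definition of $ETP^x$ as capturing \emph{all and only} the newly enabling time points, together with Assumption~\ref{assumptions-relETP}, must be used, and it is the step that resists a purely mechanical argument.

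Finally, for liveness I would note that $Liveness_{\hat{S}} = Liveness_S$ is a conjunction of weak- and strong-fairness conditions on subactions of $Next$. Since $f$ maps the $NextI_S$ steps of a behavior bijectively onto the $NextI_{\hat{S}}$ steps of its image (only runs of unit time advances are collapsed into stuttering or a single $AdvanceTime_{\hat{S}}$ step) and, by the zone-invariance lemma, a subaction is continuously or infinitely often enabled along the image exactly when it is along the original behavior, each fairness formula is preserved under $f$. Together with the initial and step obligations this makes $f$ a refinement mapping, and \cite{abadi_existence_1991} then yields \cref{theorem-general-time-skip}.
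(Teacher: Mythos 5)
Your high-level intuition is the right one (snap each clock down to a zone representative, show that $NextI$ steps are insensitive to where in the zone the clock sits), and you correctly locate the crux in ``transporting the representatives across a $NextI_S$ step.'' But the construction you propose cannot close that gap, and the paper's proof is structured around exactly this obstruction. Your $f$ is a \emph{memoryless} state function: $f(s).x$ is the largest element of $relETP^x(s)$ not exceeding $s.x$. The sets $relETP^x(s)$ and even $ETP^x(s)$ depend on the non-clock part of $s$, and a $NextI_S$ step changes that part. Nothing in \cref{assumptions-relETP} prevents $relETP^x(t)$ from losing the point $f(s).x$ (in the Lightning instantiation this actually happens, e.g.\ when an HTLC leaves the state whose timelock contributed that point) or from gaining a spurious point between $f(s).x$ and $s.x$. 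In either case $f(t).x \neq f(s).x$, so the mapped step $\arstep{f(s)}{f(t)}$ changes a clock \emph{and} non-clock variables simultaneously; that is neither a $NextI_{\hat{S}}$ step (clocks must be unchanged) nor an $AdvanceTime_{\hat{S}}$ step (non-clock variables must be unchanged) nor stuttering, so the refinement fails. Also note that \cref{assumption-b-independent} only asserts clock-independence of $B^x$, not of $relETP^x$, so your claim that $relETP^x(f(s)) = relETP^x(s)$ is not licensed by the stated assumptions.

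The paper escapes this by not using a pure state function at all: it first builds an auxiliary specification $Spec_{S'}$ with history variables $mappedClock^x$ that are \emph{frozen} by $NextI_{S'}$ and updated only at $AdvanceTime_{S'}$ steps as a running maximum over the $relETP^x$ points passed so far. The refinement mapping is then ``substitute $mappedClock^x$ for $x$.'' Stability across $NextI$ steps holds by construction, and what remains to prove is the invariant $Inv_{S'}$ of \cref{def-inv}, namely $mappedClock^x \geq x - n^x(s)$ (the frozen representative never falls below the nearest genuinely newly-enabling time point), established in \cref{lemma-inv} via the monotonicity argument $n^x(s) \leq n^x(t)$ for $NextI$ steps; \cref{lemma-inv-applied} then converts the invariant into the translation property you call zone invariance (only the one direction needed for simulation, not the biconditional you state). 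If you want to salvage your write-up, replace the memoryless $f$ by this history-variable construction --- this is the standard Abadi--Lamport situation in which a refinement mapping exists only after auxiliary variables are added --- and prove the invariant; the rest of your case analysis for $AdvanceTime$ and the initial condition then goes through essentially as you describe.
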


\subsection{Proof of Generalized Time Skip Theorem}
\label{sec-appendix-time-skip-proof}

Based on the definitions given above in \cref{sec-appendix-generalized-time-skip}, we prove \cref{theorem-general-time-skip}.

\subsubsection{Extended Real-Time Specification $Spec_{S'}$}

Because the proof that specification $Spec_S$ implements specification $Spec_{\hat{S}}$ needs auxiliary variables for defining how a clock is mapped to specification $Spec_S$, we define an extended specification $Spec_{S'}$ that wraps specification $Spec_S$ and adds the auxiliary variables.

Specification $Spec_{S'}$ uses the variables $v_{S'}$ defined as the variables $v_S$ of $Spec_S$ and the auxiliary variables $mappedClock^x$ for each clock $x \in \mathcal{X}$:
\[
v_{S'} = v_S \cup \bigcup_{x \in \mathcal{X}} \{mappedClock^x\}
\]
Specification $Spec_{S'}$ is defined by $Spec_{S'} = Init_{S'} \land \square [Next_{S'}]_{v_{S'}} \land Liveness_{S'}$ where $Init_{S'}$ is defined as:
\begin{align*}
  Init_{S'} = &Init_S \\
    &\land \A x \in \mathcal{X} : mappedClock^x = x
\end{align*}
In TLA\textsuperscript{+}, $x \colongt y$ is the notation for a function that maps $x$ to $y$.
Note, that we do not distinguish between a clock $x$ and the value of the clock $x$ to simplify notation.

$Liveness_{S'}$ is defined as $Liveness_S$.

$Next_{S'}$ is defined as $Next_{S'} = AdvanceTime_{S'} \lor NextI_{S'}$ where $NextI_{S'}$ is defined as:
\begin{align*}
  NextI_{S'} = &NextI_{S} \\
    &\land \A x \in \mathcal{X} : mappedClock'^x = mappedClock^x
\end{align*}
$AdvanceTime_{S'}$ is defined as:
\begin{alignat*}{2}
  Advanc&eTime_{S'} = \\
  Ad&vanceTime_{S} \\
        &\land \A x \in \mathcal{X} : &&mappedClock'^x = max( \{mappedClock^x\} \\
        & && \cup \{n \in relETP^x : n \leq x' \})
\end{alignat*}
$relETP^x$ is used without a state parameter here. The parameter of $relETP^x$ is the `current' state, i.e., the state that is described by the unprimed variables in $AdvanceTime_{S'}$.

\begin{lemma}
$Spec_S \Rightarrow Spec_{S'}$
\end{lemma}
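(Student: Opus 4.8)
The plan is to recognize $Spec_{S'}$ as $Spec_S$ augmented with the auxiliary (history) variables $mappedClock^x$, and to prove the implication by exhibiting, for every behavior of $Spec_S$, a canonical assignment of values to these variables that yields a behavior of $Spec_{S'}$ agreeing with the original on all of $v_S$. Since the $mappedClock^x$ are internal variables of $Spec_{S'}$ and the external variables of the two specifications coincide, producing such an extension establishes that $Spec_S$ implements $Spec_{S'}$ in the sense defined in \cref{sec-fundamentals}; this is the standard history-variable construction, whose soundness is guaranteed by \cite{abadi_existence_1991}.

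First I would fix an arbitrary behavior $\sigma = \langle \sigma_0, \sigma_1, \dots\rangle$ satisfying $Spec_S$ and define the values $\sigma_i.mappedClock^x$ by induction along $\sigma$. In the initial state I set $\sigma_0.mappedClock^x = \sigma_0.x$, matching $Init_{S'}$. For each step $\sigma_i \to \sigma_{i+1}$ I classify it according to $[Next_S]_{v_S}$: if it is a stuttering step of $v_S$ or a $NextI_S$ step, I leave $mappedClock^x$ unchanged; if it is an $AdvanceTime_S$ step, I set $\sigma_{i+1}.mappedClock^x = \max(\{\sigma_i.mappedClock^x\} \cup \{n \in relETP^x(\sigma_i) : n \leq \sigma_{i+1}.x\})$. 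This maximum is well defined because the set is a finite, nonempty subset of $\mathbb{N}_0$, so the assignment is a total deterministic function of $\sigma_i$, $\sigma_i.mappedClock^x$, and $\sigma_{i+1}.x$ --- precisely the information a history variable is allowed to use.

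Next I would verify that the extended behavior satisfies each conjunct of $Spec_{S'}$. The initialization matches $Init_{S'}$ by construction. For the step relation, each case of the classification reproduces exactly one disjunct of $Next_{S'}$: a $NextI_S$ step with $mappedClock^x$ held fixed is a $NextI_{S'}$ step; an $AdvanceTime_S$ step carrying the prescribed update is an $AdvanceTime_{S'}$ step; and a $v_S$-stuttering step remains a $v_{S'}$-stuttering step because $mappedClock^x$ is left unchanged. For liveness, since $Liveness_{S'} = Liveness_S$ and every fairness formula refers only to actions and subscripts over $v_S$ --- none of which mention $mappedClock^x$ --- the enabledness and occurrence of the relevant actions depend solely on the $v_S$-projection of the behavior, which is identical to $\sigma$; hence $Liveness_{S'}$ holds iff $Liveness_S$ does.

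The argument is routine and I do not expect a genuine obstacle. The only point demanding care is confirming that $mappedClock^x$ is a legitimate history variable and never blocks a step: because its update rule is a total function that reads only the current state, the current auxiliary value, and the single primed clock $x'$ --- and is attached to the original actions purely as an extra conjunct --- it neither strengthens any guard nor removes any behavior of $v_S$. Making this observation precise (well-definedness of the $\max$, and that the added conjunct is always satisfiable) is the substance of the proof, after which \cite{abadi_existence_1991} lets us conclude $Spec_S \Rightarrow Spec_{S'}$.
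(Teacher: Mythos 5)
Your proposal is correct and takes essentially the same route as the paper: the paper's proof is a one-sentence appeal to the fact that the $mappedClock^x$ are history variables that never affect enabledness of $Next_S$, and your argument simply spells out that construction (explicit inductive assignment matching $Init_{S'}$, $NextI_{S'}$, and $AdvanceTime_{S'}$, plus the observation that $Liveness_{S'}=Liveness_S$ depends only on $v_S$) in full detail. No discrepancy.
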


\begin{proof}
\pf\ The lemma follows directly from the definition of $Spec_{S'}$ because the added variables $mappedClock^x$ are auxiliary variables and, in particular, history variables, that do not affect whether a $Next_S$ step is enabled or not.
\end{proof}

\subsubsection{Proof of \cref{theorem-general-time-skip}}

In the proof, we use the following notation: $\overline{F}$ is formula F in which each clock $x \in \mathcal{X}$ is replaced by the variable $mappedClock^x$.
Formally, this can be expressed using the notation $F \textsc{ with } v_{1} \leftarrow e_1, v_{2} \leftarrow e_2$ to describe the expression $F$ where variable $v_{1}$ is substituted by expression $e_1$ and variable $v_{2}$ is substituted by expression $e_2$.
With $x_1, x_2, x_3, ...$ being the clocks in $\mathcal{X}$:
\begin{alignat*}{2}
\overline{F} = F \textsc{ with } x_1 & \leftarrow && mappedClock^{x_1}, \\
x_2 & \leftarrow && mappedClock^{x_2}, \\
x_3 & \leftarrow && mappedClock^{x_3}, \\
  & ...
\end{alignat*}

Analogously to the definition of $\hat{T}^x_d$ above, we define $T'^x_d$ to be a function that, given a state $s \in \Sigma'$, returns a state $s$ that equals the given state $s$ except that the value of the clock $x$ is set to $d$.

\begin{definition}[$T'^x_d$]
\begin{alignat*}{3}
T'^x_{d}: \Sigma' &\to \Sigma' \\
        s &\mapsto s' \text{ so that } &&(\A &&v \in v_S' : s'.v = s.v) \\
        & &&\land &&s'.x = d \\
\end{alignat*}
\end{definition}

We extend the definition of $T'^x_d$ from a function of states to a function of behaviors:

\begin{definition}[$\mathcal{T}'^x_d$]
\begin{align*}
\mathcal{T}'^x_{d}: \Sigma'^* &\to \Sigma'^* \\
        \langle \sigma_0, \sigma_1, \sigma_2, ... \rangle &\mapsto \langle T'^x_d(\sigma_0), T'^x_d(\sigma_1), T'^x_d(\sigma_2), ... \rangle
\end{align*}
\end{definition}

Analogously to $N_{\hat{S}}$, we define $N_{S'} \subseteq \Sigma'^*$ to be the set of all behaviors in specification $S'$ that consist of $NextI_{S'}$ and stuttering steps.

\begin{definition}[$N_{S'}$]
\begin{align*}
N_{S'} = \{ \sigma = \langle \sigma_0, \sigma_1, ... \rangle \in \Sigma'^* :  \forall i \in \mathbb{N}_0 : \astep{\sigma_i}{\sigma_{i+1}}{[NextI_{S'}]_{v_{S'}}} \}
\end{align*}
\end{definition}

We define $n^x(s)$ to equal the minimal $n \in \mathbb{N}_0$ so that there exists a newly possible behavior for given state $s$ where the clock $x$ is set to $s.x - n$.
We assume that the function $min$ is defined so that if the parameter of $min$ is the empty set, $min$ equals $\infty$.

\begin{definition}[$n^x$]
\begin{align*}
n^x(s) \coloneq min(\{ n \in \mathbb{N}_0 : \E \sigma \in N_{S'} : &\  \sigma_0 = T'^x_{s.x - n}(s) \\
 &\land \mathcal{T}'^x_{s.x - n -1}(\sigma) \notin N_{S'} \} ) 
\end{align*}
\end{definition}

\begin{lemma}
\label{lemma-advance-time}
Given two states $s, t \in \Sigma'$ and $\astep{s}{t}{AdvanceTime_{S'}}$, it holds for all $x \in \mathcal{X}$ that:
\[
t.mappedClock^x \geq max(s.mappedClock^x, t.x - n^x(t))
\]
\end{lemma}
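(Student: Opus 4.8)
The plan is to reduce the claim to a single membership statement of the form $v \in relETP^x(s)$ and then to supply a witnessing behavior for it. First I would unfold the definition of $AdvanceTime_{S'}$. Since $\astep{s}{t}{AdvanceTime_{S'}}$, we have
\[ t.mappedClock^x = \max\bigl(\{s.mappedClock^x\} \cup \{n \in relETP^x(s) : n \le t.x\}\bigr). \]
The right-hand side of the lemma is itself a maximum of two terms, so it suffices to bound $t.mappedClock^x$ below by each separately. The bound $t.mappedClock^x \ge s.mappedClock^x$ is immediate, because $s.mappedClock^x$ is one of the elements over which the maximum is taken. Hence everything reduces to proving $t.mappedClock^x \ge t.x - n^x(t)$.

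For this remaining bound I would first dispose of the degenerate case: if $n^x(t) = \infty$ (no newly possible behavior for $t$ exists), then $t.x - n^x(t) = -\infty$ and the inequality holds trivially. Otherwise set $v = t.x - n^x(t)$; since $n^x(t) \in \mathbb{N}_0$ we have $v \le t.x$. By the definition of $n^x$, there is a witnessing behavior $\sigma \in N_{S'}$ with $\sigma_0 = {T'}^x_v(t)$ and ${\mathcal{T}'}^x_{v-1}(\sigma) \notin N_{S'}$; that is, $v$ is a point at which a step becomes newly possible for $t$. The goal is then to show $v \in relETP^x(s)$: once this is established, $v$ lies in the set $\{n \in relETP^x(s) : n \le t.x\}$, so the maximum defining $t.mappedClock^x$ is at least $v$, which is exactly the bound we need.

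The crux, and the step I expect to be the main obstacle, is establishing $v \in relETP^x(s)$. Because the $mappedClock^x$ are history variables that $NextI_{S'}$ leaves unchanged, membership of a behavior in $N_{S'}$ does not depend on them, so the witness descends to an $S$-behavior and places $v$ in the enabling set $ETP^x$ evaluated at the underlying state of $t$; by \cref{assumptions-relETP} it would then suffice to move this witness from state $t$ to state $s$ in order to conclude $v \in ETP^x(s) \subseteq relETP^x(s)$. The difficulty is that $s$ and $t$ agree on all non-clock variables (an $AdvanceTime_S$ step changes nothing else) but every clock is one unit larger in $t$ than in $s$; the translation ${T'}^x_v$ pins clock $x$ to the common value $v$ in both states, leaving only the \emph{other} clocks shifted by one. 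The hard part will be arguing that a step witnessing $v \in ETP^x$ becomes newly possible precisely as clock $x$ crosses its guard threshold at $v$, an event that is insensitive to the one-unit shift of the remaining clocks, so that the same witnessing steps apply from ${T'}^x_v(s)$. The residual cases in which this insensitivity is not outright, together with the requirement that the advance not jump a time bound, I would absorb using \cref{assumptions-relETP} (which also forces $\{b+1 : b \in B^x(s)\} \subseteq relETP^x(s)$), the shift-invariance of $B^x$ in clock $x$ from \cref{assumption-b-independent}, and the no-skip conjunct $\forall b \in B^x(s): s.x \le b \Rightarrow t.x \le b$ of $AdvanceTime_{S'}$. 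Together these pin $v$ inside $relETP^x(s)$ and close the argument.
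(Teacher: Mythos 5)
Your skeleton is the same as the paper's: unfold $AdvanceTime_{S'}$ to get $t.mappedClock^x = max(\{s.mappedClock^x\} \cup \{n \in relETP^x(s) : n \le t.x\})$, dispose of the $s.mappedClock^x$ branch immediately, and reduce everything to exhibiting $v = t.x - n^x(t)$ as an element of $\{n \in relETP^x(s) : n \le t.x\}$ by taking the behavior witnessing $n^x(t)$ and pushing it through $ETP^x \subseteq relETP^x$ (the paper phrases this via a $max$-to-$min$ conversion in its steps 4-1/4-2/4-4, but that difference is cosmetic). The problem is that you stop exactly at the step that carries all the content. Establishing $v \in relETP^x(s)$ requires transferring the newly possible behavior from $T'^x_v(t)$ to the corresponding translate of $s$, and you explicitly leave this as ``the main obstacle,'' offering only the hope that an ``insensitivity to the one-unit shift of the remaining clocks'' can be argued and that residual cases are ``absorbed'' by \cref{assumption-b-independent}, \cref{assumptions-relETP} and the no-skip conjunct. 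None of those facts speaks to a shift of the \emph{other} clocks: \cref{assumption-b-independent} is invariance of $B^x$ under retiming of clock $x$ only, and the second half of \cref{assumptions-relETP} merely injects time bounds into $relETP^x$, which is unrelated to why an arbitrary newly possible step at $v$ survives the move from $t$ back to $s$. So the proposal has a genuine gap at the crux.

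For comparison, the paper's step 4-3 closes this with two observations: (i) the $mappedClock$ variables are pure history variables---they neither occur in nor are changed by $NextI_{S'}$---so a behavior in $N_{S'}$ can be stripped to one in $N_S$ and conversely (with a stuttering step to align), and (ii) an $AdvanceTime_{S'}$ step changes nothing but the clocks and the mapped clocks, so after pinning clock $x$ to $v$ the two starting states are identified and the \emph{same} witnessing behavior is reused verbatim; no separate insensitivity argument is made. You are right that with several clocks the identification in (ii) is delicate, since the remaining clocks sit one unit higher in $t$ than in $s$ and the paper asserts the identification rather than arguing it---but noticing a subtlety the reference proof glosses over is not the same as resolving it, and the tools you propose would not resolve it either.
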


We write the proof of this lemma in form of a structured proof as introduced in \cite{lamport_how_2012}.
Given a state $s \in \Sigma'$, we use the notation $\ddot{s}$ to refer to a copy of state $s$ of which the variables $mappedClock^x$ were removed so that $\ddot{s} \in \Sigma = \hat{\Sigma}$.
\begin{proof}
\pf\ We prove the implication of \cref{lemma-advance-time}, by assuming that $s, t \in \Sigma' \land \astep{s}{t}{AdvanceTime_{S'}}$ and proving that $\A x \in \mathcal{X} : t.mappedClock^x = max(s.mappedClock^x, t.x - n^x(t))$.
Let $x \in \mathcal{X}$ be arbitrary but fixed.
\step{2}{$t.mappedClock^x = max( \{s.mappedClock^x\} \cup \{n \in relETP^x(\ddot{s}) : n \leq t.x \})$}
  \begin{proof}
  \pf\ By the assumption that $s, t \in \Sigma' \land \astep{s}{t}{AdvanceTime_{S'}}$, by definition of $AdvanceTime_{S'}$, and by writing the primed variables $v'$ as $t.v$ and the unprimed variables $v$ as $s.v$.
  \end{proof}
\step{3}{$t.mappedClock^x = max( s.mappedClock^x, \\ max(\{n \in relETP^x(\ddot{s}) : n \leq t.x \}))$}
  \begin{proof}
  \pf\ By \stepref{2} and rearranging the values to take the maximum from.
  \end{proof}
\step{4}{$max(\{n \in relETP^x(\ddot{s}) : n \leq t.x \}) \geq t.x - n^x(t)$}
  \begin{proof}
  \pf\ 
  \step{4-1}{$max(\{n \in relETP^x(\ddot{s}) : n \leq t.x \}) = t.x - min(\{ n \in \mathbb{N}_0 : (t.x - n) \in relETP^x(\ddot{s}) \}) $}
    \begin{proof}
    \pf\ $max(\{n \in relETP^x(\ddot{s}) : n \leq t.x \})$ can be written as $max(\{n \in \mathbb{N}_0 : n \in relETP^x(\ddot{s}) \land n \leq t.x \})$ because $relETP^x(\ddot{s}) \subseteq \mathbb{N}_0$.
    The greatest value $n$ so that $n \leq t.x$ can be written as $t.x - n_0$ for the minimal value $n_0$ so that $t.x - n_0 \leq t.x$.
    Then, $max(\{n \in relETP^x(\ddot{s}) : n \leq t.x \}) = t.x - min(\{n \in \mathbb{N}_0 : (t.x - n) \in relETP^x(\ddot{s}) \land (t.x - n) \leq t.x \})$.
    The statement follows because $t.x - n \leq t.x$ holds for all $n \in \mathbb{N}_0$.
    \end{proof}
  \step{4-2}{$min(\{ n \in \mathbb{N}_0 : (t.x - n) \in relETP^x(\ddot{s}) \}) \geq min(\{ n \in \mathbb{N}_0 : (t.x - n) \in ETP^x(\ddot{s}) \}) = min(\{ n \in \mathbb{N}_0 : \exists \sigma \in N_{S} : \sigma_0 = \hat{T}^x_{t.x - n}(\ddot{s}) \land \hat{\mathcal{T}}^x_{t.x - n-1}(\sigma) \notin N_{S} \} \})$}
    \begin{proof}
    \pf\ By definition of $relETP^x$ and $ETP^x$.
    \end{proof}
  \step{4-4}{$min(\{ n \in \mathbb{N}_0 : \E \sigma \in N_{S'} : \sigma_0 = T'^x_{t.x - n}(t) \land \mathcal{T}'^x_{t.x - n -1}(\sigma) \notin N_{S'} \}) = n^x(t)$}
    \begin{proof}
    \pf\ Definition of $n^x$.
    \end{proof}
  \step{4-3}{$\A n \in \mathbb{N}_0 : (\exists \sigma \in N_{S} : \sigma_0 = \hat{T}^x_{t.x - n}(\ddot{s}) \land \hat{\mathcal{T}}^x_{t.x - n-1}(\sigma) \notin N_{S} \iff \E \sigma' \in N_{S'} : \sigma'_0 = T'^x_{t.x - n}(t) \land \mathcal{T}'^x_{t.x - n -1}(\sigma') \notin N_{S'} )$.}
    \begin{proof}
    \pf\ We prove the statement by proving both directions of the implication:
    \step{4-3-1}{\assume{$n \in \mathbb{N}_0$, $\sigma \in N_{S}$, $\sigma_0 = \hat{T}^x_{t.x - n}(\ddot{s})$, $\hat{\mathcal{T}}^x_{t.x - n-1}(\sigma) \notin N_{S}$}
                           \prove{$\E \sigma' \in N_{S'} : \sigma'_0 = T'^x_{t.x - n}(t) \land \mathcal{T}'^x_{t.x - n -1}(\sigma') \notin N_{S'} )$}}
      \begin{proof}
      \pf\ Define a behavior $\sigma' \in N_{S'}$ by adding the variables $mappedClock^x$ to each state of a copy $\sigma'$ of $\sigma$ with, for each state $\sigma'_i$ of the behavior $\sigma'$ it holds that $\sigma'_i.mappedClock^x = s.mappedClock^x$.
      Because the variables $mappedClock^x$ do not affect the conditions of $NextI_{S'}$ and are not changed by $NextI_{S'}$ and because all other variables are changed by $NextI_{S'}$ as by $NextI_{S}$, it holds that each step in $\sigma'$ is a step of $NextI_{S'}$ which means that $\sigma' \in N_{S'}$.
      
      By definition of $AdvanceTime_{S'}$, the states $s$ and $t$ differ only by the variables $mappedClock^x$ and $x$.
      Because $\hat{T}^x_{t.x - n}(\ddot{s}).x = t.x - n = T'^x_{t.x - n}(t).x$, it follows that $\sigma'_0 = T'^x_{t.x -n}(t)$.
      By the definition of $\sigma'$ and by the assumption $\hat{\mathcal{T}}^x_{t.x - n-1}(\sigma) \notin N_{\hat{S}}$, it holds that $\mathcal{T}'^x_{t.x - n -1}(\sigma') \notin N_{S'}$.
      \end{proof}
    \step{4-3-2}{\assume{$n \in \mathbb{N}_0$, $\sigma' \in N_{S'}$, $\sigma'_0 = T'^x_{t.x - n}(t)$, $\mathcal{T}'^x_{t.x - n -1}(\sigma') \notin N_{S'}$}
                           \prove{$\E \hat{\sigma} \in N_{\hat{S}} : \hat{\sigma}_0 = \hat{T}^x_{t.x - n}(\ddot{s}) \land \hat{\mathcal{T}}^x_{t.x - n-1}(\hat{\sigma}) \notin N_{\hat{S}}$}}
      \begin{proof}
      \pf\ Define a behavior $\hat{\sigma} \in N_{\hat{S}}$ by removing the variables $mappedClock^x$ from each state in $\sigma$ and duplicating the first state, i.e. $\hat{\sigma} = \langle \ddot{\sigma_0}', \ddot{\sigma_0}', \ddot{\sigma_1}', \ddot{\sigma_2}', ... \rangle$.
      Because the first step is a stuttering step and $NextI_{S'} = NextI_{\hat{S}}$, it holds that $\hat{\sigma} \in N_{\hat{S}}$.
      
      By definition of $AdvanceTime_{S'}$, all variables of the states $s$ and $t$ are equal except the variables $mappedClock^x$ and $x$.
      It holds that $\ddot{T}'^x_{t.x - n}(t) = \hat{T}'^x_{t.x - n}(\ddot{s})$ because in both states the value of clock $x$ is $t.x -n$ and the variables $mappedClock^x$ do not appear in both states.
      Because $\hat{\sigma}_0 = \ddot{\sigma_0}' = \ddot{T}'^x_{t.x - n}(t)$, it holds that $\hat{\sigma}_0 = \hat{T}^x_{t.x - n}(s)$.

      By definition of $\hat{\sigma}$ and by the assumption $\mathcal{T}'^x_{t.x - n -1}(\sigma') \notin N_{S'}$, it holds that $\hat{\mathcal{T}}^x_{t.x - n-1}(\hat{\sigma}) \notin N_{\hat{S}}$.
      \end{proof}
    \qedstep
      \begin{proof}
      By \stepref{4-3-1} and \stepref{4-3-2}.
      \end{proof}
    \end{proof}
  \qedstep
    \begin{proof}
    After rearranging the left part (\stepref{4-1}), expanding the definition of $relETP$ (\stepref{4-2}), and expanding the definition of $n(t)$ (\stepref{4-4}), the conclusion follows by \stepref{4-3}.
    \end{proof}
  \end{proof}
  \qedstep
    \begin{proof}
    The lemma follows from \stepref{3} by using \stepref{4}.
    \end{proof}
\end{proof}

\begin{definition}[$Inv_{S'}$]
\label{def-inv}
Define an invariant $Inv_{S'}$ of a state $s$ of specification $Spec_{S'}$ as:
\[
Inv_{S'}(s) \equiv \A x \in \mathcal{X} : s.mappedClock^x \geq s.x - n^x(s)
\]
\end{definition}

\begin{lemma}
\label{lemma-inv}
$Inv_{S'}$ is an invariant of specification $Spec_{S'}$:
\[
Spec_{S'} \Rightarrow \square  Inv_{S'}
\]
\end{lemma}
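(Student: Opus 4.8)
The plan is to prove \cref{lemma-inv} by the standard TLA\textsuperscript{+} invariance rule: it suffices to establish $Init_{S'} \Rightarrow Inv_{S'}$ together with the step-preservation $Inv_{S'} \land [Next_{S'}]_{v_{S'}} \Rightarrow Inv_{S'}'$. The base case is immediate, since in an initial state $mappedClock^x = x$ for every $x \in \mathcal{X}$, and $n^x(s)$ is a minimum over a subset of $\mathbb{N}_0$ (or $\infty$ on the empty set) and hence non-negative, so $mappedClock^x = x \geq x - n^x(s)$. For the step-preservation I would fix an arbitrary clock $x \in \mathcal{X}$ and split the step $\astep{s}{t}{[Next_{S'}]_{v_{S'}}}$ into three cases: a stuttering step, an $AdvanceTime_{S'}$ step, and a $NextI_{S'}$ step. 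The stuttering case is trivial because $s = t$. The $AdvanceTime_{S'}$ case is handled \emph{entirely} by \cref{lemma-advance-time}: that lemma already gives $t.mappedClock^x \geq \max(s.mappedClock^x, t.x - n^x(t))$, and the second argument of the maximum alone yields $t.mappedClock^x \geq t.x - n^x(t)$, which is exactly $Inv_{S'}(t)$ for clock $x$; the induction hypothesis is not even needed here.

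The interesting case, and where I expect the main work, is the $NextI_{S'}$ step. By definition of $NextI_{S'}$ the auxiliary clock is unchanged, $t.mappedClock^x = s.mappedClock^x$, and since clocks are advanced only by $AdvanceTime$ we have $t.x = s.x$. Substituting these together with the induction hypothesis $s.mappedClock^x \geq s.x - n^x(s)$ into the goal $t.mappedClock^x \geq t.x - n^x(t)$, everything collapses to the single \emph{monotonicity claim} $n^x(t) \geq n^x(s)$: a $NextI_{S'}$ step cannot push the most recent point at which a behavior becomes newly possible \emph{forward} in time. I would isolate this claim as the crux of the argument.

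To prove $n^x(t) \geq n^x(s)$ I would take a behavior $\sigma \in N_{S'}$ that witnesses $n^x(t)$, so $\sigma_0 = T'^x_{t.x - n^x(t)}(t)$ and $\mathcal{T}'^x_{t.x - n^x(t) - 1}(\sigma) \notin N_{S'}$, and prepend the state $T'^x_{s.x - n^x(t)}(s)$ to form $\sigma' = \langle T'^x_{s.x - n^x(t)}(s), \sigma_0, \sigma_1, \dots \rangle$. The assumption that an action may neither read a clock nor copy its value into another variable is exactly what makes the shifted first transition $\astep{T'^x_{s.x - n^x(t)}(s)}{\sigma_0}{NextI_{S'}}$ remain a $NextI_{S'}$ step: the original step $\astep{s}{t}{NextI_{S'}}$ does not depend on the value of $x$, and both endpoints are translated by the same amount $n^x(t)$ (legitimate because $s.x = t.x$). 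Hence $\sigma' \in N_{S'}$, and translating $\sigma'$ one further unit downward reproduces, in its tail, the forbidden non-$NextI$ step already present in $\mathcal{T}'^x_{t.x - n^x(t) - 1}(\sigma)$, so $\mathcal{T}'^x_{s.x - n^x(t) - 1}(\sigma') \notin N_{S'}$. Thus $\sigma'$ is a newly possible behavior for $s$ at offset $n^x(t)$, which means $n^x(t)$ lies in the set minimized in the definition of $n^x(s)$, giving $n^x(s) \leq n^x(t)$ and closing the case.

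The main obstacle is precisely this lifting step, and the delicate point to get right is that it rests essentially on the no-clock-reading restriction: if $NextI_{S'}$ were allowed to reset or freshly create the clock $x$ (so that $t.x \neq s.x$ in general), the uniform translation would no longer connect the witness for $t$ to one for $s$, and the monotonicity of $n^x$ would require a separate argument handling the reset. For the generalized theorem I would therefore state (and rely on) that the internal action leaves all clocks untouched, noting that clock creation is an application-level concern addressed where the theorem is instantiated.
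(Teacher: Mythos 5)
Your overall skeleton matches the paper's proof exactly: the standard invariance rule, a trivial base case, the stuttering case, the $AdvanceTime_{S'}$ case discharged entirely by \cref{lemma-advance-time}, and the $NextI_{S'}$ case reduced to the monotonicity claim $n^x(t) \geq n^x(s)$. The paper's step 2-3-1 isolates exactly this claim, so you have identified the right crux.

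However, your proof of the monotonicity claim has a genuine gap. You justify the validity of the time-translated first step $\astep{T'^x_{s.x - n^x(t)}(s)}{\sigma_0}{NextI_{S'}}$ by asserting that ``the original step $\astep{s}{t}{NextI_{S'}}$ does not depend on the value of $x$,'' appealing to the no-clock-reading restriction. But the paper's assumption is only that an action may not \emph{read a clock and write the clock's value to another variable}; it does not forbid guards that read clocks. Indeed, $NextI$ steps in this setting depend on clock values in an essential way (e.g., an HTLC can be fulfilled only while $LedgerTime$ is below its timelock) --- if they did not, $ETP^x(s)$ would be empty in every state and the entire time-optimization machinery would be vacuous. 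So a $NextI_{S'}$ step enabled at time $s.x$ need not be enabled after translating both endpoints down by $n^x(t)$, and your uniform-translation argument does not go through as stated.

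The paper closes this gap differently: it proves by induction on $n$ (step 2-3-1-7) that for every $n < n^x(s)$ the translated step from $T'^x_{s.x-n}(s)$ to $T'^x_{s.x-n}(t)$ is a $NextI_{S'}$ step, where each inductive descent by one time unit is licensed not by clock-independence but by the \emph{minimality} of $n^x(s)$: for $n < n^x(s)$, every behavior in $N_{S'}$ starting at $T'^x_{s.x-n}(s)$ translates down one unit and stays in $N_{S'}$ (step 2-3-1-1). From this it concludes (step 2-3-1-8) that every behavior starting at $T'^x_{s.x-n}(t)$ with $n < n^x(s)$ also translates down validly, which bounds the minimum defining $n^x(t)$ from below by $n^x(s)$. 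Your prepending construction can be salvaged by recasting it as a proof by contradiction (assume $n^x(t) < n^x(s)$, then the inductive translation argument legitimizes the prepended step at offset $n^x(t)$ and yields $n^x(s) \leq n^x(t)$, a contradiction), but the load-bearing ingredient must be the minimality of $n^x(s)$, not the clock-independence of $NextI$ steps.
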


\begin{proof}
\step{1}{$Init_{S'} \Rightarrow Inv_{S'} $}
  \begin{proof}
  Let $x \in \mathcal{X}$ be arbitrary but fixed.
  By definition of $Init_{S'}$, it holds that $s.mappedClock^x = s.x$.
  Because it holds that $n^x(s) \geq 0$ by definition of $n^x$ and by definition of $min$, it holds that $s.mappedClock^x - s.x \geq - n^x(s)$.
  \end{proof}
\step{2}{$Inv_{S'} \land [Next_{S'}]_{v_{S'}} \Rightarrow Inv'_{S'} $}
  \begin{proof}
  \step{2-1}{$Inv_{S'} \land (v'_{S'} = v_{S'}) \Rightarrow Inv'_{S'} $}
    \begin{proof}
    \pf\ Assume that $Inv_{S'}$ and $v'_{S'} = v_{S'}$.
    It directly follows that $Inv'_{S'}$.
    \end{proof}
  \step{2-2}{$Inv_{S'} \land AdvanceTime_{S'} \Rightarrow Inv'_{S'} $}
    \begin{proof}
    \pf\ We refer to the step's starting state as $s$ and to the ending state as $t$.
    Let $x \in \mathcal{X}$ be arbitrary but fixed.
    \Cref{lemma-advance-time} states that it follows from $AdvanceTime_{S'}$ that $t.mappedClock^x \geq max(s.mappedClock^x, t.x - n^x(t))$.
    It follows that $t.mappedClock^x \geq t.x - n^x(t)$
    This is equal to $Inv_{S'}(t)$ which can be written as $Inv'_{S'}$.
    \end{proof}
  \step{2-3}{$Inv_{S'} \land NextI_{S'} \Rightarrow Inv'_{S'} $}
    \begin{proof}
    \pf\ We refer to the step's starting state as $s$ and to the ending state as $t$.
    Let $x \in \mathcal{X}$ be arbitrary but fixed.
    We assume that $Inv_{S'} \land NextI_{S'}$ holds and we prove that $Inv'_{S'}$.
    \step{2-3-1}{$n^x(s) \leq n^x(t)$}
      \begin{proof}
      \pf\ In the case that $n^x(s) = 0$, it follows that $n^x(s) \leq n(t)$ because it follows from the definition of $n^x(t)$ that $n^x(t) \geq 0$.
      
      In the following, we assume that $n^x(s) > 0$ which is the only other possible case because $n^x(s) \geq 0$ by definition of $n^x(s)$.
      \step{2-3-1-1}{$\A n \in \mathbb{N}_0$ it holds that, if $n < n^x(s)$, then $\A \sigma \in N_{S'} : \sigma_0 = T'^x_{s.x - n}(s) \Rightarrow \mathcal{T}'^x_{s.x - n -1}(\sigma) \in N_{S'}$}
        \begin{proof}
        \pf\ Follows from the definition of $n^x(s)$.
        $n^x(s)$ is defined as $min(\{n \in \mathbb{N}_0 : f(n) \}$ for a boolean-valued function $f$. The definition of $min$ implies that $\A n \in \mathbb{N}_0 : n < n^x(s) \Rightarrow \lnot f(n)$.
        \end{proof}

      \step{2-3-1-7}{$\A n \in \mathbb{N}_0$ it holds that, if $n < n^x(s)$, then the step from $T'^x_{s.x - n}(s)$ to $T'^x_{s.x - n}(t)$ is a $NextI_{S'}$ step.}
        \begin{proof}
        \pf\ 
        We show this by induction over $n \in \mathbb{N}_0$.
        For $n = 0$, we have to show that the step from $T'^x_{s.x - 0}(s)$ to $T'^x_{s.x - 0}(t)$ is a $NextI_{S'}$ step.
        This holds of \stepref{2-3} because $T'^x_{s.x}(s) = s$ and the step from $s$ to $t$ is a $NextI_{S'}$ step
        We assume that if $n < n^x(s)$, then the step from $T'^x_{s.x - n}(s)$ to $T'^x_{s.x - n}(t)$ is a $NextI_{S'}$ step.
        We prove that if $n+1 < n^x(s)$, then the step from $T'^x_{s.x - (n+1)}(s)$ to $T'^x_{s.x - (n+1)}(t)$ is a $NextI_{S'}$ step.
        Because the step from $T'^x_{s.x - n}(s)$ to $T'^x_{s.x - n}(t)$ is a $NextI_{S'}$ step, there exists a behavior $\sigma = \langle T'^x_{s.x - n}(s), T'^x_{s.x - n}(t), ... \rangle$.
        By \stepref{2-3-1-1}, the behavior $\mathcal{T}'^x_{s.x-(n+1)}(\sigma)$ is a behavior in $N_{S'}$ which means that its first step is a $NextI_{S'}$ step from $T'^x_{s.x - (n+1)}(s)$ to $T'^x_{s.x - (n+1)}(t)$.
        \end{proof}
      
      \step{2-3-1-8}{$\A n \in \mathbb{N}_0$ it holds that, if $n < n^x(s)$, then $\A \sigma \in N_{S'} : \sigma_0 = T'^x_{s.x - n}(t) \Rightarrow \mathcal{T}'^x_{s.x - n -1}(\sigma) \in N_{S'}$}
        \begin{proof}
        \pf\ By \stepref{2-3-1-7}, it holds for all $n \in \mathbb{N}_0$ if $n < n^x(s)$, then for every behavior $\sigma^t$ that starts in state $\sigma^t_0 = T'^x_{s.x - n}(t)$, a behavior $\sigma^s$ that starts in $\sigma^s_0 = T'^x_{s.x - n}(s)$ can be constructed because the step from $T'^x_{s.x - n}(s)$ to $T'^x_{s.x - n}(t)$ is a $NextI_{S'}$ step. Because by \stepref{2-3-1-1} it holds that $\mathcal{T}'^x_{s.x - n -1}(\sigma^s) \in N_{S'}$, it follows that $\mathcal{T}'^x_{s.x - n -1}(\sigma^t) \in N_{S'}$.
        \end{proof}

      \step{2-3-1-9}{$min(\{ n \in \mathbb{N}_0 : \E \sigma \in N_{S'} : \sigma_0 = T'^x_{t.x - n}(t) \land \mathcal{T}'^x_{t.x - n -1}(\sigma) \notin N_{S'} \} ) \geq n^x(s)$}
        \begin{proof}
        \pf\ By \stepref{2-3-1-8} and because $s.x = t.x$.
        \end{proof}

      \qedstep
        \begin{proof}
        \pf\ By \stepref{2-3-1-9} and the definition of $n^x(t) = min(\{ n \in \mathbb{N}_0 : \E \sigma \in N_{S'} : \sigma_0 = T'^x_{t.x - n}(t) \land \mathcal{T}'^x_{t.x - n -1}(\sigma) \notin N_{S'} \} )$.
        \end{proof}

      \end{proof}
    \qedstep
      \begin{proof}
      Because $NextI_{S'}$ leaves the variables $mappedClock^x$ and $x$ unchanged, it holds that $t.mappedClock^x = s.mappedClock^x$ and $t.x = s.x$.
      By $Inv_{S'}$ it follows that $t.mappedClock^x \geq t.x - n^x(s)$ and, by \stepref{2-3-1}, it follows that $t.mappedClock^x \geq t.x - n^x(t)$.
      \end{proof}
    \end{proof}
  \qedstep
    \begin{proof}
    \pf\ Because \stepref{2-1}, \stepref{2-2}, \stepref{2-3}, and $[Next_{S'}]_{v_{S'}} = (AdvanceTime_{S'} \lor NextI_{S'}) \lor (v'_{S'} = v_{S'})$.
    \end{proof}
  \end{proof}
\qedstep
  \begin{proof}
  By \stepref{1}, \stepref{2}, and the definition of $Spec_{S'}$.
  \end{proof}
\end{proof}

\begin{lemma}
\label{lemma-init}
\[
Init_{S'} \Rightarrow \overline{Init_{\hat{S}}}
\]
\end{lemma}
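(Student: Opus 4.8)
The plan is to prove the implication by a direct substitution argument that exploits the fact that $Init_{S'}$ pins each auxiliary variable $mappedClock^x$ to the value of the corresponding clock $x$. First I would fix an arbitrary state $s$ satisfying $Init_{S'}$. By the definition of $Init_{S'}$ this yields two facts: $Init_S$ holds in $s$, and $s.mappedClock^x = s.x$ for every clock $x \in \mathcal{X}$.

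Next I would unfold the goal. Since $Init_{\hat{S}} = Init_S$, the formula $\overline{Init_{\hat{S}}}$ is exactly $Init_S$ with every occurrence of a clock $x$ textually replaced by $mappedClock^x$, by the definition of the overline operator. The key observation is that evaluating $\overline{Init_S}$ in state $s$ reads $s.mappedClock^x$ precisely where the unsubstituted formula $Init_S$ reads $s.x$, and every other variable reference is left untouched by the overline operator. Because $s.mappedClock^x = s.x$ for all $x$, the two evaluations coincide, so $\overline{Init_S}$ holds in $s$ if and only if $Init_S$ holds in $s$. As $Init_S$ holds in $s$, the goal $\overline{Init_{\hat{S}}}$ follows.

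I expect no genuine obstacle here; the lemma is essentially a bookkeeping step establishing that the auxiliary-variable extension $Spec_{S'}$ starts out consistently with the substituted initial predicate, which is exactly why the auxiliary variables were initialized to equal their clocks in $Init_{S'}$. The only point requiring care is the semantic justification that the overline operator affects exactly the clock references and nothing else, which rests on the standing assumptions that clocks are the sole renamed variables and that no action (in particular the initial predicate) aliases a clock's value into a non-clock variable. Granting that, the pointwise equality $s.mappedClock^x = s.x$ transports the satisfaction of $Init_S$ to $\overline{Init_S}$ unchanged, completing the proof.
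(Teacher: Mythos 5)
Your proof is correct and follows essentially the same route as the paper's: both use the definitions $Init_{S'} = Init_S \land \forall x \in \mathcal{X} : mappedClock^x = x$ and $Init_{\hat{S}} = Init_S$, and conclude that the pointwise equality of each $mappedClock^x$ with its clock $x$ makes the substituted predicate $\overline{Init_{\hat{S}}}$ evaluate identically to $Init_S$. Your write-up merely spells out the substitution semantics that the paper dismisses as trivial, which is a reasonable elaboration rather than a different argument.
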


\begin{proof}
\pf\ Recall that, by definition, $Init_{S'} = Init_S \land \A x \in \mathcal{X} : mappedClock^x = x$
and $\overline{Init_{\hat{S}}} = (Init_{\hat{S}} \textsc{ with } x_1 \leftarrow mappedClock^{x_1}, x_2 \leftarrow mappedClock^{x_2}, x_3 \leftarrow mappedClock^{x_3},  ...$
and $Init_{\hat{S}} = Init_S$.

Because $\A x \in \mathcal{X} : mappedClock^x = x$, it trivially holds that $Init_{S'} \Rightarrow \overline{Init_{\hat{S}}}$.

\end{proof}

\begin{lemma}
\label{lemma-inv-applied}
For every state $s \in \Sigma'$ it holds that if $Inv_{S'}(s)$ then
\[
\A x \in \mathcal{X} : \A \sigma \in N_{S'} : \sigma_0 = s \Rightarrow \mathcal{T}'^x_{s.mappedClock^x}(\sigma) \in N_{S'}
\]
and $Inv_{S'}(T'^x_{s.mappedClock^x}(s))$.
\end{lemma}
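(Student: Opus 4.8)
The plan is to prove the two conjuncts separately, fixing an arbitrary state $s$ with $Inv_{S'}(s)$ and an arbitrary clock $x \in \mathcal{X}$, and writing $m = s.mappedClock^x$ and $n = s.x - m$. Since $Inv_{S'}(s)$ gives $s.mappedClock^x \geq s.x - n^x(s)$, we have $0 \leq n \leq n^x(s)$, which is the quantitative fact driving everything. I would first record that $\mathcal{T}'^x_d$ overwrites clock $x$ regardless of its prior value, so $\mathcal{T}'^x_{d} \circ \mathcal{T}'^x_{d'} = \mathcal{T}'^x_{d}$; this lets single-unit retimings be composed freely. I would also note up front \emph{why} both conjuncts are bundled: the intended use is to retime the clocks one at a time down to their mapped values, and the invariant-preservation conjunct is exactly what re-establishes the hypothesis of the behavior-preservation conjunct before the next clock is handled, so that the fully retimed state (every clock at its mapped value, i.e. the refinement image) is reached.

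For behavior preservation I would argue by finite downward induction on $k \in \{0,\dots,n\}$ that $\mathcal{T}'^x_{s.x-k}(\sigma) \in N_{S'}$ for every $\sigma \in N_{S'}$ with $\sigma_0 = s$. The base case $k=0$ is the hypothesis. For the step, the inductive hypothesis yields a behavior in $N_{S'}$ whose first state is $T'^x_{s.x-k}(s)$; since $k < n \leq n^x(s)$, the minimality built into the definition of $n^x$ (the same reasoning used in the proof of \cref{lemma-inv}) states that translating \emph{any} such behavior down by one further unit stays in $N_{S'}$, giving $\mathcal{T}'^x_{s.x-k-1}(\sigma) \in N_{S'}$ after using the composition identity. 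At $k=n$ this is $\mathcal{T}'^x_{m}(\sigma) \in N_{S'}$. A by-product worth isolating is \emph{guard stability of clock $x$ on $[m,s.x]$}: no internal step of $\sigma$ changes its enabledness as clock $x$ ranges over this interval, a fact I expect to reuse.

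For invariant preservation, set $s' = T'^x_m(s)$ and verify $Inv_{S'}(s')$ clock by clock. For the clock $x$ itself it is immediate: $s'.x = m = s'.mappedClock^x$ and $n^x(s') \geq 0$, so the component holds. For $y \neq x$ we have $s'.y = s.y$ and $s'.mappedClock^y = s.mappedClock^y$, and $Inv_{S'}(s)$ already gives $s.mappedClock^y \geq s.y - n^y(s)$; hence it suffices to prove $n^y(s') \geq n^y(s)$, i.e. that lowering clock $x$ into its safe region does not move the first newly-possible point of clock $y$ closer to $s.y$. I would establish this by showing that the newly-possible behaviors witnessing $n^y$ coincide at $s$ and at $s'$: since $s$ and $s'$ differ only in clock $x$ (and $mappedClock^x$), the enabledness of every internal step along such a behavior should be the same whether clock $x$ equals $s.x$ or $m$, so a value $j$ witnesses $n^y$ at $s'$ iff it does at $s$.

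The main obstacle is exactly this comparison $n^y(s') \geq n^y(s)$. Guard stability was derived for behaviors starting at $s$, where clock $y$ sits at $s.y$, whereas the witnesses for $n^y$ involve clock $y$ translated to $s.y - j$; I must therefore argue that moving clock $x$ within $[m,s.x]$ is guard-neutral \emph{uniformly in the value of clock $y$}. This reduces to the requirement that each internal step's enabling condition depends on each clock through an \emph{independent} bound rather than through a cross-clock coupling — precisely the structure of timed-automaton guards, and consistent with \cref{assumption-b-independent} that $B^x$ is insensitive to $x$. Under that factorization the two witness sets coincide and $n^y(s') = n^y(s)$ follows; making the factorization rigorous, instead of invoking it informally from the timed-automaton analogy, is where I expect the real difficulty to lie, while the rest is routine bookkeeping with the definitions of $T'^x_d$, $\mathcal{T}'^x_d$, and $n^x$.
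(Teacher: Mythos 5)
Your treatment of the first conjunct is essentially the paper's own proof: the bound $n = s.x - s.mappedClock^x \leq n^x(s)$ extracted from $Inv_{S'}(s)$, the downward induction on the translation amount driven by the minimality built into the definition of $n^x$, and the composition identity $\mathcal{T}'^x_{d} \circ \mathcal{T}'^x_{d'} = \mathcal{T}'^x_{d}$ all appear there (the paper phrases the induction as its step 1-2 and then instantiates $n = s.x - s.mappedClock^x - 1$). Your argument for the $x$-component of the second conjunct, namely $s'.x = s'.mappedClock^x$ together with $n^x(s') \geq 0$, is also exactly the paper's step 2.

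The gap is in the components of $Inv_{S'}(T'^x_{s.mappedClock^x}(s))$ for clocks $y \neq x$. You correctly reduce them to $n^y(s') \geq n^y(s)$, but your discharge of that inequality rests on an unproven ``factorization'' of guards into independent per-clock bounds, imported by analogy from timed automata. Nothing in the paper's definition of $Spec_S$ or $NextI_S$ guarantees that enabledness cannot couple two clocks, and \cref{assumption-b-independent} constrains only the time bounds $B^x$, not the guards of $NextI$. Without such a property, lowering clock $x$ to $s.mappedClock^x$ could admit behaviors at $s'$ that have no counterpart at $s$ (the first conjunct gives only one direction of the correspondence), and one of these could witness a strictly smaller $n^y(s')$; so as written the proposal does not establish the second conjunct in full. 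It is worth saying that the paper's own proof is no better on this point: its step 2 proves only the invariant component for the translated clock $x$ and is silent on $y \neq x$, even though the full invariant at the translated state is what the proof of \cref{theo-spec-impl} needs in order to iterate the translation over all clocks consecutively. You have surfaced a real obligation that the paper elides; closing it requires either proving guard independence for the concrete specification or adding it to the stated assumptions on $NextI_S$.
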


\begin{proof}
\pf\ Let $s \in \Sigma'$. We assume that $Inv_{S'}(s)$ holds.
\step{1}{$\A x \in \mathcal{X} : \A \sigma \in N_{S'} : \sigma_0 = s \Rightarrow \mathcal{T}'^x_{s.mappedClock^x}(\sigma) \in N_{S'}$}
  \begin{proof}
  \pf\ Let $x \in \mathcal{X}$ be arbitrary but fixed.
  \step{1-1}{$n^x(s) > s.x - s.mappedClock^x - 1$}
    \begin{proof}
    \pf\ By assumption, $Inv_{S'}(s)$ holds, i.e. $s.mappedClock^x \geq s.x - n^x(s)$.
    By rearranging, it follows that $n^x(s) \geq s.x - s.mappedClock^x$.
    Because time is discrete and increases in integer steps, we can replace the $\geq$ by $>$ and it follows that $n^x(s) > s.x - s.mappedClock^x - 1$
    \end{proof}
  \step{1-2}{$\A n \in \mathbb{N}_0 : n < n^x(s) \Rightarrow \A \sigma \in N_{S'} : \sigma_0 = s \Rightarrow \mathcal{T}'^x_{s.x - n -1}(\sigma) \in N_{S'}$}
    \begin{proof}
    \pf\ This follows from the definition of $n^x(s)$.
    By definition, $n^x(s) = min(\{ n \in \mathbb{N}_0 : \E \sigma \in N_{S'} : \sigma_0 = T'^x_{s.x - n}(s) \land \mathcal{T}'^x_{s.x - n -1}(\sigma) \notin N_{S'} \} )$.
    In words, $n^x(s)$ is defined as the smallest value by that the value of clock $x$ in state $s$ must be decreased so that a valid behavior exists that starts in state $T'^x_{s.x - n^x(s)}(s)$ but the behavior $\mathcal{T}'^x_{s.x - n^x(s)-1}(\sigma)$ in which $x$ is one step of time lower is not a valid behavior.
    Therefore, for all values $n$ smaller than $n^x(s)$, it holds that each behavior that starts in state $s$ can be translated to a valid behavior where the clock $x$ is set to $s.x - n$.
    Formally, this can be expressed as:
    \step{1-2-1}{$\A n \in \mathbb{N}_0 : n < n^x(s) \Rightarrow \A \sigma \in N_{S'} : \sigma_0 = T'^x_{s.x - n}(s) \Rightarrow \mathcal{T}'^x_{s.x - n -1}(\sigma) \in N_{S'}$}
      \begin{proof}
      \pf\ From the definition of $n^x(s)$ follows by definition of $min$ that $\A n \in \mathbb{N}_0 : n < n^x(s) \Rightarrow \neg (\E \sigma \in N_{S'} : \sigma_0 = T'^x_{s.x - n}(s) \land \mathcal{T}'^x_{s.x - n -1}(\sigma) \notin N_{S'})$.

      Moving the negation inwards: $\A n \in \mathbb{N}_0 : n < n^x(s) \Rightarrow \A \sigma \in N_{S'} : \neg (\sigma_0 = T'^x_{s.x - n}(s) \land \mathcal{T}'^x_{s.x - n -1}(\sigma) \notin N_{S'})$

      Applying the equivalence $\neg (a \land \neg b) = a \Rightarrow b$ from boolean algebra: $\A n \in \mathbb{N}_0 : n < n^x(s) \Rightarrow \A \sigma \in N_{S'} : \sigma_0 = T'^x_{s.x - n}(s) \Rightarrow \mathcal{T}'^x_{s.x - n -1}(\sigma) \in N_{S'}$
      \end{proof}
    Informally speaking, \stepref{1-2-1} states that for all $n < n^x(s)$ all behaviors starting in a state in which the clock $x$ is set to $s.x - n$ can be translated to a valid behavior in which the clock $x$ is set to $s.x -n -1$.
    However, the statement to prove requires that all behaviors starting in a state in which the clock $x$ is unchanged, i.e.,  set to $s.x - 0$, can be translated to a valid behavior in which the clock $x$ is set to $s.x -n -1$.
    We use induction over $n \in \mathbb{N}_0$, to show that all the steps reducing the clock $x$ by a single time step can be combined.

    \qedstep
      \begin{proof}
      \step{1-2-2}{If $n = 0$, then $n < n^x(s) \Rightarrow \A \sigma \in N_{S'} : \sigma_0 = s \Rightarrow \mathcal{T}'^x_{s.x - n -1}(\sigma) \in N_{S'}$}
        \begin{proof}
        \pf\ Follows from \stepref{1-2-1} by inserting $n = 0$.
        \end{proof}
      Let $n \in \mathbb{N}_0$ be arbitrary but fixed.
      \step{1-2-3}{\assume{$n < n^x(s) \Rightarrow \A \sigma \in N_{S'} : \sigma_0 = T'^x_{s.x}(s) \Rightarrow \mathcal{T}'^x_{s.x - n -1}(\sigma) \in N_{S'}$}
                    \prove{$n + 1 < n^x(s) \Rightarrow \A \sigma \in N_{S'} : \sigma_0 = T'^x_{s.x}(s) \Rightarrow \mathcal{T}'^x_{s.x - (n+1) - 1}(\sigma) \in N_{S'}$}}
        \begin{proof}
        \pf\ 
        \step{1-2-3-1}{$n + 1 < n^x(s) \Rightarrow \A \sigma \in N_{S'} : \sigma_0 = T'^x_{s.x}(s) \Rightarrow \mathcal{T}'^x_{s.x - (n+1)}(\sigma) \in N_{S'}$}
          \begin{proof}
          It holds that $n + 1 < n^x(s) \Rightarrow n < n^x(s)$ and, by writing $s.x - n -1$ as $s.x - (n +1)$, the statement follows from the assumption.
          \end{proof}
        \step{1-2-3-2}{$n+1 < n^x(s) \Rightarrow \A \tau \in N_{S'} : \tau_0 = T'^x_{s.x - (n+1)}(s) \Rightarrow \mathcal{T}'^x_{s.x - (n+1) -1}(\tau) \in N_{S'}$}
          \begin{proof}
          \pf\ By \stepref{1-2-1} and replacing $n$ by $n+1$.
          \end{proof}
        \step{1-2-3-3}{$n + 1 < n^x(s) \Rightarrow \A \sigma \in N_{S'} : \sigma_0 = T'^x_{s.x}(s) \Rightarrow \mathcal{T}'^x_{s.x - (n+1)}(\sigma) \in N_{S'} \Rightarrow \mathcal{T}'^x_{s.x - (n+1) -1}(\mathcal{T}'^x_{s.x - (n+1)}(\sigma)) \in N_{S'}$}
          \begin{proof}
          \pf\ 
          For all $\sigma \in N_{S'}$ with $\sigma_0 = s$ and all $n \in \mathbb{N}_0$ it holds by definition of $\mathcal{T}'^x$ that the first state of $\mathcal{T}'^x_{s.x - (n+1)}(\sigma) \in N_{S'}$ is $T'^x_{s.x - (n+1)}(s)$.
          Thus, we can apply \stepref{1-2-1} with $n$ set to $n+1$ and the statement follows.
          \end{proof}
        \qedstep
          \begin{proof}
          Because $\mathcal{T}'^x_{s.x - (n+1) -1}(\mathcal{T}'^x_{s.x - (n+1)}(\sigma)) = \mathcal{T}'^x_{s.x - (n+1) -1}(\sigma)$, it follows from \stepref{1-2-3-3} that $n + 1 < n^x(s) \Rightarrow \A \sigma \in N_{S'} : \sigma_0 = T'^x_{s.x}(s) \Rightarrow \mathcal{T}'^x_{s.x - (n+1) - 1}(\sigma) \in N_{S'}$.
          \end{proof}
        \end{proof}
      \end{proof}
    \end{proof}
  \qedstep
    \begin{proof}
    \step{1-3-1}{$s.x - s.mappedClock^x - 1 \geq 0$}
      \begin{proof}
      \pf\ If $s.x = s.mappedClock^x$, then \cref{lemma-inv-applied} is trivial. Thus, we assume that $s.x \neq s.mappedClock^x$.
      By definition of $AdvanceTime_{S'}$, it holds that $s.x > s.mappedClock^x$.
      If follows that $s.x - s.mappedClock^x > 0$ and, because time increases in discrete steps, it holds that $s.x - s.mappedClock^x \geq 1$.
      \end{proof}
    From \stepref{1-1} and \stepref{1-3-1} it follows that we can insert $s.x - s.mappedClock^x - 1$ for $n$ in \stepref{1-2}.
    It holds that $\A \sigma \in N_{S'} : \sigma_0 = s \Rightarrow \mathcal{T}'^x_{s.x - (s.x - s.mappedClock^x - 1) -1}(\sigma) \in N_{S'}$.
    This is equal to $\A \sigma \in N_{S'} : \sigma_0 = s \Rightarrow \mathcal{T}'^x_{s.mappedClock^x}(\sigma) \in N_{S'}$.
    \end{proof}
  \end{proof}

To simplify notation, we define $w(x, s) = T'^x_{s.mappedClock^x}(s)$ and we omit the parameters, i.e. we write $w$ instead of $w(x,s)$.
\step{2}{$\A x \in \mathcal{X} : w.mappedClock^x \geq w.x - n^x(w)$}
  \begin{proof}
  \pf\ Let $x \in \mathcal{X}$ be an arbitrary but fixed.
  \step{2-1}{$n^x(w) \geq 0$}
    \begin{proof}
    \pf\ By definition of $n^x$, $n^x$ maps to values in $\mathbb{N}_0$ or $\infty$.
    \end{proof}
  \step{2-2}{$w.mappedClock^x = w.x$}
    \begin{proof}
    \pf\
    \step{2-2-1}{$w.x = s.mappedClock^x$}
      \begin{proof}
      \pf\ By definition of $T'$, $T'$ sets the value of $w.x$ to $s.mappedClock^x$.
      \end{proof}
    \step{2-2-2}{$w.mappedClock^x = s.mappedClock^x$}
      \begin{proof}
      \pf\ By definition of $T'$, $T'$ sets the value of $w.mappedClock^x$ to the respective value in $s$, i.e. $s.mappedClock^x$.
      \end{proof}
    \qedstep
      \begin{proof}
      \pf\ By \stepref{2-2-1} and \stepref{2-2-2}
      \end{proof}
    \end{proof}
  \qedstep
    \begin{proof}
    \pf\ From \stepref{2-1} follows that $0 \geq - n^x(w)$.
    Adding $w.x$ on both sides of the equation results in $w.x \geq w.x - n^x(w)$.
    The statement follows by replacing $w.x$ by $w.mappedClock^x$ because of \stepref{2-2}.
    \end{proof}
  \end{proof}
\end{proof}

Goal: We want to show that the specification $S'$ implements the time-optimized specification $\hat{S}$.
This means we show that the following theorem holds:

\begin{theorem}
\label{theo-spec-impl}
$Spec_{S'} \Rightarrow \overline{Spec_{\hat{S}}}$
\end{theorem}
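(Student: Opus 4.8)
The plan is to exhibit the bar-substitution $\overline{\cdot}$ (which replaces every clock $x\in\mathcal{X}$ by the auxiliary variable $mappedClock^x$) as a refinement mapping from $Spec_{S'}$ to $Spec_{\hat S}$, and to discharge the three standard TLA\textsuperscript{+} proof obligations: the initial predicate, the next-state relation under a suitable invariant, and the fairness conditions. The initial predicate is immediate and already handled by \cref{lemma-init}, and throughout the step argument I would carry the invariant $\square Inv_{S'}$ from \cref{lemma-inv} so that every reachable state satisfies $mappedClock^x \geq x - n^x(s)$. Concretely, it suffices to prove $Init_{S'} \Rightarrow \overline{Init_{\hat S}}$, the step simulation $Inv_{S'} \land Inv_{S'}' \land [Next_{S'}]_{v_{S'}} \Rightarrow [\overline{Next_{\hat S}}]_{\overline{v_{\hat S}}}$, and $Liveness_{S'} \Rightarrow \overline{Liveness_{\hat S}}$; these combine by the usual refinement-mapping rule to give $Spec_{S'} \Rightarrow \overline{Spec_{\hat S}}$.

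For the step simulation I would decompose $[Next_{S'}]_{v_{S'}}$ into a stuttering step, a $NextI_{S'}$ step, and an $AdvanceTime_{S'}$ step. A stuttering step leaves every variable fixed, hence leaves $\overline{v_{\hat S}}$ fixed, and maps to a stuttering step of $\hat S$. For a $NextI_{S'}$ step from $s$ to $t$ I would form the length-two behavior $\langle s, t, \ldots\rangle \in N_{S'}$ and apply \cref{lemma-inv-applied} once for each clock: the first application translates clock $x_1$ to $s.mappedClock^{x_1}$, keeps the behavior in $N_{S'}$, and preserves $Inv_{S'}$; since translating $x_1$ does not alter $mappedClock^{x_2}$, I can iterate over all clocks and conclude that $\langle w(s), w(t), \ldots\rangle \in N_{S'}$, where $w(\cdot)$ sets every clock to its $mappedClock$ value. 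Thus $w(s) \to w(t)$ is a $NextI_{S'}$ step, which is exactly $\overline{NextI_{\hat S}}$ evaluated on $s \to t$ (using $NextI_{\hat S} = NextI_{S}$ and the fact that $NextI_{S'}$ leaves the $mappedClock^x$ unchanged).

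The $AdvanceTime_{S'}$ case is where the real work lies. By construction of $AdvanceTime_{S'}$ we have $t.mappedClock^x = \max(\{s.mappedClock^x\} \cup \{n \in relETP^x(s) : n \le t.x\})$, so for each clock the mapped value either stutters or jumps to a point of $relETP^x(s)$; if all clocks stutter the mapped step is a stuttering step of $\hat S$, and otherwise I must check the conjuncts of $\overline{AdvanceTime_{\hat S}}$ from \cref{def_advance_time_hat}. The monotone time-bound conjunct follows from \cref{lemma-advance-time} together with \cref{assumption-b-independent}. The delicate point is the enabling-point conjunct $t.mappedClock^x \in relETP^x(\cdot)$: $AdvanceTime_{S'}$ places this value in $relETP^x(s)$ evaluated at the \emph{actual} state, whereas $\overline{AdvanceTime_{\hat S}}$ asks for membership in $relETP^x$ evaluated at the \emph{mapped} state $w(s)$. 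I would close this gap by arguing that $relETP^x$ depends on state only through the non-clock variables and clock $x$ (which $ETP^x$ overwrites), so that, combined with $\square Inv_{S'}$, the membership transfers from $s$ to $w(s)$; this clock-separability is the structural fact that the construction of $relETP^x$ and \cref{assumption-b-independent} are designed to supply.

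Finally, because $Liveness_{\hat S} = Liveness_S = Liveness_{S'}$ is a conjunction of $WF$ and $SF$ conditions on subactions of $Next$, fairness is preserved once the step simulation above shows that each $NextI$ subaction maps stepwise to its bar-image and that the mapping preserves enabledness via the same $N_{S'}$-translation. The main obstacles, then, are (i) reconciling $relETP^x$ and the time bounds evaluated at the actual state with those evaluated at the simultaneously translated state in the $AdvanceTime$ case, for which the invariant of \cref{lemma-inv} and the clock-independence of \cref{assumption-b-independent} are essential, and (ii) checking that enabledness of the fairness subactions is preserved under the mapping; the first of these is the harder and more technical of the two.
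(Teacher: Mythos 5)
Your proposal follows essentially the same route as the paper's own proof: the bar substitution as the refinement mapping, the initial condition discharged via \cref{lemma-init}, the step simulation carried out under the invariant of \cref{lemma-inv} with the same case split into stuttering steps, $NextI_{S'}$ steps (handled by iterating \cref{lemma-inv-applied} over the clocks and composing the per-clock translations to get $\astep{\overline{s}}{\overline{t}}{NextI_{S'}}$), and $AdvanceTime_{S'}$ steps (checking the conjuncts of \cref{def_advance_time_hat}, using \cref{lemma-advance-time} and \cref{assumption-b-independent} for the time-bound conjunct), and liveness via $Liveness_{S'} = Liveness_{\hat{S}}$. The one point where you are more explicit than the paper --- whether $relETP^x$ evaluated at the actual state agrees with its value at the mapped state --- is handled only implicitly in the paper's step establishing $\overline{t.y \in relETP^y(s)}$, and your proposed resolution (clock-independence of $relETP^x$) is exactly what that step tacitly relies on.
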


\begin{proof}
In the proof, we use the shorthand $L_{S'}$ for $Liveness_{S'}$ and $L_{\hat{S}}$ for $Liveness_{\hat{S}}$.
\step{1}{$Init_{S'} \land \square [Next_{S'}]_{v_{S'}} \land L_{S'} \Rightarrow \overline{\square [Next_{\hat{S}}]_{v_{\hat{S}}} \land L_{\hat{S}}}$}
  \begin{proof}
  \step{1-1}{$Inv_{S'} \land [Next_{S'}]_{v_{S'}} \Rightarrow \overline{[Next_{\hat{S}}]_{v_{\hat{S}}}}$}
    \begin{proof}
    \step{1-1-1}{$Inv_{S'} \land (v_{S'}' = v_{S'}) \Rightarrow \overline{Next_{\hat{S}}} \lor (\overline{v_{\hat{S}}'} = \overline{v_{\hat{S}}})$}
      \begin{proof}
      \pf\ Because $Inv_{S'} \land (v_{S'}' = v_{S'}) \Rightarrow (v_{S'}' = v_{S'}) \Rightarrow (\overline{v_{\hat{S}}'} = \overline{v_{\hat{S}}})$.
      \end{proof}
    \step{1-1-2}{$Inv_{S'} \land Next_{S'} \Rightarrow \overline{Next_{\hat{S}}} \lor (\overline{v_{\hat{S}}'} = \overline{v_{\hat{S}}})$}
      \begin{proof}
      \step{1-1-2-1}{$Inv_{S'} \land AdvanceTime_{S'} \Rightarrow \overline{Next_{\hat{S}}} \lor (\overline{v_{\hat{S}}'} = \overline{v_{\hat{S}}})$}
        \begin{proof}
        \step{4-1}{\suffices{\assume{$Inv_{S'} \land AdvanceTime_{S'}$}
                             \prove{$\overline{Next_{\hat{S}}} \lor (\overline{v_{\hat{S}}'} = \overline{v_{\hat{S}}})$}}}
          \begin{proof}
          \pf\ Obvious.
          \end{proof}
        \step{4-2}{\case{$\A x \in \mathcal{X} : mappedClock'^{x} = mappedClock^x$}}
          \begin{proof}
          \pf\ In this case, $AdvanceTime_{S'}$ does not change the value of the variables $mappedClock^x$.
          Because, by definition, $AdvanceTime_{S'}$ leaves all variables except $mappedClock^x$ for all $x \in \mathcal{X}$ unchanged, % (more formally: $AdvanceTime \Rightarrow v' = v$ for all variables $v$ except $mappedTime$ and $time$)
          only the clocks are changed.
          As the values of the clocks do not affect the value of $\overline{v_{\hat{S}}}$, it follows that $\overline{v_{\hat{S}}'} = \overline{v_{\hat{S}}}$.
          \end{proof}
        \step{4-3}{\case{$\E x \in \mathcal{X} : mappedClock'^x > mappedClock^x$}}
          \begin{proof}
          \pf\ We refer to the step's starting state as $s$ and to the ending state as $t$.
          Let $\mathcal{Y} \subseteq \mathcal{X}$ be the set of clocks $y$ for which $mappedClock'^y > mappedClock^y$.
          We show that the step from state $\overline{s}$ to state $\overline{t}$ is an $AdvanceTime_{\hat{S}}$ step.
          To show that, we show that the properties of $AdvanceTime_{\hat{S}}$ defined in \cref{def_advance_time_hat} are fulfilled.
          \step{4-3-1}{$\A y \in \mathcal{Y} : \overline{t.y > s.y}$}
            \begin{proof}
            \pf\ Because $mappedClock'^y > mappedClock^y$ which can also be written as $t.mappedClock^y > s.mappedClock^y$ and $t.mappedClock^y = \overline{t.y}$.
            \end{proof}
          \step{4-3-1b}{$\A x \in \mathcal{X} \setminus \mathcal{Y} : \overline{t.x = s.x}$}
            \begin{proof}
            \pf\ By definition of $AdvanceTime_{S'}$, the value of $mappedClock^x$ cannot decrease.
            Because $x \notin \mathcal{Y}$, it follows that $t.mappedClock^x = s.mappedClock^x$ which equals $\overline{t.x = s.x}$.
            \end{proof}
          \step{4-3-2}{$\A y \in \mathcal{Y} : \overline{t.y \in relETP^y(s)}$}
            \begin{proof}
            \pf\ By definition of $AdvanceTime_{S'}$, it holds that $mappedClock'^y = max( \{mappedClock^y\} \cup \{t \in relETP^y : t \leq y' \})$.
            Because $mappedClock'^y > mappedClock^y$, it follows that $mappedClock'^y = max( \{t \in relETP^y : t \leq y' \})$.
            We can also write this as $t.mappedClock^y = max( \{t \in relETP^y(s) : t \leq t.y \}) \in relETP^y(s)$.
            \end{proof}
          \step{4-3-3}{$\overline{\A v \in v_{\hat{S}} : t.v = s.v}$}
            \begin{proof}
            \pf\ For all $v \in v_{\hat{S}}$ it holds that $v' = v$ and, because $\overline{v} = v$ it holds that $\overline{v' = v}$.
            \end{proof}
          \step{4-3-4}{$\overline{\A x \in \mathcal{X} : \A b \in B^x(s) : b \geq s.x \Rightarrow t.x \leq b}$}
            \begin{proof}
            \pf\
            Let $x \in \mathcal{X}$ be arbitrary but fixed.
            Let $b \in \overline{B^x(s)}$ be arbitrary but fixed.

            In this proof, we assume that $\overline{b \geq s.x}$ and we show that it follows that $\overline{t.x \leq b}$.
            The assumption can also be written as $b \geq s.mappedClock^x$.

            \step{4-3-4-2-1}{$b \geq s.x$}
              \begin{proof}
              \pf\ We prove this by contradiction.
              Assume that $b < s.x$.
              Then $(b+1) \leq s.x$ and $s.mappedClock^x \geq (b+1) > b$ because $(b+1) \in relETP^x(s)$.
              This is a contradiction to $b \geq s.mappedClock^x$.
              \end{proof}
            
            By \cref{assumption-b-independent}, it holds that $\A d \in \mathbb{N}_0, s \in \hat{\Sigma} : B^x(s) = B^x(\hat{T}^x_d(s))$ which means that $B^x(s) = \overline{B^x(s)}$.
            By $AdvanceTime_{S'}$, it holds for all $b' \in B^x(s)$ that $s.x \leq b' \Rightarrow t.x \leq b'$.
            Because $B^x(s) = \overline{B^x(s)}$, it also holds for $b$ that $s.x \leq b \Rightarrow t.x \leq b$.
            
            From \stepref{4-3-4-2-1} it follows that $t.x \leq b$.

            By definition, it holds that $t.mappedClock^x \leq t.x$ and, therefore,  $t.mappedClock^x \leq b$.

            This can also be written as $\overline{t.x \leq b}$
            \end{proof}
          \qedstep
            \begin{proof}
            By \stepref{4-3-1}, \stepref{4-3-2}, and \stepref{4-3-3} it holds that $\overline{AdvanceTime_{\hat{S}}}$.
            \end{proof}
          \end{proof}
        \qedstep
          \begin{proof}
          \pf\ Because, by definition of $AdvanceTime_{S'}$, it holds that $AdvanceTime_{S'} \Rightarrow mappedClock'^x \geq mappedClock^x$, the steps \stepref{4-2} and \stepref{4-3} cover all possible cases. Thereby and by \stepref{4-1} follows the statement to prove.
          \end{proof}
        \end{proof}
      \step{1-1-2-2}{$Inv_{S'} \land NextI_{S'} \Rightarrow \overline{Next_{\hat{S}}} \lor (\overline{v_{\hat{S}}'} = \overline{v_{\hat{S}}})$}
        \begin{proof}
        \pf\ We refer to the step's starting state as $s$ and to the ending state as $t$.
        Informally speaking, we show that every $NextI$ step that is allowed in specification $S'$ is also allowed in specification $\hat{S}$.
        \step{5-1}{\suffices{\assume{$Inv_{S'} \land NextI_{S'}$}
                             \prove{$\overline{Next_{\hat{S}}} \lor (\overline{v_{\hat{S}}'} = \overline{v_{\hat{S}}})$}}}
          \begin{proof}
          \pf\ Obvious.
          \end{proof}
        \step{5-5}{$\A x \in \mathcal{X} : \A \sigma \in N_{S'} : \sigma_0 = s \Rightarrow \mathcal{T}'^{x}_{s.mappedClock^x}(\sigma) \in N_{S'}$}
          \begin{proof}
          \pf\ By \cref{lemma-inv-applied} because $s$ is a state of specification $Spec_{S'}$.
          \end{proof}
        \step{5-6}{$\A \sigma \in N_{S'} : \sigma_0 = s \Rightarrow \bigcomp_{x \in \mathcal{X}}(\mathcal{T}'^{x}_{s.mappedClock^x})(\sigma) \in N_{S'}$}
          \begin{proof}
          \pf\ By \cref{lemma-inv-applied}, each behavior in $N_{S'}$ that starts in a state of specification $Spec_{S'}$, is mapped by $\mathcal{T}'^x$ to a behavior in $N_{S'}$ that starts in a state of specification $Spec_{S'}$.
          Therefore, executing the mappings for each clock consecutively, results in a behavior in $N_{S'}$ that starts in a state of specification $Spec_{S'}$.
          \end{proof}
        \step{5-7}{$\astep{\bigcomp_{x \in \mathcal{X}}(T'^x_{s.mappedClock^x})(s)}{\bigcomp_{x \in \mathcal{X}}(T'^x_{s.mappedClock^x})(t)}{NextI_{S'}}$}
          \begin{proof}
          \pf\ By \stepref{5-6} and because a behavior $\sigma$ that starts with the state $s$ followed by state $t$ is in $N_{S'}$.
          \end{proof}
        \step{5-8}{$\astep{\overline{s}}{\overline{t}}{NextI_{S'}}$}
          \begin{proof}
          \pf\ By \stepref{5-7} and because, by definition, $\overline{s} = \bigcomp_{x \in \mathcal{X}}(T'^x_{s.mappedClock^x})(s)$.
          \end{proof}
        \step{5-9}{$\overline{NextI_{S'}}$}
          \begin{proof}
          \pf\ By \stepref{5-8} written differently.
          \end{proof}
        \qedstep
          \begin{proof}
          \pf\ Because $NextI_{S'} = NextI_{\hat{S}}$ and because $Next_{\hat{S}} = NextI_{\hat{S}} \lor AdvanceTime_{\hat{S}}$, it follows from \stepref{5-9} that $\overline{Next_{\hat{S}}}$ and also that $\overline{Next_{\hat{S}}} \lor (\overline{v_{\hat{S}}'} = \overline{v_{\hat{S}}})$.
          \end{proof}
        \end{proof}
      \qedstep
        \begin{proof}
        \pf\ By \stepref{1-1-2-1} and \stepref{1-1-2-2} and because $Next_{S'} = AdvanceTime_{S'} \lor NextI_{S'}$.
        \end{proof}
      \end{proof}
    \qedstep
      \begin{proof}
      \pf\ By \stepref{1-1-1} and \stepref{1-1-2} and because $[Next_{S'}]_{v_{S'}} = Next_{S'} \lor (v_{S'}' = v_{S'})$.
      \end{proof}
    \end{proof}
  \qedstep
    \begin{proof}
    \pf\ By \stepref{1-1} and \cref{lemma-inv} and because $L_{S'} = L_{\hat{S}}$.
    \end{proof}
  \end{proof}
\step{2}{$Init_{S'} \land \square [Next_{S'}]_{v_{S'}} \land L_{S'} \Rightarrow \overline{Init_{\hat{S}} \land \square [Next_{\hat{S}}]_{v_{\hat{S}}} \land L_{\hat{S}}}$}
  \begin{proof}
  \pf\ By \stepref{1} and \cref{lemma-init}.
  \end{proof}
\qedstep
  \begin{proof}
  \pf\ By definitions of $Spec_{S'}$ and $Spec_{\hat{S}}$.
  \end{proof}
\end{proof}

\subsection{Proof for Application of Generalized Time Skip Theorem to Formalization of Lightning}
\label{sec-appendix-time-skip-application-I-II}

To prove that \cref{theorem-general-time-skip} can be applied to the formalization of Lightning, we need to prove that, for each clock $x \in \mathcal{X}$, the set $relETP^x$ defined in the specification meets the requirements of \cref{assumptions-relETP} and that the time bounds used in the specification meet the requirements of \cref{assumption-b-independent}. 

We start by proving that the time bounds specified in SpecificationI.tla meet \cref{assumption-b-independent}.
\begin{proof}
\pf\ $B^{LedgerTime}$ is defined by $TimeBounds$ of $PaymentChannelUser$.
As the variable $LedgerTime$ does not occur in the definition of $TimeBounds$, the definition of $TimeBounds$ is independent of $LedgerTime$ and, thus, \cref{assumption-b-independent} holds for $B^{LedgerTime}$.

Similarly, $B^{TxAge}$ is defined by $TxTimeBounds$ of $PaymentChannelUser$.
As the variable $TxAge$ does not occur in the definition of $TxTimeBounds$, the definition of $TxTimeBounds$ is independent of $TxAge$ and, thus, \cref{assumption-b-independent} holds for $B^{TxAge}$.

In conclusion, \cref{assumption-b-independent} holds for all clocks of the specification.
\end{proof}

We prove that $relETP^x$ meets \cref{assumptions-relETP}.
The set $relETP^{LedgerTime}$ is defined as $relETP$ in SpecificationII.

\begin{proof}
\pf\ 
First, we prove the second statement:
\step{1}{$\A x \in \mathcal{X}, s \in \Sigma : \{ b +1 \mid b \in B^x(s) \} \subseteq relETP^x(s)$}
  \begin{proof}
  \pf\
  For $LedgerTime$, the statement directly follows from the definition of $relETP$ in SpecificationII that defines $relETP$ as union of $\{t + 1 : t \in TimeBounds\}$ and another set.
  For the $TxAge$ clocks, the set of time bounds can only be the empty set or $\{TO\_SELF\_DELAY - 1\}$. 
  Because the set $relETP^{TxAge}$ contains $TO\_SELF\_DELAY$, it holds for all states $s \in \Sigma$ that $\{ b +1 \mid b \in B^{TxAge}(s) \} = \{TO\_SELF\_DELAY\} \subseteq relETP^{TxAge}(s)$.
  \end{proof}
\step{2}{$\A x \in \mathcal{X}, s \in \Sigma : ETP^x(s) \subseteq relETP^x(s)$}
  \begin{proof}
  \pf\ We have to show that for every $n \in ETP^x(s)$, it holds that $n \in relETP^x(s)$.
  This means that if there is a newly possible behavior at time $n$, then time $n$ must be in $relETP^x(s)$.

  Our proof strategy is based upon the following observation:
  For each $n \in ETP^x(s)$, there exists a newly possible behavior $\sigma$ for clock $x \in \mathcal{X}$ and time $n \in \mathbb{N}$.
  By definition of $ETP^x(s)$, the behavior $\sigma$ contains a step $\sstep{s}{t}$ that is not possible at time $n-1$, i.e., the step $\sstep{\hat{T}^x_{n-1}(s)}{\hat{T}^x_{n-1}(t)}$ is not allowed in the specification.
  In the step $\sstep{s}{t}$, the clock $x$ is unchanged because $\sigma \in N^x_{\hat{S}}$.
  Define a mapping $a$ from $ETP^x(s)$ to the subactions of the specification, that assigns to each $n \in ETP^x(s)$ a subaction $A$ so that $\astep{s}{t}{A}$ but not $\astep{\hat{T}^x_{n-1}(s)}{\hat{T}^x_{n-1}(t)}{A}$, i.e., a step of action $a(n)$ is not possible if clock $x$ is set to $n-1$ and becomes possible at time $n$.
  For the step $\sstep{\hat{T}^x_{n-1}(s)}{\hat{T}^x_{n-1}(t)}$, not to be possible, there must be a condition that contains the clock $x$ because the value of the clock $x$ is the only difference between the states $\hat{T}^x_{n-1}(s)$ and $s$.
  Therefore, to find all $n \in ETP^x(s)$, we inspect all subactions of the $NextI$ action of SpecificationI and select the actions that depend on the value of the clock $x$.
  We need to prove for each subaction of $NextI$ that all points in time at which a step of this action becomes possible, are included in $relETP^x$.
  Therefore, our strategy is as follows:
  We go through all subactions $A$ of $NextI$, we find the conditions under which an $A$ step exists that is enabled at time $n$ but not at time $n-1$, and verify that in all states $s$ from which a state that meets these conditions can be reached by $NextI$ steps it holds that $n \in relETP^x(s)$.

  Actions of module PaymentChannelUser:
  There are nine actions in the module that depend on the value of $LedgerTime$ % SendSignedCommitment, ReceiveSignedCommitment, ReceiveRevocationKey, CloseChannel, Punish, Cheat, RedeemHTLCAfterClose, NoteThatHTLCFulfilledOnChainByOtherUser, NoteThatHTLCFulfilledOnChainInOtherChannel
  and three actions that depend on $TxAge$. % Punish, RedeemHTLCAfterClose, Cheat

  \step{2-1}{$\A s \in \Sigma : \A n \in ETP^{LedgerTime}(s) : a(n) = \mathrm{SendSignedCommitment} \Rightarrow n \in relETP^{LedgerTime}(s)$}
    \begin{proof}
    The action SendSignedCommitment commits to new HTLCs by including them in the new commitment transaction that is sent. Outgoing HTLCs are included if they are in state \textsc{new} and if their timelock is greater than the current value of $LedgerTime$. For two consecutive points in time $n-1$ and $n$, a step of the action SendSignedCommitment is possible at time $n$ but not at time $n-1$ if a SendSignedCommitment step is allowed at time $n-1$ that adds an HTLC but the same HTLC could not be added at time $n$ because $n-1$ is smaller than the HTLC's timelock but $n$ is equal to the HTLC's timelock.
    It follows that a long as there is an HTLC that is in state \textsc{new}, the set $relETP^{LedgerTime}(s)$ must include the HTLC's timelock because at this point in time a new step might become possible that sends a commitment transaction without including this HTLC.
    This condition is fulfilled by the definition of TimelockRegions of the module PaymentChannelUser.

    Further, the set $relETP^{LedgerTime}(s)$ must include the timelock of each HTLC that might come into state \textsc{new}. The only way for an HTLC to reach the state \textsc{new} is to be created by the action AddAndSendOutgoingHTLC in the module HTLCUser. This action creates an HTLC for an outgoing payment with the payment's timelock.
    Thus, the set $relETP^{LedgerTime}(s)$ must include the timelock of each outgoing payment. % in the spec, the check is for payment.nextHop = OtherName in each instance of HTLCUser
    This condition is fulfilled by the definition of TimelockRegions of the module HTLCUser.
    
    An outgoing payment is created if an incoming HTLC is received that is to be forwarded. The timelock of the newly created payment is taken from the dataForNextHop field in the message that was prepared by the original sender of the payment.
    Thus, the set $relETP^{LedgerTime}(s)$ must include all the timelocks included in the dataForNextHop fields for payments that are already in process and, for multi-hop payments that are not yet in process, the set $relETP^{LedgerTime}(s)$ must include the timelocks as they will appear in the dataForNextHop field.
    This condition is fulfilled by the definition of TimelockRegions of the module HTLCUser.
    It is possible to predict these timestamps because they are calculated only based on a payment's timelock and the payment's route which are known already in the initial states of the specification.

    Given that all these values are included in the set $relETP^{LedgerTime}(s)$, the set $relETP^{LedgerTime}(s)$ contains all points in time at which a SendSignedCommitment step becomes possible because the action SendSignedCommitment depends on LedgerTime only for choosing the HTLCs to add.
    \end{proof}
  \step{2-2}{$\A s \in \Sigma : \A n \in ETP^{LedgerTime}(s) : a(n) = \mathrm{ReceiveSignedCommitment} \Rightarrow n \in relETP^{LedgerTime}(s)$}
    \begin{proof}
    Analogously to \stepref{2-1}, ReceiveSignedCommitment accepts a commitment transaction that is being received only if the incoming HTLCs that are committed have a timelock that is greater than LedgerTime.
    An incoming HTLC can be committed only if it is in state \textsc{new}.
    As described above, the set $relETP^{LedgerTime}(s)$ contains an HTLC's timelock already for all HTLCs that are or can come into state \textsc{new}.
    Under this condition, the set $relETP^{LedgerTime}(s)$ contains all points in time at which a ReceiveSignedCommitment step becomes possible because the action ReceiveSignedCommitment depends on LedgerTime only for deciding which HTLCs are allowed to be added in a commitment transaction being received.
    \end{proof}
  \step{2-3}{$\A s \in \Sigma : \A n \in ETP^{LedgerTime}(s) : a(n) = \mathrm{ReceiveRevocationKey} \Rightarrow n \in relETP^{LedgerTime}(s)$}
    \begin{proof}
    The action ReceiveRevocationKey depends on LedgerTime for defining which HTLCs are marked as \textsc{committed}.
    By definition of ReceiveRevocationKey, an incoming HTLC is only committed if it is in state \textsc{sent-commit} and has a timelock greater than LedgerTime.
    As described above, the set $relETP^{LedgerTime}(s)$ contains an HTLC's timelock already for all HTLCs that are or can come into state \textsc{new}. Additionally, the set $relETP^{LedgerTime}(s)$ must also contain an HTLC's timelock if the HTLC is in state \textsc{sent-commit} or in one of the states \textsc{recv-commit} and \textsc{pending-commit} which are preceding states of \textsc{sent-commit}.
    This condition is fulfilled by the definition of TimelockRegions of the module PaymentChannelUser.
    Under this condition, the set $relETP^{LedgerTime}(s)$ contains the timelocks of all HTLCs that are in state \textsc{sent-commit} or can come into this state.
    \end{proof}
  \step{2-4}{$\A s \in \Sigma : \A n \in ETP^{LedgerTime}(s) : a(n) = \mathrm{CloseChannel} \Rightarrow n \in relETP^{LedgerTime}(s)$}
    \begin{proof}
    The action CloseChannel chooses a transaction for closing based on LedgerTime.
    The channel can always be closed using the latest commitment transaction.
    The channel can also be closed using a pending new commitment transaction unless there is an HTLC that is in state \textsc{recv-commit} and has timed out, i.e., the HTLC has a timelock lower than or equal to the value of LedgerTime.
    A step of CloseChannel that closes using a pending new commitment transaction becomes possible if a timedout HTLC that is in state \textsc{recv-commit} is advanced to another state.
    Because the timelocks of all HTLCs that are in state \textsc{recv-commit} or any preceding state are included in the set $relETP^{LedgerTime}(s)$, the set $relETP^{LedgerTime}(s)$ contains all points in time at which a step of CloseChannel becomes possible.
    \end{proof}
  \step{2-5}{$\A s \in \Sigma : \A n \in ETP^{LedgerTime}(s) : a(n) \in \{\mathrm{Cheat}, \mathrm{Punish}, \mathrm{RedeemHTLCAfterClose}\} \Rightarrow n \in relETP^{LedgerTime}(s)$}
    \begin{proof}
    The actions Cheat, Punish, and RedeemHTLCAfterClose all use a formula that finds transactions that can be used to spend outputs of published transactions and one of the transactions found is published.
    This formula depends on $LedgerTime$ as well as $TxAge$ because outputs might be timelocked and only spendable after an absolute timelock or after a certain time after the transaction containing the output was published.
    If $LedgerTime$ reaches the absolute timelock of an output or $TxAge$ reaches the relative timelock of an output, there is a new transaction in the set of publishable transactions and, thus, a step publishing this new transaction becomes possible.
    Consequently, the set $relETP^{LedgerTime}(s)$ must contain absolute timelocks of outputs that might be spent by such a new transaction.
    This condition is fulfilled by the definition of TimelockRegions of the module PaymentChannelUser because TimelockRegions contains all absolute timelocks of outputs in the ledger.

    To account for the points in time at which an output of a transaction becomes spendable that has not been published yet, the set $relETP^{LedgerTime}(s)$ must also include the absolute timelocks of transactions that are stored in the users' Inventory variables.
    This condition is fulfilled by the definition of TimelockRegions of the module PaymentChannelUser.
    Such absolute timelocks in transactions in the users' Inventory variables are created based on the absolute timelocks of uncommitted HTLCs. As shown in the steps above, the timelocks of these HTLCs are included by the definition of TimelockRegions of the module PaymentChannelUser.
    \end{proof}
  \step{2-8}{$\A s \in \Sigma : \A n \in ETP^{LedgerTime}(s) : a(n) = \mathrm{NoteThatHTLCFulfilledOnChainByOtherUser} \Rightarrow n \in relETP^{LedgerTime}(s)$}
    \begin{proof}
    The action NoteThatHTLCFulfilledOnChain finds an HTLC that has been fulfilled on-chain, adds the preimage to the user's variables, and marks the HTLC as \textsc{persisted}.
    If the HTLC that has been fulfilled has timed out for longer than the grace period, i.e., LedgerTime is greater than the HTLC's timelock plus the duration of the grace period,
    the HTLC's preimage is added to the set LatePreimages.
    A step of NoteThatHTLCFulfilledOnChain in which a preimage is added to the set LatePreimages becomes possible if LedgerTime equals the HTLC's timelock $+ G + 1$ where $G$ is the duration of the grace period.
    Therefore, the set $relETP^{LedgerTime}(s)$ must contain for each HTLC that is in a state that can be followed by state \textsc{persisted} the value of the HTLC's timelock $+ G + 1$.
    This condition is fulfilled by the definition of TimelockRegions of the module PaymentChannelUser and HTLCUser because for every HTLC that is in a state that can be followed by the state \textsc{persisted} or that can be created, these sets contain the value of the HTLC's timelock $ + G + 1$.
    \end{proof}
  \step{2-9}{$\A s \in \Sigma : \A n \in ETP^{LedgerTime}(s) : a(n) = \mathrm{NoteThatHTLCFulfilledOnChainInOtherChannel} \Rightarrow n \in relETP^{LedgerTime}(s)$}
    \begin{proof}
    With respect to the use of $LedgerTime$, the definition of NoteThatHTLCFulfilledOnChainInOtherChannel equals the definition of NoteThatHTLCFulfilledOnChainByOtherUser and, thus, the conclusion follows analogously to \stepref{2-8}.
    \end{proof}
  \step{2-10}{$\A s \in \Sigma : \A n \in ETP^{TxAge}(s) : a(n) \in \{\mathrm{Cheat}, \mathrm{Punish}, \mathrm{RedeemHTLCAfterClose}\} \Rightarrow n \in relETP^{TxAge}(s)$}
    \begin{proof}
    The actions Cheat, Punish, and RedeemHTLCAfterClose all use a formula that finds transactions that can be used to spend outputs of published transactions and one of the transactions found is published.
    This formula depends on $LedgerTime$ as well as $TxAge$ because outputs might be timelocked and only spendable after an absolute timelock or after a certain time after the transaction containing the output was published.
    If $TxAge$ reaches the relative timelock of an output, there is a new transaction in the set of publishable transactions and, thus, a step publishing this new transaction becomes possible.
    Consequently, the set $relETP^{TxAge}(s)$ must contain the relative timelocks of outputs that might be spent by such a new transaction.
    Because all relative timelocks in the specification equal the constant TO\_SELF\_DELAY, this condition is fulfilled by the definition of AdvanceLedgerTime which assumes that $relETP^{TxAge}(s) = \{ TO\_SELF\_DELAY \}$.
    \end{proof}

  The module HTLCUser contains five actions that depend on LedgerTime and no actions that depend on TxAge:
  
  \step{2-20}{$\A s \in \Sigma : \A n \in ETP^{LedgerTime}(s) : a(n) = \mathrm{AddAndSendOutgoingHTLC} \Rightarrow n \in relETP^{LedgerTime}(s)$}
    \begin{proof}
    The action AddAndSendOutgoingHTLC adds a payment with the lowest timelock of all payments whose timelock is greater than the value of LedgerTime.
    Thus, a new step of AddAndSendOutgoingHTLC becomes possible if a payment cannot be added anymore because the value of LedgerTime equals the payment's timelock and a payment with the next upcoming timelock becomes the payment with the lowest timelock of all payments whose timelock is greater than the value of LedgerTime.
    Because the definition of TimelockRegions of the module HTLCUser contains the timelock of each existing or future payment, all points in time at which a new step of AddAndSendOutgoingHTLC becomes possible are included in $relETP^{LedgerTime}(s)$.
    \end{proof}
  \step{2-21}{$\A s \in \Sigma : \A n \in ETP^{LedgerTime}(s) : a(n) = \mathrm{SendHTLCPreimage} \Rightarrow n \in relETP^{LedgerTime}(s)$}
    \begin{proof}
    The action SendHTLCPreimage depends on LedgerTime and sends the preimage to an HTLC only if the HTLC's timelock plus the constant grace period $G$ is greater than LedgerTime.
    By this dependency, a SendHTLCPreimage step cannot become possible because, if a SendHTLCPreimage is possible at one point in time, then at any preceding point in time is also smaller than the HTLC's timelock plus the grace period $G$.
    \end{proof}
  \step{2-22}{$\A s \in \Sigma : \A n \in ETP^{LedgerTime}(s) : a(n) = \mathrm{ReceiveHTLCPreimage} \Rightarrow n \in relETP^{LedgerTime}(s)$}
    \begin{proof}
    Analogously to the action NoteThatHTLCFulfilledOnChainByOtherUser in the module PaymentChannelUser, the action ReceiveHTLCPreimage adds the preimage of an HTLC to the variable LatePreimages if the HTLC's timelock $+ G$ is smaller than LedgerTime.
    Thus, a new step becomes possible at an HTLC's timelock $+ G + 1$.
    Because the set $relETP^{LedgerTime}(s)$ contains the HTLC's timelock $+ G + 1$ for every HTLC that can be fulfilled, this condition is fulfilled.
    \end{proof}
  \step{2-23}{$\A s \in \Sigma : \A n \in ETP^{LedgerTime}(s) : a(n) = \mathrm{SendHTLCFail} \Rightarrow n \in relETP^{LedgerTime}(s)$}
    \begin{proof}
    The action SendHTLCFail depends on LedgerTime to fail an HTLC that has been committed if the HTLC has timed out.
    Thus, a SendHTLCFail becomes possible if the value of LedgerTime equals the timelock of an HTLC that is in state \textsc{committed}.
    As the set $relETP^{LedgerTime}(s)$ contains the timelock of all HTLCs that are committed and can come into state \textsc{committed}, the set $relETP^{LedgerTime}(s)$ contains all points in time at which a step of SendHTLCFail becomes possible.
    \end{proof}
  \step{2-24}{$\A s \in \Sigma : \A n \in ETP^{LedgerTime}(s) : a(n) = \mathrm{ReceiveHTLCFail} \Rightarrow n \in relETP^{LedgerTime}(s)$}
    \begin{proof}
    The action ReceiveHTLCFail marks an HTLC only as failed if the HTLC's timelock is smaller than or equal to LedgerTime.
    This is the same condition as for the action SendHTLCFail and, by \stepref{2-23}, the set $relETP^{LedgerTime}(s)$ contains all points in time at which a step of ReceiveHTLCFail becomes possible.
    \end{proof}

  The module MultiHopMock contains one action that depends on LedgerTime and no actions that depend on $TxAge$.
  \step{2-30}{$\A s \in \Sigma : \A n \in ETP^{LedgerTime}(s) : a(n) = \mathrm{ReceivePreimageForIncomingHTLC} \Rightarrow n \in relETP^{LedgerTime}(s)$}
    \begin{proof}
    The action ReceivePreimageForIncomingHTLC mocks the reception of a preimage of an HTLC.
    The action describes steps that add the preimage to the variable LatePreimages and steps that do not add the preimage to LatePreimages.
    The preimage must be added to LatePreimages if the HTLC's timelock is smaller than or equal to LedgerTime.
    Thus, at an HTLC's timelock, a step that does not add the preimage to LatePreimages becomes impossible but no new step becomes possible because a step that adds the preimage to LatePreimages is already possible when LedgerTime is smaller than the HTLC's timelock.
    \end{proof}

  \end{proof}
\end{proof}

\subsection{Proof that Specification $(II)$ refines Specification $(III)$}
\label{sec-appendix-proofs}

Let $C$ be the set of all channels that exist in a specification and $U$ be the set of all users of a specification and $U_c$ the users that are part of channel $c \in C$.

The states of specification $(I)$ are defined by the following variables. The variables that are specific for a user and/or a channel are prefixed with an identifier of the specific user and/or channel.
In the proofs, we omit the prefixes when they are irrelevant or it is clear from the context which user or channel is referred to.
For each user $u \in U$: $v_u = \{$ $u$PreimageInventory, $u$LatePreimages, $u$PaymentSecretForPreimage, $u$NewPayments, $u$Payments, $u$ChannelBalance, $u$ExtBalance, $u$Honest $\}$.
For each channel $c \in C$: $v_c = \{$ $c$Messages, $c$UsedTransactionIds, $c$PendingBalance $\}$.
For each channel $c \in C$ and for each user $u$ of channel $c$: $v_{c,u} = \{$ $c,u$State, $c,u$Balance, $c,u$Vars, $c,u$DetailVars, $c,u$Inventory $\}$.
Global variables: $v_\mathrm{g} = \{$ LedgerTime, TxAge, Messages, LedgerTx $\}$.

Specification $(II)$ has the same variables as specification $(I)$.
Specification $(IIa)$ has the variables that specification $(II)$ has and additionally a variable $u$RequestedInvoices for each user $u$ in which data about the mocked environment is stored. Specifically, it is stored for which payments an invoice was already requested.

The states of specification $(III)$ are defined by the following variables:
For each user $u$: $v_u = \{$ $u$PreimageInventory, $u$LatePreimages, $u$PaymentSecretForPreimage, $u$NewPayments, $u$Payments, $u$ChannelBalance, $u$ExtBalance, $u$Honest $\}$
For each channel $c \in C$: $v_c = \{$ $c$Messages, $c$PendingBalance $\}$.
For each channel $c$ and for each user $u$ of channel $c$: $v_{c,u} = \{$ $c,u$State, $c,u$Balance, $c,u$Vars $\}$
Global variables: $v_\mathrm{g} = \{$ LedgerTime, Messages $\}$

\subsubsection{$(IIa) \Rightarrow (IIIa)$}
\label{sec-appendix-proof-IIa-IIIa}

\begin{definition}
Define $f_c$ to be the refinement mapping from specification $(IIa)$ to specification $(IIIa)$ for channel $c$.
The refinement mapping $f_c$ is defined using TLA\textsuperscript{+} in the file SpecificationIIatoIIIa.tla.
\end{definition}

\begin{lemma}
\label{lem-ref-iia-iiia}
\label{lem-refinement-mapping}
Specification $(IIa)$ for channel $c \in C$ is a refinement of specification $(IIIa)$ with the refinement mapping $f_c$.
\end{lemma}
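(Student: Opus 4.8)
The plan is to apply the standard TLA\textsuperscript{+} refinement criterion: by \cite{abadi_existence_1991}, to establish $(IIa) \Rightarrow (IIIa)$ it suffices that the supplied mapping $f_c$ (i) agrees with the identity on the externally visible variables of $(IIIa)$, (ii) carries initial states of $(IIa)$ into states satisfying $Init_{IIIa}$, (iii) maps every step of $(IIa)$ to an $(IIIa)$-step or a stuttering step, and (iv) preserves the fairness conjuncts of $(IIIa)$. Writing $\overline{F}$ for $F$ with its variables replaced by their $f_c$-images (as in \cref{sec-appendix-time-skip-proof}), I would discharge the obligations $Init_{IIa} \Rightarrow \overline{Init_{IIIa}}$ and $Inv_{IIa} \land [Next_{IIa}]_{v_{IIa}} \Rightarrow \overline{[Next_{IIIa}]_{v_{IIIa}}}$ together with the liveness implication, where $Inv_{IIa}$ is a reachable-state invariant (well-formedness of HTLC states, balances, message queues, and inventories) established separately.

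The heart of the argument is obligation (iii), and it is nontrivial precisely because $f_c$ is not a per-state homomorphism: it implicitly \emph{reorders} the interleaved protocol steps of the channel and its mocked environment and then \emph{collapses} each maximal batch of contiguous channel steps into a single idealized-channel action, following the shared-variable adaptation of the IronFleet reordering rules described in \cref{sec-proof-abstract-protocol-steps}. Consequently the $f_c$-image of an atomic $(IIa)$-step falls into one of two cases: either the step is interior to a batch (or is an environment step that the idealized channel does not observe), in which case $f_c$ maps it to a stuttering step of $(IIIa)$; or the step completes a batch, in which case it must correspond to exactly one enabled idealized action -- opening, updating, fulfilling or failing an HTLC, closing, or punishing -- with the resulting idealized balances and HTLC states matching. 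I would prove this by a case analysis over the subactions of $Next_{IIa}$, using $Inv_{IIa}$ to rule out the mapped states that would otherwise violate an idealized action's guard.

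Because each model $(IIa)$ becomes finite once its parameters (the channel's two users, the surrounding route, the payments, and the finite range of transaction ids) are fixed, I would discharge obligations (i)--(iv) for the representative instantiations by explicit-state model checking with TLC rather than by hand: TLC exhaustively enumerates the reachable states of $(IIa)$, evaluates $f_c$ on each state and each step, and verifies that the images satisfy $Init_{IIIa}$, $[Next_{IIIa}]$, and the fairness conditions. This yields the lemma for the five models C1--C5 of \cref{tab_model_checking_IIa_IIIa}, with the understanding that the result is established only for these parameter choices.

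The main obstacle is the construction and validation of $f_c$ itself. First, the reordering-and-batching must be defined so that every completed batch lands on a genuinely enabled idealized action even in the adversarial corner cases -- an HTLC timing out after it was already fulfilled, a channel being closed while an update is in flight, or a commitment transaction being revoked and then published -- which are exactly the situations that force the idealized channel specification to admit behaviors that at first sight look anomalous, yet must not violate the security property. Second, the verification is computationally heavy: the two-concurrent-payment model with both parties dishonest reaches on the order of $10^9$ states and is tractable only because the time abstraction of \cref{sec-proof-improved_model_of_time} and the decomposition into single channels have already reduced the state space; larger instantiations must be left to TLC's simulation mode.
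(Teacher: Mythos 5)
Your proposal is correct and takes essentially the same route as the paper: the paper discharges this lemma not by a hand proof but by exhaustive model checking with TLC of specification $(IIa)$ against the $f_c$-image for the representative models of \cref{tab_model_checking_IIa_IIIa}, exactly as you conclude. Your additional discussion of the reordering-and-batching structure of $f_c$ and of the restriction to the checked parameter choices matches what the paper says in \cref{sec-proof-abstract-protocol-steps} and \cref{sec-model-checking-results}.
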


We check the correctness of this lemma using model checking (see \cref{sec-model-checking-results}).
We model check specification $(IIa)$ and verify that applying the mapping to each state leads to steps of specification $(IIIa)$.

\begin{lemma}
\label{lem-mapping-module-steps-in-fc}
The refinement mapping $f_c$ maps steps of PaymentChannelUser to steps of IdealChannel,
steps of HTLCUser to steps of HTLCUser,
steps of MultiHopMock to steps of MultiHopMock,
and steps of LedgerTime to steps of LedgerTime or stuttering steps.
\end{lemma}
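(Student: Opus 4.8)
The plan is to prove the lemma by a case analysis on which module of specification $(IIa)$ a step belongs to, exploiting the fact that $f_c$ is a \emph{state} refinement mapping: every step $\arstep{s}{t}$ of $(IIa)$ induces the image step $\arstep{f_c(s)}{f_c(t)}$, which by \cref{lem-ref-iia-iiia} is already known to be a (possibly stuttering) $Next$ step of $(IIIa)$. What remains is therefore only to \emph{attribute} each image step to the correct subaction of $(IIIa)$. The structural fact that drives the whole argument is that the modules HTLCUser and MultiHopMock are instantiated identically in $(IIa)$ and $(IIIa)$, whereas PaymentChannelUser is replaced by the more abstract IdealChannel. Accordingly, I would first establish a variable-partition lemma: $f_c$ computes every HTLCUser- and MultiHopMock-owned variable of $(IIIa)$ (the payment sets, the preimage inventories, the message pools, and the per-HTLC state records inside $c,u$Vars) from only the corresponding $(IIa)$ variables, while it erases the PaymentChannelUser-internal variables ($c,u$DetailVars, $c,u$Inventory, $c$UsedTransactionIds) and the on-chain variables (LedgerTx, TxAge) and collapses the detailed protocol state $c,u$State into the idealized state.

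For the HTLCUser case I would argue that an HTLCUser step changes only HTLCUser-owned variables and leaves all PaymentChannelUser- and clock-owned variables unchanged; its enabling conditions reference the channel only through abstracted predicates (channel readiness, i.e. state \texttt{rev-keys-exchanged}, and the ``irrevocably committed'' HTLC states) that are preserved by the state collapse of $f_c$. Hence $\arstep{f_c(s)}{f_c(t)}$ satisfies the identically-defined HTLCUser action of $(IIIa)$ and leaves every IdealChannel-owned variable fixed, so it is an HTLCUser step and not, for example, an IdealChannel step. The MultiHopMock case is handled identically, since MultiHopMock touches only mock-environment variables that $f_c$ preserves. For LedgerTime, a time advance leaves all non-clock variables unchanged; because $f_c$ maps LedgerTime to LedgerTime and leaves the abstracted state untouched, the image is a LedgerTime step when time strictly increases in the image and a stuttering step otherwise.

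The PaymentChannelUser case is the crux and the main obstacle. Here $f_c$ batches a run of detailed protocol steps (the message exchanges for opening, for a channel update, and for closing, as well as the adversarial Cheat, Punish, and RedeemHTLCAfterClose steps) into a single idealized transition, so each individual PaymentChannelUser step must map either to the one IdealChannel action that completes its batch or to a stuttering step that only advances erased variables or the detailed state. The argument is that a PaymentChannelUser step changes only variables that $f_c$ either erases or feeds into the IdealChannel abstraction --- the detailed state, the balances, the commitment-related fields of $c,u$Vars, the channel message queue, and the on-chain data --- so its image can change only IdealChannel-owned variables and can never escape into the HTLCUser, MultiHopMock, or LedgerTime actions. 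The delicate sub-case is that PaymentChannelUser also advances the per-HTLC state fields of $c,u$Vars (the transitions \textsc{new}~$\to$~\textsc{committed}, \textsc{fulfilled}~$\to$~\textsc{persisted}, and \textsc{off-timedout}~$\to$~\textsc{timedout} of \cref{fig-HTLC-graph}), a variable it shares with HTLCUser; I would show that these particular transitions are exactly the ones IdealChannel replicates, while HTLCUser performs the disjoint creation, fulfilment, and failing transitions, so the image is unambiguously an IdealChannel step.

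Because the full refinement of \cref{lem-ref-iia-iiia} is certified by exhaustive model checking, I would additionally discharge the module attribution by the same means: instrument the model-checking harness to record, for every explored $(IIa)$ step, both its source subaction and the $(IIIa)$ subaction its image satisfies, and then check the four correspondences claimed in the lemma as an invariant of that exploration (see \cref{sec-model-checking-results}). The analytical case analysis above supplies the conceptual reason the correspondence holds, and the instrumented run certifies it over all reachable states for each finite model, which is consistent with how the surrounding refinement results in this section are verified.
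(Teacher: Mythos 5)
Your overall route is the same as the paper's: the published proof is a compressed version of exactly your case analysis, resting on the observation that $f_c$ leaves the HTLCUser- and MultiHopMock-owned variables (and LedgerTime) untouched while collapsing only the PaymentChannelUser-owned state into IdealChannel, so that steps of each module can only be attributed to the corresponding module of $(IIIa)$. Your ``variable-partition lemma'' is precisely the implicit content of the paper's one-line justification, and your treatment of HTLCUser, MultiHopMock, and LedgerTime matches it (you are in fact slightly more careful than the paper on LedgerTime, since the lemma statement allows a stuttering image while the paper's proof text only asserts a LedgerTime step).

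The one point where you diverge from the statement rather than from the paper's proof is the PaymentChannelUser case. The lemma claims that $f_c$ maps every PaymentChannelUser step to a step of IdealChannel, with no stuttering alternative; only LedgerTime is granted the ``or stuttering'' escape. Your argument instead concludes that a PaymentChannelUser step maps ``either to the one IdealChannel action that completes its batch or to a stuttering step,'' which is a genuinely different (and, given the batching/reordering picture of the refinement mapping \Circled{2a} described in \cref{sec-proof-abstract-protocol-steps}, arguably the more honest) conclusion --- but it is not the conclusion the lemma asserts. Either the intermediate protocol steps of a batch do produce visible IdealChannel changes under $f_c$ (in which case you need to say why, not hedge), or the lemma as stated is too strong and your version is what actually holds. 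As written, your proof establishes a weaker attribution for this case than the one you set out to prove, and you should resolve that tension explicitly. Your proposal to additionally certify the attribution by instrumenting the model-checking harness is not part of the paper's proof of this lemma, but it is consistent with how \cref{lem-ref-iia-iiia} itself is discharged and would be a reasonable supplement.
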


\begin{proof}
\pf\ 
By design, steps of an action $A$ of HTLCUser and MultiHopMock are mapped to steps of the same action $A$ in specification $(IIIa)$.
This mapping of steps to steps of the same action is implemented in the refinement mapping by leaving the variables or the fields of variables unchanged by the refinement mapping that are updated by the actions of HTLCUser and MultiHopMock.
Because the values of the refinement mapping of all variables except LedgerTime do not depend on LedgerTime, these values are unchanged, if the value of LedgerTime changes.
Because the refinement mapping does not change the value of LedgerTime and no action in another module describes a change in LedgerTime, a step of LedgerTime is mapped to a step of LedgerTime.
Steps of the module PaymentChannelUser are mapped to steps of the module IdealChannel.
\end{proof}

\subsubsection{$(II) \Rightarrow (III)$}
\label{sec-appendix-proof-II-III}

We prove that specification $(II)$ is a refinement of specification $(III)$ by defining a mapping $g$ from the state space of specification $(II)$ to the state space of specification $(III)$ and by proving that the mapping $g$ is a refinement mapping.
The mapping $g$ works as follows:
In a first step, the mapping $g$ defines for each channel $c$ in specification $(II)$ a state of specification $(IIa)$. 
This state of specification $(IIa)$ is mapped to a state of specification $(IIIa)$ using the refinement mapping $f_c$.
The mapping $g$ assigns to a state of specification $(II)$ a state of specification $(III)$ with the channel-specific variables of specification $(IIIa)$ for each channel $c$ and the user-specific and general variables of the given state of specification $(II)$.

An instance of specification $(IIa)$ defines the behavior of one single channel.
To define an instance of specification $(IIa)$ for each channel $c \in C$ we use the function $m_c$ to define a state of specification $(IIa)$ based on a state of specification $(II)$.
To this end, specification $(IIa)$ has the same global variables $v_\mathrm{g}$ used by specification $(II)$ but only the variables $v_{u}$ for the state of the two participating users of the channel $c$ and the variables $v_{c,u}$ and $v_c$ that are specifically related to the channel $c$.

\begin{definition}[$m_c$]
\label{definition-m_c}
Define a mapping $m_c: \Sigma_{(II)} \rightarrow \Sigma_{(IIa)}$ so that for a state $s \in \Sigma_{(II)}$, the state $m_c(s)$ assigns to all variables in $\Sigma_{(IIa)}$ the following values:

Each channel-specific variable $v \in v_{c,u} \cup v_c$ in specification $(IIa)$ is assigned the value of the corresponding variable in state $s$ with $c$ prepended to the variable's name. E.g., the variable ChannelMessages in specification $(IIa)$ is assigned the value $s.c\mathrm{Messages}$.

The user-specific variables $v \in v_u$ in specification $(IIa)$ are assigned the following values.
As a helper we define the set of relevant preimages $R_{u,c}$ for each user $u$ of the channel $c$ as the domain of the function $u$PaymentSecretForPreimage combined with the set of all preimages for hashes of HTLCs stored in $c,u$Vars.
\begin{itemize}
  \item $u$PreimageInventory is assigned the value of $s.u\mathrm{PreimageInventory} \cap R_{u,c}$.
  \item $u$LatePreimages is assigned the value of $s.u\mathrm{LatePreimages} \cap R_{u,c}$.
  \item $u$PaymentSecretForPreimage is assigned the value of $s.u\mathrm{PaymentSecretForPreimage}$.
  \item $u$NewPayments is assigned the set of all payments in $s.u\mathrm{NewPayments}$ that are initiated by user $u$, for which the next hop is channel $c$, or for which there exists an incoming HTLC in $c,u$Vars.
  \item $u$ChannelBalance and $u$Payments are assigned the corresponding values in state $s$.
  \item $u$Honest and $u$ExtBalance are assigned the corresponding values in state $s$.
  \item $u$RequestedInvoices is assigned the set of payments of users that are not part of channel $c$ for which an invoice was requested from user $u$.
\end{itemize}

The global variables in specification $(IIa)$ are assigned the following values:
\begin{itemize}
  \item Messages is assigned a set that contains all messages in $s.\mathrm{Messages}$ for which either the sender or the recipient is a user that is member of channel $c$.
  \item LedgerTx is the set of all transactions in $s.LedgerTx$ that are related to channel $c$, i.e., the transaction with the funding output and any directly or indirectly spending transactions.
  \item LedgerTime is assigned the greatest value of the set $c$TimelockRegions that is lower than
  $s.LedgerTime$. The set $c$TimelockRegions consists of the value that was assigned to LedgerTime for channel $c$ in the previous step\footnote{This value is obtained by adding a helper variable for each channel $c$ to specification $(II)$ that stores the mapped value of LedgerTime for channel $c$.} combined with the union of the TimelockRegions sets of PaymentChannelUser, HTLCUser, and MultiHopMock for the users of channel $c$ .
\end{itemize}

\end{definition}

\begin{lemma}
\label{lemma-multi-hop-mock-refinement}
Given a state of specification $(II)$, then in all channels $c \in C$ created by the mapping $m_c$ it holds that for every step $\sstep{s}{t}$ of an action of an instance of PaymentChannelUser or HTLCUser for channel $c$, 
in all channels $c' \neq c$ the step $\sstep{m_{c'}(s)}{m_{c'}(t)}$  is either a stuttering step, a step of an instance of MultiHopMock for channel $c'$ or a or a step of HTLCUser for channel $c'$ and the same user as in channel $c$.
\end{lemma}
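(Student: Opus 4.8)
The plan is to fix a state $s$ of specification $(II)$, a step $\sstep{s}{t}$ of an action $A$ that is an instance of PaymentChannelUser or HTLCUser for channel $c$ and acting user $u$, and a second channel $c' \neq c$, and then to classify the projected step $\sstep{m_{c'}(s)}{m_{c'}(t)}$ using \cref{definition-m_c}. The guiding intuition is that MultiHopMock is built precisely to mock the environment of a single channel, so every effect a step in $c$ can have on the view of $c'$ should be reproducible by a MultiHopMock step, by an HTLCUser step of the shared user, or by nothing at all. I would begin by collecting the set $\Delta := \{ v : s.v \neq t.v \}$ of variables the step writes. Because $A$ is parameterized by $c$ and the single user $u$, $\Delta$ is contained in the channel-specific variables $v_c \cup \bigcup_{w \in U_c} v_{c,w}$, the user variables $v_u$ of $u$, and those global variables that PaymentChannelUser and HTLCUser ever write, namely Messages and LedgerTx (neither module advances LedgerTime or TxAge).

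Next I would intersect $\Delta$ with the variables that $m_{c'}$ reads. Since the channel-specific variables of $c'$ ($v_{c'}$ together with the $v_{c',w}$ for $w \in U_{c'}$) are disjoint from $\Delta$, the channel-specific components of $m_{c'}(s)$ and $m_{c'}(t)$ coincide. The mapped LedgerTime component is also unchanged: under a non-time step $s.\mathrm{LedgerTime}$ is fixed, and any timelock newly entering $c'$TimelockRegions is a future HTLC timelock and hence not below $s.\mathrm{LedgerTime}$, so the sticky construction of \cref{definition-m_c} preserves the mapped value. The LedgerTx component restricted to $c'$ is unchanged as well, since a transaction published by a PaymentChannelUser action for $c$ spends only within the transaction tree rooted at $c$'s funding output, which is disjoint from $c'$'s tree. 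Consequently $\sstep{m_{c'}(s)}{m_{c'}(t)}$ can fail to be a stuttering step only through (i) a change to a user variable of $u$ when $u \in U_{c'}$, or (ii) a change to the global Messages set involving a user of $c'$.

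The heart of the proof is then an action-by-action case distinction. For case (ii) the only writers of global Messages are the invoice-handshake actions of HTLCUser --- RequestInvoice, GenerateAndSendPaymentHash, ReceivePaymentHash, and IgnoreInvoiceRequest --- and for each I would check that whenever the affected message has a sender or recipient in $U_{c'}$, the induced change of the Messages and $u$RequestedInvoices components of $m_{c'}$ matches an enabled MultiHopMock step for $c'$, and otherwise leaves them fixed. For case (i) I would traverse each action of PaymentChannelUser and HTLCUser, record which shared user variables it mutates, and test whether the mutation survives the filters that $m_{c'}$ applies: the relevant-preimage set $R_{u,c'}$ for $u$PreimageInventory and $u$LatePreimages, and the predicate selecting payments initiated by $u$, having next hop $c'$, or possessing an incoming HTLC in $c',u$Vars for $u$NewPayments. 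Mutations filtered away give a stuttering step for $c'$; a preimage relevant to the $c'$-leg of a shared multi-hop payment (learned via SendHTLCPreimage or ReceiveHTLCPreimage on $c$) matches the corresponding MultiHopMock reception step; a payment newly queued for forwarding on $c'$ (produced by ReceiveUpdateAddHTLC on $c$) matches a MultiHopMock step; and the remaining transitions belonging to $u$'s own multi-hop bookkeeping match an HTLCUser step for $c'$ and the same user $u$.

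I expect the main obstacle to be establishing the completeness of MultiHopMock in case (i): for every shared-variable mutation that remains visible after the filters of \cref{definition-m_c}, I must exhibit a concrete MultiHopMock step (or an HTLCUser step for $u$) in channel $c'$ whose effect equals the projected mutation and whose enabling condition already holds in $m_{c'}(s)$. This is where the delicate bookkeeping lives, since it requires showing that the mock's nondeterminism --- for instance its choice of whether an arriving preimage is recorded as late --- is wide enough to reproduce every behavior the genuine neighbouring channel can force, while the preimage and payment filters keep the two channels' projected views decoupled enough that no spurious change leaks across. Once this lemma is in hand, it is exactly the composition argument that lets the single-channel correctness of the mapping $f_c$ (checked for each channel via \cref{lem-refinement-mapping}) be lifted to the full refinement $(II) \Rightarrow (III)$.
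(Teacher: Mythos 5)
Your proposal is correct and follows essentially the same route as the paper's proof: it first argues that the channel-specific variables of $c'$, the mapped LedgerTime, and the $c'$-restricted LedgerTx/TxAge components are untouched, and then reduces the claim to an action-by-action case analysis matching each surviving mutation of the shared user variables and global Messages to a MultiHopMock step, an HTLCUser step of the shared user, or a stuttering step. The paper's proof carries out exactly this case analysis (e.g., PublishFundingTransaction to UserOpensPaymentChannel, ReceiveUpdateAddHTLC to AddNewForwardedPayment, preimage receptions to ReceivePreimageForIncomingHTLC), which is the "completeness of MultiHopMock" bookkeeping you correctly identify as the heart of the argument.
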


\begin{proof}
Let $\sstep{s}{t}$ be a step in specification $(II)$ of an action of the modules PaymentChannelUser or HTLCUser for channel $c$.

The variables $v_{c'}$ and $v_{c',u}$ that are specific to channel $c'$ and the variables of users $u$ who are not part of channel $c$ are unaffected by a step in channel $c$ and, thus, these variables are unchanged in the step $\sstep{m_{c'}(s)}{m_{c'}(t)}$.

Variables that are left to discuss because they are possibly changed in the step $\sstep{m_{c'}(s)}{m_{c'}(t)}$ are the global variables $v_\mathrm{g}$ and the user specific variables $v_u$ for users who are both in channel $c'$ and channel $c$.
These variables are: Messages, LedgerTx, LedgerTime, TxAge, $u$PreimageInventory, $u$LatePreimages, $u$PaymentSecretForPreimage, $u$ChannelBalance, $u$Payments, $u$NewPayments, $u$Honest, $u$ExtBalance, and $u$RequestedInvoices

The variables LedgerTx and TxAge are reduced by $m_{c'}$ to values that are specific to channel $c'$. Thus, a change in channel $c$ does not affect the value of LedgerTx and TxAge in the step $\sstep{m_{c'}(s)}{m_{c'}(t)}$ and these variables are unchanged.
The variable LedgerTime cannot be changed in step $\sstep{s}{t}$ which is a step of an action of the modules PaymentChannelUser or HTLCUser. Because the set $c$TimelockRegions can only become smaller, the value of LedgerTime in step $\sstep{m_{c'}(s)}{m_{c'}(t)}$ does not change.

The remaining variables are changed by the following actions of the modules PaymentChannelUser and HTLCUser: PublishFundingTransaction, RedeemHTLCAfterClose, NoteThatHTLCFulfilledOnChainByOtherUser, NoteThatHTLCFulfilledOnChainInOtherChannel, RequestInvoice, GenerateAndSendPaymentHash, ReceivePaymentHash, SendHTLCPreimage, ReceiveHTLCPreimage, AddAndSendOutgoingHTLC, and ReceiveUpdateAddHTLC.
In the following, we prove for each of these actions that a step $\sstep{s}{t}$ of these actions in channel $c$ is a step $\sstep{m_{c'}(s)}{m_{c'}(t)}$ in channel $c'$ and either a stuttering step, a step of an action of MultiHopMock or of HTLCUser for the same user.

\step{1}{If $\sstep{s}{t}$ is a step of PublishFundingTransaction of user $u$ in channel $c$, then $\sstep{m_{c'}(s)}{m_{c'}(t)}$ is either a stuttering step, a step of an action of MultiHopMock or of HTLCUser for user $u$.}
  \begin{proof}
  \pf\ The variables that are changed by PublishFundingTransaction and are not specific to channel $c$ are $u$ExtBalance and $u$ChannelBalance.
  By definition of $m_c$, both variables are unchanged in step $\sstep{m_{c'}(s)}{m_{c'}(t)}$ if user $u$ is not part of channel $c'$.
  In the step $\sstep{s}{t}$ the channel balance of user $u$ is increased by the funding amount which is the user's external balance.
  This change is described by the action UserOpensPaymentChannel of the module MultiHopMock which decreases the balance of user $u$ by to zero and adds it to the channel balance of user $u$ stored in the variable $u$ChannelBalance.
  Thus, these variables of user $u$ change in the same way as described by the action PublishFundingTransaction.
  The information that this balance is stored in an external channel from the view of channel $c'$ is stored by adding a record to the variable $u$RequestedInvoices tat stores the amount that is stored in the other channel.
  As all other variables remain unchanged, $\sstep{m_{c'}(s)}{m_{c'}(t)}$ is a step of UserOpensPaymentChannel if user $u$ is part of channel $c'$ and a stuttering step otherwise.
  \end{proof}
\step{2}{If $\sstep{s}{t}$ is a step of RedeemHTLCAfterClose or SendHTLCPreimage of user $u$ in channel $c$, then $\sstep{m_{c'}(s)}{m_{c'}(t)}$ is either a stuttering step, a step of an action of MultiHopMock or of HTLCUser for user $u$.}
  \begin{proof}
  \pf\ The variables that are changed by RedeemHTLCAfterClose and SendHTLCPreimage and are not specific to channel $c$ are $u$Payments and $u$ChannelBalance.
  By definition of $m_c$, both variables are unchanged in step $\sstep{m_{c'}(s)}{m_{c'}(t)}$ if user $u$ is not part of channel $c'$.
  Both variables are only changed if a payment is processed. This can only be the case if $u$ is the receiver of the payment.
  This change of the variables in step $\sstep{m_{c'}(s)}{m_{c'}(t)}$ is described by the action ProcessOtherPayment of MultiHopMock which describes that any payment of which the user $u$ is the receiver may be processed and the respective balance be received.
  \end{proof}
\step{3}{If $\sstep{s}{t}$ is a step of NoteThatHTLCFulfilledOnChainByOtherUser or ReceiveHTLCPreimage of user $u$ in channel $c$, then $\sstep{m_{c'}(s)}{m_{c'}(t)}$ is either a stuttering step, a step of an action of MultiHopMock or of HTLCUser for user $u$.}
  \begin{proof}
  \pf\ The variables that are changed by NoteThatHTLCFulfilledOnChainByOtherUser and ReceiveHTLCPreimage and are not specific to channel $c$ are $u$PreimageInventory, $u$LatePreimages, $u$Payments and $u$ChannelBalance.
  By definition of $m_c$, these variables are unchanged in step $\sstep{m_{c'}(s)}{m_{c'}(t)}$ if user $u$ is not part of channel $c'$.
  Further, by definition of $m_c$, the variables $u$PreimageInventory and $u$LatePreimages are unchanged if the preimage that is being read from the blockchain is not in the set $R_{u,c'}$.
  If user $u$ is the receiver of the payment associated with the fulfilled HTLC, then the HTLC is not part of another channel of user $u$ and only the variables $u$Payments and $u$ChannelBalance are changed.
  This change in the step $\sstep{m_{c'}(s)}{m_{c'}(t)}$ is described by the action ProcessOtherPayment of MultiHopMock which describes that any payment of which the user $u$ is the sender may be processed and the respective balance be deducted from $u$ChannelBalance.
  If user $u$ is not the receiver of the payment associated with the fulfilled HTLC, then the variables $u$Payments and $u$ChannelBalance are unchanged in step $\sstep{s}{t}$ and, by definition of $m_{c'}$, they are unchanged in step $\sstep{m_{c'}(s)}{m_{c'}(t)}$.
  The variables $u$PreimageInventory and $u$LatePreimages change only in step $\sstep{m_{c'}(s)}{m_{c'}(t)}$ if the preimage of the fulfilled HTLC is in $R_{u,c'}$.
  As channels $c'$ and $c$ are different channels and, if $u$PreimageInventory is changed by NoteThatHTLCFulfilledOnChainByOtherUser or ReceiveHTLCPreimage, the received preimage must be for an outgoing HTLC in channel $c$, the HTLC for which the preimage is received must be an incoming HTLC in channel $c'$.
  Thus, the change to the variables $u$PreimageInventory and $u$LatePreimages in step $\sstep{m_{c'}(s)}{m_{c'}(t)}$ that a preimage is received is described by the action ReceivePreimageForIncomingHTLC of the module MultiHopMock.
  \end{proof}
\step{4}{If $\sstep{s}{t}$ is a step of NoteThatHTLCFulfilledOnChainInOtherChannel of user $u$ in channel $c$, then $\sstep{m_{c'}(s)}{m_{c'}(t)}$ is either a stuttering step, a step of an action of MultiHopMock or of HTLCUser for user $u$.}
  \begin{proof}
  \pf\ The variables that are changed by NoteThatHTLCFulfilledOnChainInOtherChannel and are not specific to channel $c$ are $u$PreimageInventory and $u$LatePreimages.
  By definition of $m_c$, both variables are unchanged in step $\sstep{m_{c'}(s)}{m_{c'}(t)}$ if user $u$ is not part of channel $c'$.
  By definition of NoteThatHTLCFulfilledOnChainInOtherChannel, the HTLC that is being fulfilled must be an incoming HTLC.
  Thus, the change to the variables $u$PreimageInventory and $u$LatePreimages in step $\sstep{m_{c'}(s)}{m_{c'}(t)}$ that a preimage is received is described by the action ReceivePreimageForIncomingHTLC of the module MultiHopMock.
  \end{proof}
\step{5}{If $\sstep{s}{t}$ is a step of RequestInvoice of user $u$ in channel $c$, then $\sstep{m_{c'}(s)}{m_{c'}(t)}$ is either a stuttering step, a step of an action of MultiHopMock or of HTLCUser for user $u$.}
  \begin{proof}
  \pf\ The action RequestInvoice is not specific to a channel but only specific to user $u$.
  If user $u$ is in channel $c'$, step $\sstep{m_{c'}(s)}{m_{c'}(t)}$ is also described by RequestInvoice of the module HTLCUser.
  If user $u$ is not in channel $c'$, the global variable Messages might be changed in step $\sstep{m_{c'}(s)}{m_{c'}(t)}$ if a message is sent from user $u$ to a user of channel $c'$.
  Such a step $\sstep{m_{c'}(s)}{m_{c'}(t)}$ is described by the action RequestInvoice of the module MultiHopMock.
  If no user of channel $c'$ is the recipient of the message sent by $u$, then, by definition of $m_{c'}$, the variable Messages is unchanged in $\sstep{m_{c'}(s)}{m_{c'}(t)}$ and $\sstep{m_{c'}(s)}{m_{c'}(t)}$ is a stuttering step.
  \end{proof}
\step{6}{If $\sstep{s}{t}$ is a step of GenerateAndSendPaymentHash of user $u$ in channel $c$, then $\sstep{m_{c'}(s)}{m_{c'}(t)}$ is either a stuttering step, a step of an action of MultiHopMock or of HTLCUser for user $u$.}
  \begin{proof}
  \pf\ The action GenerateAndSendPaymentHash is not specific to a channel but only specific to user $u$.
  If user $u$ is in channel $c'$, step $\sstep{m_{c'}(s)}{m_{c'}(t)}$ is also described by GenerateAndSendPaymentHash of the module HTLCUser.
  If user $u$ is not in channel $c'$, the global variable Messages might be changed in step $\sstep{m_{c'}(s)}{m_{c'}(t)}$ if user $u$ replies to a message sent from a user of channel $c'$.
  Such a step $\sstep{m_{c'}(s)}{m_{c'}(t)}$ is described by the action GenerateAndSendPaymentHash of the module MultiHopMock.
  The value included in the reply is deterministic as it can be deducted from the payment id for which an invoice is requested.
  Thus, the action GenerateAndSendPaymentHash of the module MultiHopMock can define the reply that an actual user would send.
  If user $u$ replies to a message that was not sent by a user from channel $c'$, by definition of $m_{c'}$, the variable Messages is unchanged in $\sstep{m_{c'}(s)}{m_{c'}(t)}$ and $\sstep{m_{c'}(s)}{m_{c'}(t)}$ is a stuttering step.
  \end{proof}
\step{7}{If $\sstep{s}{t}$ is a step of ReceivePaymentHash of user $u$ in channel $c$, then $\sstep{m_{c'}(s)}{m_{c'}(t)}$ is either a stuttering step, a step of an action of MultiHopMock or of HTLCUser for user $u$.}
  \begin{proof}
  \pf\ The action ReceivePaymentHash is not specific to a channel but only specific to user $u$.
  If user $u$ is in channel $c'$, step $\sstep{m_{c'}(s)}{m_{c'}(t)}$ is also described by ReceivePaymentHash of the module HTLCUser.
  If user $u$ is not in channel $c'$, the global variable Messages might be changed in step $\sstep{m_{c'}(s)}{m_{c'}(t)}$ if user $u$ receives a message sent by a user from channel $c'$.
  Such a step $\sstep{m_{c'}(s)}{m_{c'}(t)}$ is described by the action ReceivePaymentHash of the module MultiHopMock that describes for the user in channel $c'$ that the received message is removed from the Messages variable.
  If user $u$ receives a message that was not sent by a user from channel $c'$, by definition of $m_{c'}$, the variable Messages is unchanged in $\sstep{m_{c'}(s)}{m_{c'}(t)}$ and $\sstep{m_{c'}(s)}{m_{c'}(t)}$ is a stuttering step.
  \end{proof}
\step{8}{If $\sstep{s}{t}$ is a step of AddAndSendOutgoingHTLC of user $u$ in channel $c$, then $\sstep{m_{c'}(s)}{m_{c'}(t)}$ is either a stuttering step, a step of an action of MultiHopMock or of HTLCUser for user $u$.}
  \begin{proof}
  \pf\ The only variable that is changed by AddAndSendOutgoingHTLC and is not specific to channel $c$ is $u$NewPayments.
  By definition of $m_c$, this variable is unchanged in step $\sstep{m_{c'}(s)}{m_{c'}(t)}$ if user $u$ is not part of channel $c'$.
  The change to the variable $u$NewPayments is that a payment is removed.
  This change is described by the action AddOutgoingHTLCToOtherChannel of the module MultiHopMock.
  \end{proof}
\step{9}{If $\sstep{s}{t}$ is a step of ReceiveUpdateAddHTLC of user $u$ in channel $c$, then $\sstep{m_{c'}(s)}{m_{c'}(t)}$ is either a stuttering step, a step of an action of MultiHopMock or of HTLCUser for user $u$.}
  \begin{proof}
  \pf\ The only variable that is changed by ReceiveUpdateAddHTLC and is not specific to channel $c$ is $u$NewPayments.
  By definition of $m_c$, this variable is unchanged in step $\sstep{m_{c'}(s)}{m_{c'}(t)}$ if user $u$ is not part of channel $c'$.
  The change to the variable $u$NewPayments is that a payment is added that should be forwarded.
  This change is described by the action AddNewForwardedPayment of the module MultiHopMock.
  The action AddNewForwardedPayment describes the payments that can be added based on the initial payments of other users and calculates the parameters of the payment to be forwarded based on the position of the channel $c'$ in the path of the payment.
  \end{proof}
\qedstep
  \begin{proof}
  \pf\ The steps above discussed all actions that change variables that have an effect on the variables in step $\sstep{m_{c'}(s)}{m_{c'}(t)}$.
  Thus, for all steps $\sstep{s}{t}$ of other actions, the step $\sstep{m_{c'}(s)}{m_{c'}(t)}$ is a stuttering step.
  \end{proof}

\end{proof}

\begin{lemma}
\label{lemma-ii-to-iia}
For each channel $c \in C$, the mapping $m_c$ maps a state of specification $(II)$ to a valid state of specification $(IIa)$ of channel $c$.
Formally: $\A s \in \Sigma_{(II)}, c \in C : m_c(s) \in \Sigma_{(IIa)}$
\end{lemma}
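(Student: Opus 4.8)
The plan is to prove \cref{lemma-ii-to-iia} as a type-correctness statement: the state space $\Sigma_{(IIa)}$ is characterized by the type-correctness predicate of specification $(IIa)$, namely the conjunction of the range constraints that the $Init$ formula of $(IIa)$ imposes on each of its variables (each preimage set is a set of preimages, each balance is an integer, $c,u$Vars is a record of the expected shape, $\mathrm{LedgerTime}$ is an integer, and so on). To establish $m_c(s)\in\Sigma_{(IIa)}$ it therefore suffices to fix $s\in\Sigma_{(II)}$ — so that $s$ satisfies the analogous predicate of $(II)$ — and to check, variable by variable, that the value assigned to each variable of $(IIa)$ by \cref{definition-m_c} lies in the type required by $(IIa)$. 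Since $m_c$ assigns a value to exactly the variables of $(IIa)$, no variable is left undefined, so the only real work is the per-variable type check.

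First I would dispatch the variables that $m_c$ copies unchanged. Every channel-specific variable $v\in v_{c,u}\cup v_c$ is set to the channel-$c$ instance of the corresponding variable of $s$, and the user-level variables $u$PaymentSecretForPreimage, $u$ChannelBalance, $u$Payments, $u$Honest, and $u$ExtBalance are copied verbatim; since $(IIa)$ and $(II)$ use identical record and set types for these variables, type correctness is inherited directly from $s\in\Sigma_{(II)}$. Next I would treat the variables that $m_c$ defines as a \emph{restriction} of a value of $s$: $u$PreimageInventory and $u$LatePreimages are intersected with $R_{u,c}$, $u$NewPayments is cut down to the payments that touch channel $c$, and the global $\mathrm{Messages}$ and $\mathrm{LedgerTx}$ are filtered to the messages and transactions related to $c$. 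In each case the assigned value is a subset of the corresponding value in $s$, and a subset of a set whose elements have a fixed type is again a set of elements of that type, so type correctness is preserved. The new auxiliary variable $u$RequestedInvoices is, by its definition, a subset of an already-typed set of requested-invoice records, so the same subset argument applies.

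The main obstacle is the single variable that $m_c$ computes by a non-monotone operation rather than by copying or restricting: $\mathrm{LedgerTime}$, which is set to the greatest element of $c$TimelockRegions not exceeding $s.\mathrm{LedgerTime}$. Two things must be shown. First, this maximum must be \emph{well-defined}, i.e. the candidate set $\{t\in c\mathrm{TimelockRegions}:t\le s.\mathrm{LedgerTime}\}$ must be non-empty; I would prove this via an invariant of $(II)$ stating that the mapped-time value recorded in the helper variable of the previous step is always at most $s.\mathrm{LedgerTime}$, so that this recorded value — which by construction belongs to $c$TimelockRegions — always witnesses non-emptiness. Second, the chosen value must be an integer, which holds because $c$TimelockRegions is by definition the union of the integer-valued TimelockRegions sets with the integer previous mapped time, so every element, and in particular the maximum, is an integer. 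Establishing the non-emptiness invariant is the one part that is not a routine type check: it rests on the observation that the mapped clock of each channel never runs ahead of the real $\mathrm{LedgerTime}$, which is precisely the monotonicity property underlying the time-skip construction of \cref{sec-proof-improved_model_of_time}. Once these two points are secured, assembling the per-variable checks yields $m_c(s)\in\Sigma_{(IIa)}$ for every $s\in\Sigma_{(II)}$ and every $c\in C$, which is the claim.
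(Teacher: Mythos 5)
There is a genuine gap: you have read $\Sigma_{(IIa)}$ as the set of \emph{type-correct} assignments of values to the variables of specification $(IIa)$, whereas the paper uses it to mean the set of states \emph{reachable} in a behavior of $(IIa)$. The paper's proof is an induction over the behaviors of $(II)$: the base case shows that $m_c$ maps initial states of $(II)$ to initial states of $(IIa)$, and the inductive step shows that for every step $\langle s, t\rangle$ of $(II)$, the projected step $\langle m_c(s), m_c(t)\rangle$ is a valid step of $(IIa)$ — splitting into cases according to whether the step belongs to PaymentChannelUser or HTLCUser for channel $c$ itself, for a \emph{different} channel, to LedgerTime, or is a final withdraw. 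The case of a step in a different channel is the substantive one: it relies on \cref{lemma-multi-hop-mock-refinement} to show that the MultiHopMock environment module of $(IIa)$ really does cover every effect that other channels can have on the shared variables. A per-variable type check cannot deliver any of this; in particular it cannot establish that the mocked environment is adequate, which is the entire point of the lemma.

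The weaker statement you prove would also not support the way the lemma is used later: the proof of \cref{theorem-ref-II-III} (step 3-2) explicitly invokes ``by the proof of \cref{lemma-ii-to-iia}, the step $\langle m_c(s), m_c(t)\rangle$ is a step of an action of HTLCUser or PaymentChannelUser in specification $(IIa)$'' — i.e., it uses the inductive, step-preserving content of the lemma's proof, not mere well-typedness of $m_c(s)$. Your observation about the well-definedness of the $\mathrm{LedgerTime}$ maximum (non-emptiness of $\{t \in c\mathrm{TimelockRegions} : t \le s.\mathrm{LedgerTime}\}$) is a legitimate side condition worth checking, but it is peripheral; the proof needs to be restructured around reachability and the simulation of cross-channel steps by the environment mock.
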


\begin{proof}
We prove the lemma by induction:

Let $F_i$ be the set of initial states of specification $S_i$.

\step{1}{$\A s \in F_{(II)}, c \in C : m_c(s) \in F_{(IIa)}$}
  \begin{proof}
  By definition of the initial states, applying the selection by $m_c$ to an initial state of specification $(II)$ results in an initial state of specification $(IIa)$.
  \end{proof}

\step{2}{\assume{$s \in \Sigma_{(II)}$ and $m_c(s) \in \Sigma_{(IIa)}$ and $\sstep{s}{t}$ is a step of specification $(II)$}
          \prove{$m_c(t) \in \Sigma_{(IIa)}$}
        }
  \begin{proof}
  Assume that state $s$ is a valid state of specification $(II)$ and $m_c(s)$ is a valid state of specification $(IIa)$.
  We show that for a step $\sstep{s}{t}$ to a state $t$ of specification $(II)$, it holds that $m_c(t)$ is a valid state of specification $(IIa)$.
  A step of specification $(II)$ is either a step of a (1) PaymentChannelUser, (2) HTLCUser, (3) LedgerTime, or (4) a final withdraw step.
  \step{2-1}{If $\sstep{s}{t}$ is a step of PaymentChannelUser, then $m_c(t) \in \Sigma_{(IIa)}$}
    \begin{proof}
    If the step $\sstep{s}{t}$ is a step of PaymentChannelUser, it is either (1.1) a step of an action of PaymentChannelUser for channel $c$ or (1.2) for another channel.
    \step{2-1-1}{If $\sstep{s}{t}$ is a step of PaymentChannelUser for channel $c$, then $m_c(t) \in \Sigma_{(IIa)}$}
      \begin{proof}
      If the step $\sstep{s}{t}$ is a step of an action of PaymentChannelUser for channel $c$ then $\sstep{m_c(s)}{m_c(t)}$ is a step of specification $(IIa)$ because specification $(IIa)$ allows exactly the same action of PaymentChannelUser that changes the variables selected by the mapping $m_c$.
      Because $\sstep{m_c(s)}{m_c(t)}$ is a step of specification $(IIa)$, $m_c(t)$ is a state of specification $(IIa)$.
      \end{proof}
    \step{2-1-2}{If $\sstep{s}{t}$ is a step of PaymentChannelUser for a channel other than channel $c$, then $m_c(t) \in \Sigma_{(IIa)}$}
      \begin{proof}
      If the step $\sstep{s}{t}$ is a step of an action of PaymentChannelUser for a channel other than channel $c$, the variables in specification $(II)$ that are specific to channel $c$ do not change between states $s$ and $t$. However, the variables for the users participating in channel $c$ might change in the step $\sstep{s}{t}$ or global variables might change. For example, a user of channel $c$ might receive a preimage. To account for such changes, specification $(IIa)$ contains the module MultiHopMock.
      The module MultiHopMock abstracts all actions that can happen in other payment channels.
      By \cref{lemma-multi-hop-mock-refinement}, it holds that $m_c(t) \in \Sigma_{(IIa)}$ and, thus, $m_c(t) \in \Sigma_{(IIa)}$.
      \end{proof}
    \qedstep
      \begin{proof}
      By \stepref{2-1-1} and \stepref{2-1-2}.
      \end{proof}
    \end{proof}
  \step{2-2}{If $\sstep{s}{t}$ is a step of HTLCUser, then $m_c(t) \in \Sigma_{(IIa)}$}
    \begin{proof}
    A step of HTLCUser is either a step of an action of HTLCUser for channel $c$ or for another channel.
    \step{2-2-1}{If $\sstep{s}{t}$ is a step of HTLCUser for channel $c$, then $m_c(t) \in \Sigma_{(IIa)}$}
      \begin{proof}
      If the step $\sstep{s}{t}$ is a step of an action of HTLCUser for channel $c$ then $\sstep{m_c(s)}{m_c(t)}$ is a step of specification $(IIa)$ because specification $(IIa)$ allows exactly the same action of HTLCUser that changes the variables selected by the mapping $m_c$.
      Because $\sstep{m_c(s)}{m_c(t)}$ is a step of specification $(IIa)$, $m_c(t)$ is a state of specification $(IIa)$.
      \end{proof}
    \step{2-2-2}{If $\sstep{s}{t}$ is a step of HTLCUser for a channel other than channel $c$, then $m_c(t) \in \Sigma_{(IIa)}$}
      \begin{proof}
      If the step $\sstep{s}{t}$ is a step of an action of HTLCUser for a channel other than channel $c$, the variables in specification $(II)$ specifically for channel $c$ do not change between states $s$ and $t$. However, the variables for the users participating in channel $c$ might change in the step $\sstep{s}{t}$ or global variables might change.
      As in step \stepref{2-1}, we use the module MultiHopMock to account for this. For every action in HTLCUser that changes a variable that is also used in another channel, there is an action in MultiHopMock that describes the changes to the variables.
      By \cref{lemma-multi-hop-mock-refinement}, it holds that $m_c(t) \in \Sigma_{(IIa)}$ and, thus, $m_c(t) \in \Sigma_{(IIa)}$.
      \end{proof}
    \qedstep
      \begin{proof}
      By \stepref{2-2-1} and \stepref{2-2-2}.
      \end{proof}
    \end{proof}
  \step{2-3}{If $\sstep{s}{t}$ is a step of LedgerTime, then $m_c(t) \in \Sigma_{(IIa)}$}
    \begin{proof}
    If the step $\sstep{s}{t}$ is a step of LedgerTime, the only change between states $s$ and $t$ is an increase in LedgerTime. By the definition of $m_c$, LedgerTime changes in the step $\sstep{m_c(s)}{m_c(t)}$ as defined by the module LedgerTime.
    Thus, the step $\sstep{m_c(s)}{m_c(t)}$ is a step of specification $(IIa)$.
    Thus, $m_c(t) \in \Sigma_{(IIa)}$.
    \end{proof}
  \step{2-4}{If $\sstep{s}{t}$ is a final withdraw step, then $m_c(t) \in \Sigma_{(IIa)}$}
    \begin{proof}
    Assume that $\sstep{s}{t}$ is a final withdraw step in specification $(II)$.
    The final withdraw action is defined in specification $(IIa)$ as in specification $(II)$ with the exception of the new value of $u$ExtBalance for each user $u$.
    Thus, we only need to show that the change of $u$ExtBalance in $\sstep{m_c(s)}{m_c(t)}$ conforms to the final withdraw action of specification $(IIa)$.
    The balance of dishonest users does not change in step $\sstep{s}{t}$ which matches the definition of the final withdraw action in specification $(IIa)$.
    In step $\sstep{s}{t}$, the value of $u$ExtBalance for each honest user $u$ is increased by user $u$'s on-chain balance.
    In specification $(IIa)$ the visible on-chain balance might be less than in specification $(II)$ because only the on-chain transaction for channel $c$ are visible.
    However, because specification $(IIa)$ requires that the value of $u$ExtBalance for an honest user $u$ is at least the user's previous balance increased by the on-chain balance from channel $c$, the user $u$'s new external balance can be larger than the balances visible in specification $(IIa)$ and, therefore, $\sstep{m_c(s)}{m_c(t)}$ is a step of specification $(IIa)$.
    It follows that $m_c(t) \in \Sigma_{(IIa)}$.
    \end{proof}
  \qedstep
    \begin{proof}
    By \stepref{2-1}, \stepref{2-2}, \stepref{2-3}, and \stepref{2-4} because a step in specification $(II)$ can only be either a step of PaymentChannelUser, HTLCUser, LedgerTime, or a final withdraw step.
    \end{proof}
  \end{proof}
\qedstep
  \begin{proof}
  By induction using \stepref{1} and \stepref{2}.
  \end{proof}
\end{proof}

Let $F$ be the set of the refinement mappings $f_c$ for all channels $c \in C$ from specification $(IIa)$ to specification $(IIIa)$ that we defined in the TLA\textsuperscript{+} code of the formalization in \textit{SpecificationIIatoIIIa.tla}.

\begin{definition}[$g$]
Define the function $g$ from $\Sigma_{(II)}$, the state space of specification $(II)$, to $\Sigma_{(III)}$, the state space of specification $(III)$, as follows.
For a state $s$ of specification $(II)$,
let $g(s)$ be a state of specification $(III)$ with the following value to variable assignments:
\begin{itemize}
    \item The global variables $v_\mathrm{g}$ are assigned as follows:
    \begin{itemize}
        \item Messages is assigned $s.\mathrm{Messages}$ which equals the value $\bigcup_{c \in C} f_{c}(m_{c}(s)).\mathrm{Messages}$, the union of all global messages.
        \item LedgerTime is assigned the value $s.\mathrm{LedgerTime}$ which equals the value $max_{c \in C}(f_c(m_c(s)).\mathrm{LedgerTime})$.
    \end{itemize}
    \item The variables $v_u$ that concern one specific user $u$ are assigned the value $s.v$ because these variables are left unchanged by $f_c$.
    \item In the set of variables $v_c$ that concern one specific channel $c \in C$, there are the variables $c$Messages and $c$PendingBalance. This variables are assigned the value $f_c(m_c(s)).\mathrm{ChannelMessages}$ and $f_c(m_c(s)).\mathrm{PendingBalance}$ respectively.
    \item For the variables $v \in v_{c,u}$ that concern one specific user $u$ and channel $c$, the value of the variable $v$ in state $g(s)$ is defined as $f_c(m_c(s)).v$.
\end{itemize}
\end{definition}

\begin{lemma}
\label{lemma-iiia-additional-values}
A step of IdealChannel or HTLCUser in specification $(IIIa)$ is also possible if the variables that are reduced by $m_c$ (i.e., $u$PreimageInventory, $u$LatePreimages, $u$NewPayments, Messages) contain additional values that are filtered out by $m_c$ (see \cref{definition-m_c}).
\end{lemma}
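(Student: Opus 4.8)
The plan is to prove the lemma by a case analysis over the individual actions of the modules IdealChannel and HTLCUser, establishing that each such action is \emph{monotone} with respect to the four set-valued variables $u$PreimageInventory, $u$LatePreimages, $u$NewPayments, and Messages. Concretely, for a step $\langle s, t \rangle$ of one of these actions I would consider states $\tilde{s}$ and $\tilde{t}$ obtained from $s$ and $t$ by inserting into each of these four variables the elements that $m_c$ removes, namely the preimages outside $R_{u,c}$, the payments that are neither initiated by $u$ nor have channel $c$ as next hop nor possess a matching incoming HTLC in $c,u$Vars, and the messages whose sender and recipient both lie outside channel $c$ (see Definition \ref{definition-m_c}). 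Since all remaining variables agree between $s$ and $\tilde{s}$ and between $t$ and $\tilde{t}$, showing that $\langle \tilde{s}, \tilde{t} \rangle$ is again a step of the same action reduces to re-examining only the conjuncts that mention one of the four reduced variables.

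First I would classify how these variables occur in an action predicate: either \emph{positively} in an enabling conjunct --- as an existential choice of a witness from the set, a bounded choice over a filtered subset, or a membership test for a concrete element --- or in a \emph{frame} conjunct that rewrites the variable by deleting the selected element and inserting a transformed one. For the enabling conjuncts I would argue that inserting the extra elements cannot falsify them, since each is upward closed in the set: a membership test stays true and an existential witness remains available, as one simply reuses the witness chosen in the reduced step. The only residual worry is that a newly inserted element becomes a \emph{spurious} witness, but here the disjointness guaranteed by Definition \ref{definition-m_c} is decisive: every filtered-out preimage, payment, or message belongs to another channel and therefore fails the channel-$c$ selection criteria --- matching the hash of an HTLC recorded in $c,u$Vars, carrying channel $c$ as its next hop, or naming a channel-$c$ participant as sender or recipient --- so no action for channel $c$ can ever select one of them.

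For the frame conjuncts I would show that the update commutes with the insertion of the extra elements. Each action modifies only the element it has selected, which is channel-$c$-relevant by the disjointness argument, and leaves every other element untouched; hence deleting the inserted elements from $\tilde{t}$ recovers exactly $t$, and the set-difference-and-union form of the update is preserved verbatim once the untouched extra elements are carried along unchanged. I would additionally confirm that no action imposes a \emph{universal} condition over the whole set (a conjunct of the form ``for all elements of the set \ldots'') or pins down its exact cardinality or contents, since such a conjunct would not be upward closed; an inspection of the two modules should confirm that every occurrence is of the positive or frame form just described.

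The hard part will be this last inspection combined with the commutation of the rewrite of $u$NewPayments: the forwarding logic replaces $u$NewPayments by a set comprehension, and I must verify that this comprehension is \emph{local}, that is, transforms precisely the selected payment while copying every other payment verbatim, rather than recomputing the set from data that the extra payments could perturb. Establishing this locality for each rewrite of the four variables --- and thereby ruling out any hidden dependence on the filtered-out values --- is the delicate step; once it is in place, the monotonicity of every action follows conjunct by conjunct and the lemma is immediate.
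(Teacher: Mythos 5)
Your proposal is correct and takes essentially the same route as the paper: the paper's proof also splits into (1) enabling conditions on the first state, handled by observing that every filtered-out preimage, payment, or message fails the channel-$c$ selection criteria and so cannot be selected or interfere, and (2) conditions on the second state, handled by noting that every update is expressed as a local add/remove on the selected element so the extra elements are carried along unchanged. The only calibration to note is that the actual enabling conjuncts are not all upward closed as you anticipate (e.g.\ GenerateAndSendPaymentHash requires the fresh preimage \emph{not} to be in $u$PreimageInventory, and RequestInvoice and AddAndSendOutgoingHTLC impose minimality/emptiness conditions over filtered subsets of $u$NewPayments); your planned inspection plus the disjointness argument resolves these exactly as the paper does, since those subsets are preserved verbatim by $m_c$ and preimages are unique per payment.
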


\begin{proof}
\pf\ 
\step{1}{All conditions on the first state of a step of IdealChannel or HTLCUser are valid even if the variables that are reduced by $m_c$ contain additional values that are filtered out by $m_c$.}
  \begin{proof}
  \pf\ It can be checked for every action of IdealChannel and HTLCUser and every variable that is reduced by $m_c$ that this property holds.
  For this proof, we discuss all relevant cases. The remaining cases are either trivial or analogous to the discussed cases.

  The action RequestInvoice of HTLCUser requests an invoice only if there are no payments with lower timelock in $u$NewPayments that are initiated by the user $u$.
  Because $m_c$ retains all payments that are initiated by the user $u$ in $u$NewPayments, the additional payments that are filtered out by $m_c$ are not relevant for this condition.

  The action GenerateAndSendPaymentHash of HTLCUser adds a preimage for the requested payment to $u$PreimageInventory and a payment secret to $u$PaymentSecretForPreimage. The action contains a condition that the generated preimage may not already be in $u$PreimageInventory.
  Because the value chosen for the preimage in the formalization depends on and is unique for the payment that the preimage is for, if the preimage has already been generated then it must have been generated by this action and must have been added to the domain of $u$PaymentSecretForPreimage and, therefore, it is selected by $m_c$.

  The action AddAndSendOutgoingHTLC selects a payment that has the lowest timelock of all payments for which the next hop is this channel.
  Because $m_c$ retains all payments in $u$NewPayments for which the next hop is this channel, additional payments that are not filtered out by $m_c$ do not affect this condition.

  In IdealChannel, some actions have checks whether the preimage of an HTLC is in $u$PreimageInventory.
  Because these checks are only for preimages for HTLCs of the users and $m_c$ retains all preimages for which an HTLC exists, these checks return the same result if $u$PreimageInventory contains more values.
  \end{proof}
\step{2}{All conditions on the second state of a step of IdealChannel or HTLCUser are valid even if the variables that are reduced by $m_c$ contain additional values that are filtered out by $m_c$.}
  \begin{proof}
  \pf\ 
  The values of the concerned variables in the second state of a step are always described by the actions of IdealChannel and HTLCUser by describing a change and not directly the new value.
  More concretely, the actions define that a value is added or removed to the set and all remaining values of the set remain unchanged.
  Thus, a step of IdealChannel or HTLCUser is still valid if the concerned variables contain additional values.
  \end{proof}
\qedstep
  \begin{proof}
  \pf\ 
  Because the conditions on the first and the second state are valid, a step of IdealChannel or HTLCUser is also a valid step if the variables that are reduced by $m_c$ contain additional values that are filtered out by $m_c$.
  \end{proof}
\end{proof}

\begin{theorem}
\label{theorem-ref-II-III}
The function $g$ is a refinement mapping from specification $(II)$ to specification $(III)$.
\end{theorem}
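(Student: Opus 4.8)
The plan is to establish the three defining conditions of a refinement mapping for $g$: that $g$ carries initial states of $(II)$ to initial states of $(III)$, that $g$ carries every step of $(II)$ to a step or stuttering step of $(III)$, and that the external variables agree. Because $g$ leaves the user-specific variables $v_u$ unchanged and defines the global variables $\mathrm{Messages}$ and $\mathrm{LedgerTime}$ directly from the source state, the external-variable agreement is immediate; the substance lies in mapping steps. First I would discharge the well-definedness claims built into the definition of $g$, namely $s.\mathrm{Messages} = \bigcup_{c \in C} f_c(m_c(s)).\mathrm{Messages}$ and $s.\mathrm{LedgerTime} = \max_{c \in C} f_c(m_c(s)).\mathrm{LedgerTime}$, which follow by unfolding \cref{definition-m_c} (the per-channel reduction of $\mathrm{Messages}$ partitions the global set by channel membership, and $f_c$ leaves $\mathrm{Messages}$ and the user variables untouched). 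The initial-state condition then follows exactly as in step (1) of the proof of \cref{lemma-ii-to-iia}: $m_c$ sends $\mathrm{Init}_{(II)}$ into $\mathrm{Init}_{(IIa)}$, the refinement mapping $f_c$ (\cref{lem-ref-iia-iiia}) sends $\mathrm{Init}_{(IIa)}$ into $\mathrm{Init}_{(IIIa)}$, and gluing the channel-wise images with the inherited shared variables reconstructs $\mathrm{Init}_{(III)}$.

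For step preservation I would fix a step $\langle s, t \rangle$ of $(II)$ and reuse the four-way case split from \cref{lemma-ii-to-iia}: a PaymentChannelUser or HTLCUser step for some channel $c$, a LedgerTime step, or a final-withdraw step. In the active-channel case, the argument of \cref{lemma-ii-to-iia} gives that $\langle m_c(s), m_c(t) \rangle$ is a step of $(IIa)$; \cref{lem-ref-iia-iiia} (verified by model checking) then yields that $\langle f_c(m_c(s)), f_c(m_c(t)) \rangle$ is a step of $(IIIa)$, and \cref{lem-mapping-module-steps-in-fc} identifies it as an IdealChannel step when the source action is in PaymentChannelUser and as an HTLCUser step when it is in HTLCUser. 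For each other channel $c' \neq c$, \cref{lemma-multi-hop-mock-refinement} shows that $\langle m_{c'}(s), m_{c'}(t) \rangle$ is a stuttering step, a MultiHopMock step, or a channel-agnostic HTLCUser step for the same user, so under $f_{c'}$ the variables $v_{c'}$ and $v_{c',u}$ of every inactive channel are left unchanged. Assembling the channel-wise images with the inherited shared variables produces the candidate step $\langle g(s), g(t) \rangle$.

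The main obstacle is to prove that this candidate is a single valid action of $(III)$ rather than merely a per-channel-consistent collage. The difficulty is that $(III)$ has no MultiHopMock module: the environment effects that MultiHopMock abstracts in $(IIIa)$ are realized in $(III)$ as genuine side effects of the one active channel's action on the shared user and global variables. I must therefore check, action by action and mirroring the case analysis of \cref{lemma-multi-hop-mock-refinement}, that the shared-variable changes prescribed by channel $c$'s IdealChannel or HTLCUser action in $(III)$ are exactly the changes recorded as environment (MultiHopMock) steps in the inactive channels' $(IIIa)$ views, and that those inactive channels' own variables are untouched. \Cref{lemma-iiia-additional-values} is what makes this go through: the enabling conditions of IdealChannel and HTLCUser are insensitive to the extra preimages, payments, and messages that $m_{c'}$ filters out, so the single action of channel $c$ simultaneously validates channel $c$'s transition and produces precisely the shared side effects seen by the other channels. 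The LedgerTime and final-withdraw cases are then routine — time advances as $\max_{c} f_c(m_c(\cdot)).\mathrm{LedgerTime} = s.\mathrm{LedgerTime}$, and the honest and dishonest external-balance bounds transfer as in step (2-4) of \cref{lemma-ii-to-iia}. Finally I would note that the fairness conditions of $(III)$ are preserved because $\mathit{Liveness}$ is identical across the layers and the withdraw action stays enabled under $g$ whenever it is enabled in $(II)$.
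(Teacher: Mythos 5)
Your proposal is correct and follows essentially the same route as the paper's proof: the same three-part decomposition (external variables, initial states, step preservation), the same case split on the kind of step, and the same chain of lemmas ($m_c$ into specification $(IIa)$, the model-checked refinement $f_c$ into $(IIIa)$, the MultiHopMock lemma for inactive channels, and the additional-values lemma to assemble the per-channel images into a single action of $(III)$). You are somewhat more explicit than the paper in discharging the well-definedness equalities built into the definition of $g$ and in treating the final-withdraw step as its own case in the theorem's case split, but these are refinements of the same argument rather than a different approach.
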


\begin{proof}
\step{1}{For all $s \in \Sigma_{(II)} : \Pi(g(s)) = \Pi(s)$}
    \begin{proof}
    \pf\ The externally visible variables are for all users $u \in U$: $u$Payments, $u$ExtBalance, $u$ChannelBalance, and $u$Honest.
    These variables are by definition unchanged by $g$.
    \end{proof}
\step{2}{For all initial states $s$ of specification $(II)$, $g(s)$ is an initial state of specification $(III)$}
    \begin{proof}
    \pf\  All variables in state $s$ are mapped by $g$ either by directly assigning the value of the variable in state $s$ or by applying functions $f_c \in F$.
    By definition of $f_c$, $f_c$ does not change values in the initial states.
    By definition of Init of specifications $(II)$ and $(III)$, it follows that all variables of a state of specification $(II)$ are mapped to values of an initial state of specification $(III)$.
    \end{proof}
\step{3}{For all steps $\sstep{s}{t}$ of specification $(II)$, $\sstep{g(s)}{g(t)}$ is a step of specification $(III)$.}
    \begin{proof}
    \pf\ 

    A step of specification $(II)$ can be one of the following.
    
    \step{3-1}{\case{$\sstep{s}{t}$ is a step of an action of the module LedgerTime in specification $(II)$.}
                \prove{$\sstep{g(s)}{g(t)}$ is a step of the module LedgerTime in specification $(III)$.}}
        \begin{proof}
        The only variable that is updated in step is the variable LedgerTime.
        Because the way that the refinement mappings $f_c$ update variables does not depend on the value of LedgerTime, all variables other than LedgerTime are unchanged in step $\sstep{g(s)}{g(t)}$.
        It follows that $\sstep{g(s)}{g(t)}$ is a step of the module LedgerTime in specification $(III)$ because all variables except LedgerTime are unchanged and $g(s).\mathrm{LedgerTime} = g(t).\mathrm{LedgerTime}$ and specification $(III)$ allows LedgerTime to increase by at least the values that are allowed in specification $(II)$ because specification $(III)$ is a regular (unoptimized) real-time specification.
        \end{proof}
    \step{3-2}{\case{$\sstep{s}{t}$ is a step of an action of the module HTLCUser or PaymentChannelUser in specification $(II)$.}
                \prove{$\sstep{g(s)}{g(t)}$ is a step of HTLCUser or IdealChannel in specification $(III)$.}}
        \begin{proof}
        Let $c \in C$ be the channel and $u$ the user for which the step $\sstep{s}{t}$ is a step of an action of HTLCUser or PaymentChannelUser.
        By the proof of \cref{lemma-ii-to-iia}, the step $\sstep{m_c(s)}{m_c(t)}$ is a step of an action of HTLCUser or PaymentChannelUser for user $u$ in the instance of specification $(IIa)$ that is described by the mapping $m_c$.
        By \cref{lemma-multi-hop-mock-refinement}, the step $\sstep{m_{c'}(s)}{m_{c'}(t)}$ is a step of HTLCUser, MultiHopMock or a stuttering step in the instance of specification $(IIa)$ for every channel $c' \neq c$ that is described by the mapping $m_{c'}$.
        By \cref{lem-refinement-mapping}, the step $\sstep{f_c(m_c(s))}{f_c(m_c(t))}$ is a step of HTLCUser or IdealChannel in specification $(IIIa)$ and the step $\sstep{f_{c'}(m_{c'}(s))}{f_{c'}(m_{c'}(t))}$ is a step of HTLUser, MultiHopMock or a stuttering step in specification $(IIIa)$.
        The channel-specific variables of all channels $c' \neq c$ and the user-specific variables of all users $u' \neq u$ are unchanged in the step $\sstep{s}{t}$ and, by definition of $g$, these variables are also unchanged in step $\sstep{g(s)}{g(t)}$.
        Therefore, only the global variables, user-specific variables for user $u$ and channel-specific variables for channel $c$ might be changed in step $\sstep{g(s)}{g(t)}$.
        Because the step $\sstep{f_c(m_c(s))}{f_c(m_c(t))}$ is a step of HTLCUser or IdealChannel in specification $(IIIa)$, by \cref{lemma-iiia-additional-values} and the definition of $g$, the step $\sstep{g(s)}{g(t)}$ is a step of HTLCUser or IdealChannel in specification $(III)$.
        \end{proof}
    \end{proof}
\qedstep
    \begin{proof}
    \pf\ By \stepref{1}, \stepref{2}, \stepref{3}, and the definition of refinement mappings and because specification $(III)$ contains the same fairness condition as specification $(II)$ that only behaviors are valid that end in a state in which all users have withdrawn their balance.
    \end{proof}
\end{proof}

\subsection{Proof of Application of Generalized Time Skip Theorem to Specification $(III)$}
\label{sec-appendix-time-skip-III-IV}

Analogously to \cref{sec-appendix-time-skip-application-I-II}, to prove that \cref{theorem-general-time-skip} can be applied to specification $(III)$, we need to prove that, for each clock $x \in \mathcal{X}$, the set $relETP^x$ defined in the specification meets the requirements of \cref{assumptions-relETP} and that the time bounds used in the specification meet the requirements of \cref{assumption-b-independent}. 
In contrast to specification $(I)$, specification $(III)$ contains only one clock which is LedgerTime.

We start by proving that the time bounds specified in SpecificationIII.tla meet \cref{assumption-b-independent}.
\begin{proof}
\pf\ $B^{LedgerTime}$ is defined by $TimeBounds$ of $IdealChannel$.
As the variable $LedgerTime$ does not occur in the definition of $TimeBounds$, the definition of $TimeBounds$ is independent of $LedgerTime$ and, thus, \cref{assumption-b-independent} holds for $B^{LedgerTime}$.
\end{proof}

We prove that $relETP^x$ meets \cref{assumptions-relETP}.
The set $relETP^{LedgerTime}$ is defined as $relETP$ in SpecificationIV.

\begin{proof}
\pf\ 
First, we prove the second statement:
\step{1}{$\A x \in \mathcal{X}, s \in \Sigma : \{ b +1 \mid b \in B^x(s) \} \subseteq relETP^x(s)$}
  \begin{proof}
  \pf\
  The statement directly follows from the definition of $relETP$ in SpecificationIIa that defines $relETP$ as union of $\{t + 1 : t \in TimeBounds\}$ and another set.
  \end{proof}
\step{2}{$\A x \in \mathcal{X}, s \in \Sigma : ETP^x(s) \subseteq relETP^x(s)$}
  \begin{proof}
  \pf\ We have to show that for every $n \in ETP^x(s)$, it holds that $n \in relETP^x(s)$.
  This means that if there is a newly possible behavior at time $n$, then time $n$ must be in $relETP^x(s)$.

  We use the same proof strategy as \cref{sec-appendix-time-skip-application-I-II}.
  We go through all subactions $A$ of $NextI$, we find the conditions under which an $A$ step exists that is enabled at time $n$ but not at time $n-1$, and verify that in all states $s$ from which a state that meets these conditions can be reached by $NextI$ steps it holds that $n \in relETP^x(s)$.
  
  Define a mapping $a$ from $ETP^x(s)$ to the subactions of the specification, that assigns to each $n \in ETP^x(s)$ a subaction $A$ so that $\astep{s}{t}{A}$ but not $\astep{\hat{T}^x_{n-1}(s)}{\hat{T}^x_{n-1}(t)}{A}$, i.e., a step of action $a(n)$ is not possible if clock $x$ is set to $n-1$ and becomes possible at time $n$.

  The module HTLCUser contains five actions that depend on LedgerTime that have already been discussed in \cref{sec-appendix-time-skip-application-I-II}. We do not repeat the proofs for these actions here.

  The module IdealChannel contains seven actions that depend on LedgerTime:
  \step{2-1}{$\A s \in \Sigma : \A n \in ETP^{LedgerTime}(s) : a(n) = \mathrm{UpdatePaymentChannel} \Rightarrow n \in relETP^{LedgerTime}(s)$}
    \begin{proof}
    \pf\ The action UpdatePaymentChannel depends on the value of LedgerTime to choose the HTLCs to add and to remove.
    For this, the condition that is used is the condition that an HTLCs timelock is greater than LedgerTime.
    Thus, the set $relETP^{LedgerTime}(s)$ must contain an HTLC's timelock because at the timelock of an HTLC a new step becomes possible that does not add a specific HTLC or that removes an HTLC that has timedout.
    This is fulfilled by the definition of TimelockRegions of the module IdealChannel that is included by $relETP^{LedgerTime}(s)$.

    The action UpdatePaymentChannel also contains a condition that under certain conditions the value of LedgerTime is smaller than an HTLC's absolute timelock plus the grade period $G$. For larger values of LedgerTime steps become impossible but this condition does not allow for new steps to become possible.
    \end{proof}
  \step{2-2}{$\A s \in \Sigma : \A n \in ETP^{LedgerTime}(s) : a(n) = \mathrm{SetOnChainHTLCsAndCheater} \Rightarrow n \in relETP^{LedgerTime}(s)$}
    \begin{proof}
    \pf\ As UpdatePaymentChannel, the action SetOnChainHTLCsAndCheater contains a condition that under certain conditions the value of LedgerTime is smaller than an HTLC's absolute timelock plus the grade period $G$. For larger values of LedgerTime steps become impossible but this condition does not allow for new steps to become possible.

    Another condition requires a user to have at least the balance of incoming HTLCs that cannot be persisted because the user was dishonest and the value of LedgerTime is larger than or equal to the HTLC's timelock + G.
    This condition might lead to steps becoming impossible but there are no steps of SetOnChainHTLCsAndCheater that can become possible.
    \end{proof}
  \step{2-3}{$\A s \in \Sigma : \A n \in ETP^{LedgerTime}(s) : a(n) = \mathrm{FulfillIncomingHTLCsOnChain} \Rightarrow n \in relETP^{LedgerTime}(s)$}
    \begin{proof}
    \pf\ The action FulfillIncomingHTLCsOnChain can fulfill an HTLC on-chain as long as the HTLC's timelock + G is greater than LedgerTime.
    Because the action chooses a subset of fulfillable HTLCs as the HTLCs to fulfill, if the value of LedgerTime reaches an HTLC's timelock + G no new step becomes possible but only all steps that include fulfilling this HTLC become impossible.

    Additionally, the action has the same condition for the balance of users as SetOnChainHTLCsAndCheater that also does not lead to steps becoming possible.
    \end{proof}
  \step{2-4}{$\A s \in \Sigma : \A n \in ETP^{LedgerTime}(s) : a(n) = \mathrm{NoteFulfilledHTLCsOnChain} \Rightarrow n \in relETP^{LedgerTime}(s)$}
    \begin{proof}
    \pf\ The action NoteFulfilledHTLCsOnChain can note a fulfilled HTLC on-chain as long as the HTLC's timelock + G is greater than LedgerTime.
    Because the action chooses a subset of fulfillable HTLCs as the HTLCs to fulfill, if the value of LedgerTime reaches an HTLC's timelock + G no new step becomes possible but only all steps that include fulfilling this HTLC become impossible.

    If the preimage for an HTLC is learned when the value of LedgerTime is greater than the HTLC's timelock + G, the preimage is added to the set $u$LatePreimages.
    Thus, for each HTLC that can be fulfilled, the set $relETP^{LedgerTime}(s)$ must include the HTLC's timelock + G + 1.
    This is fulfilled by the definition of TimelockRegions of the module IdealChannel that is included by $relETP^{LedgerTime}(s)$.
    \end{proof}
  \step{2-5}{$\A s \in \Sigma : \A n \in ETP^{LedgerTime}(s) : a(n) = \mathrm{CommitHTLCsOnChain} \Rightarrow n \in relETP^{LedgerTime}(s)$}
    \begin{proof}
    \pf\ The action CommitHTLCsOnChain can persist an HTLC on-chain as long as the HTLC's timelock + G is greater than LedgerTime.
    Because the action chooses a subset of persistable HTLCs as the HTLCs to persist, if the value of LedgerTime reaches an HTLC's timelock + G no new step becomes possible but only all steps that include persisting this HTLC become impossible.

    The action CommitHTLCsOnChain also contains a condition that under certain conditions the value of LedgerTime is smaller than an HTLC's absolute timelock plus the grade period $G$. For larger values of LedgerTime steps become impossible but this condition does not allow for new steps to become possible.

    If the preimage for an HTLC is persisted when the value of LedgerTime is greater than the HTLC's timelock + G, the preimage is added to the set $u$LatePreimages of the user who learns the preimage.
    Thus, for each HTLC that can be persisted, the set $relETP^{LedgerTime}(s)$ must include the HTLC's timelock + G + 1.
    This is fulfilled by the definition of TimelockRegions of the module IdealChannel that is included by $relETP^{LedgerTime}(s)$.

    Another condition requires a user to have at least the balance of incoming HTLCs that cannot be persisted because the user was dishonest and the value of LedgerTime is larger than or equal to the HTLC's timelock + G.
    This condition might lead to steps becoming impossible but there are no steps of CommitHTLCsOnChain that can become possible.
    \end{proof}
  \step{2-6}{$\A s \in \Sigma : \A n \in ETP^{LedgerTime}(s) : a(n) = \mathrm{FulfillHTLCsOnChain} \Rightarrow n \in relETP^{LedgerTime}(s)$}
    \begin{proof}
    \pf\ The action FulfillHTLCsOnChain can fulfill an HTLC on-chain as long as the HTLC's timelock + G is greater than LedgerTime.
    Because the action chooses a subset of fulfillable HTLCs as the HTLCs to fulfill, if the value of LedgerTime reaches an HTLC's timelock + G no new step becomes possible but only all steps that include fulfilling this HTLC become impossible.

    If the preimage for an HTLC is learned when the value of LedgerTime is greater than the HTLC's timelock + G, the preimage is added to the set $u$LatePreimages.
    Thus, for each HTLC that can be fulfilled, the set $relETP^{LedgerTime}(s)$ must include the HTLC's timelock + G + 1.
    This is fulfilled by the definition of TimelockRegions of the module IdealChannel that is included by $relETP^{LedgerTime}(s)$.
    \end{proof}
  \step{2-7}{$\A s \in \Sigma : \A n \in ETP^{LedgerTime}(s) : a(n) = \mathrm{ClosePaymentChannel} \Rightarrow n \in relETP^{LedgerTime}(s)$}
    \begin{proof}
    \pf\ The action ClosePaymentChannel contains a condition in the function ValidMapping that verifies that an HTLC can only be timed out if the value of LedgerTime is at least the HTLCs timelock. Thus, steps in which the HTLC is timed out are only valid from a state on in which the value of LedgerTime is at least the HTLC's timelock.
    Therefore, the $relETP^{LedgerTime}(s)$ must include the HTLC's timelock.
    This is fulfilled by the definition of TimelockRegions of the module IdealChannel that is included by $relETP^{LedgerTime}(s)$.
    The function ValidMapping contains a check that an HTLC can only be persisted as long as the value of LedgerTime is lower than the HTLC's timelock + G. An increasing value of LedgerTime does not enable new steps to become possible.

    If the preimage for an HTLC is learned when the value of LedgerTime is greater than the HTLC's timelock + G, the preimage is added to the set $u$LatePreimages.
    Thus, for each HTLC that can be fulfilled, the set $relETP^{LedgerTime}(s)$ must include the HTLC's timelock + G + 1.
    This is fulfilled by the definition of TimelockRegions of the module IdealChannel that is included by $relETP^{LedgerTime}(s)$.
    \end{proof}
  \qedstep
    \begin{proof}
    By \stepref{2-1}, \stepref{2-2}, \stepref{2-3}, \stepref{2-4}, \stepref{2-5}, \stepref{2-6}, and \stepref{2-7} we have proven for all subactions of $NextI$ that $\A x \in \mathcal{X}, s \in \Sigma : ETP^x(s) \subseteq relETP^x(s)$.
    \end{proof}
  \end{proof}
\qedstep
  \begin{proof}
  By \stepref{1} and \stepref{2} both statements of \cref{assumption-b-independent} are proven.
  \end{proof}
\end{proof}

\end{document}